\newtheorem{theorem}{Theorem}[section]
\newtheorem{lemma}[theorem]{Lemma}
\newtheorem{corollary}[theorem]{Corollary}
\newtheorem{observation}[theorem]{Observation}
\tikzset{
    .../.tip={
        [sep=0pt 2]
        Butt Cap[] .
        Rectangle[length=0pt 2, width=0pt 1]
        Rectangle[length=0pt 2, width=0pt 1]
        Rectangle[length=0pt 2, width=0pt 1, sep=0pt]
    }
}
\newcommand{\penp}{$\text{P}=\text{NP}$}
\newcommand{\pnenp}{$\text{P}\neq\text{NP}$}
\newcommand{\eqcomma}{\;,}
\newcommand{\eqdot}{\;.}
\title{Maximin Shares in Hereditary Set Systems}
\author{
    Halvard Hummel
    \vspace{0.5em}
    \\
    \normalsize
    Norwegian University of Science and Technology
    \vspace{-0.3em}
    \\
    \small
    \texttt{halvard.hummel@ntnu.no}
}
\date{}
\begin{document}

\maketitle

\begin{abstract}
    We consider the problem of fairly allocating a set of indivisible items
    under the criteria of the maximin share guarantee. Specifically, we study
    approximation of maximin share allocations under hereditary set system
    valuations, in which each valuation function is based on the independent
    sets of an underlying hereditary set systems.  Using a lone divider
    approach, we show the existence of $1/2$-approximate MMS allocations,
    improving on the $11/30$ guarantee of \citeauthor{Li:21}.  Moreover, we
    prove that ($2/3 + \epsilon$)-approximate MMS allocations do not always
    exist in this model for every $\epsilon > 0$, an improvement from the recent
    $3/4 + \epsilon$ result of \citeauthor{Li:23a}. Our existence proof is
    constructive, but does not directly yield a polynomial-time approximation
    algorithm. However, we show that a $2/5$-approximate MMS allocation can be
    found in polynomial time, given valuation oracles.  Finally, we show that
    our existence and approximation results transfer to a variety of problems
    within constrained fair allocation, improving on existing results in some of
    these settings.
\end{abstract}

\section{Introduction}

The problem of fairly dividing a shared resource among a set of agents appears
frequently in real-world settings; Heritances need to be divided among heirs,
spots in university courses among students and office space among research
groups. This problem has a long history in economics, originating in the seminal
work of \citet{Steinhaus:48}. There, the shared resource is generally assumed to
be divisible, and the golden standard of fairness has long been
\emph{envy-freeness} and \emph{proportionality}. A division of a resource is
envy-free if no agent prefers the piece of another agent to her own. While
envy-free divisions exist, they have been notoriously hard to compute
efficiently \cite{Brandt:16}. Proportionality, which is easier to satisfy,
requires that each agent receives a piece worth, in her opinion, at least her
fair share of the total value, i.e., at least $1/n$ of the total value, where
$n$ is the number of agents.

In recent years, a variety of the fair division problem, in which the shared
resource is a set of indivisible items, has been extensively studied
\cite{Amanatidis:23}. For indivisible items, neither envy-freeness nor
proportionality is achievable in general; Consider, for example, allocating a
single item to two agents. No matter which agent receives the item, the other
will both be envious and receive less than her fair share of the items. Instead,
relaxed versions of these fairness criteria are considered. One of these---a
relaxation of proportionality---is the \emph{maximin share (MMS) guarantee},
introduced by \citet{Budish:11}. Informally, the MMS of an agent is the maximum
value she can guarantee herself if she had to partition the items, and got to
choose her own bundle last. An allocation satisfies the MMS guarantee if each
agent receives a bundle worth at least her own MMS. Maximin shares can be seen
as an extension of the famous \emph{cut-and-choose} protocol, in which one agent
splits the shared resource into two equal pieces and the other agent gets to
choose her favourite piece among the two and the agent that cut receives the
remaining piece \cite{Brandt:16}.

The existence and computation of MMS allocations has been considered in a range
of papers when valuations are additive \cite[see,
e.g.,][]{Aigner-Horev:22,Akrami:23,Akrami:24,Amanatidis:17a,Bouveret:16,Feige:22b,Garg:18a,Garg:20c,Ghodsi:21,Procaccia:14}.
Perhaps surprisingly, an MMS allocation is not guaranteed to exist with additive
valuations \cite{Procaccia:14,Bouveret:16}. In fact, it is sometimes not
possible to provide every agent with more than $39/40$ of her MMS
\cite{Feige:22b}. Although, there always exists allocations that provide each
agent with slightly more than $3/4$ of her MMS \cite{Akrami:24}, and there
exists polynomial-time algorithms that find allocations that provide each agent
with at least $3/4$ of her MMS \cite{Garg:20c}.

Additive valuations have so far been the main focus in the literature on fair
allocation of indivisible items. While additive valuations have a range of
useful properties, making it easier to find allocations that satisfy or to a
high-degree approximate various fairness criteria, they fall short in
expressiveness. Additive valuations can not express interactions between items,
such as expressing that items compliment or substitute each other. Thus, a
variety of recent papers have focused on more general classes of valuations,
considering both MMS and other fairness criteria \cite[see,
e.g.,][]{Ghodsi:22,Seddighin:24,Akrami:23e,Barman:20a,Uziahu:23,Kulkarni:23a,Chekuri:24}.

In parallel, a variety of fair allocation of indivisible items, known as
\emph{constrained fair allocation}, has garnered interest (see, e.g., the recent
surveys of \citet{Suksompong:21} and \citet{Biswas:23a}). In this version of the
problem, instances are furnished by a set of constraints on which combinations
of items can be given to the same agent, modelling restrictions found in a range
of real-world scenarios. For example, each agent may be endowed with a budget
and each item with a cost, and may only receive a collection of items with a
combined cost not exceeding her budget \cite{Barman:23e,Li:23a}. Alternatively,
the items may be partitioned into categories, and each agent may not receive
more items from a category than some given threshold
\cite{Biswas:18,Hummel:22a}.  Common for many constraints is that they can be
modelled through the valuations of the agents, by letting the value of a
collection items be the value of the maximum-valued subset that satisfies the
constraint.

In this paper, we consider approximation of MMS for a class of valuations that
models a number of different constraint types. Specifically, many of the
studied constraints are \textit{hereditary}, i.e., if a collection of items
satisfies the constraint, then every subset also satisfies the constraint.  The
class of valuations considered here, known as \emph{hereditary set system
valuations} and introduced by \citet{Li:21}, is based on the same principle. An
instance with hereditary set system valuations is given by a \emph{set system},
$H = (M, \mathcal{F})$, over the set of items $M$, that satisfies the
\emph{hereditary property}:
\[
    S \subseteq T \subseteq M \text{ and } T \in \mathcal{F} \implies S \in
    \mathcal{F}
\]
Each agent assigns to each item a value and the value of a collection of items,
$B$, is given by the sum of item values in the maximal-value subset of $B$ that
appears in $\mathcal{F}$. \citeauthor{Li:21} studied approximation of MMS under
this type of valuations, showing that there always exists an allocation
guaranteeing each agent at least $11/30$ of her MMS. Moreover, they showed that
such an allocation can be found in polynomial time if the valuations can be
queried in polynomial time.

\subsection{Contributions}

The main contribution of this paper is improvements to the lower and upper
bounds for existence and polynomial-time approximation of $\alpha$-approximate
MMS allocations under hereditary set system valuations. \Cref{tab:results}
provides an overview of the existence and approximation results. Specifically,
we show in \cref{sec:existence} that a $1/2$-approximate MMS allocation always
exists, improving on the $11/30$ existence guarantee of \citet{Li:21}. In fact,
our proof yields the slightly stronger existence guarantee of $n/(2n - 1)$,
where $n$ is the number of agents.  Moreover, we show that there are for any
number of agents, $n \ge 2$, instances for which no ($2/3 +
\epsilon$)-approximate MMS allocation exists for any $\epsilon > 0$, improving
on the $3/4 + \epsilon$ result of \citet{Li:23a}. Note that this fully decides
the case of $n = 2$, as $n/(2n - 1) = 2/3$ for $n = 2$.

The existence proof is constructive, making use of a lone divider style
algorithm \cite{Aigner-Horev:22}. Unfortunately, the proof requires the
computation of MMS partitions, which are NP-hard to approximate beyond a factor
of $2/3$ for hereditary set system valuations (see \cref{thr:2/3-np-hard} and
\cite{Li:23a}). In \cref{sec:approximation} we address this issue, showing that
a variation of the algorithm can produce $2/5$-approximate MMS allocations in
polynomial time, given a polynomial-time valuation oracle. In fact, we show that
the result holds even when supplied with an oracle that given a bundle $B$,
returns a subset $B' \subseteq B$, with $B' \in \mathcal{F}$, such that the
value of $B'$ is at least a $(1 - 1/(n + 1))$ fraction of the value of $B$. This
$2/5$-approximation algorithm improves on the $11/30$-approximation algorithm of
\citeauthor{Li:21}.

\begin{table}
    \centering
    \begin{tabular}{rll}
        \toprule
        & \textbf{Lower Bound} & \textbf{Upper Bound} \\
        \midrule
        \textbf{Existence} & $1/2$ (\Cref{thr:existence})& $2/3$
        (\Cref{thr:upper-bound}) \\
        \textbf{Approximation} & $2/5$ (\Cref{thr:2/5-poly}) & $2/3$
        (\citet{Li:23a}) \\
        \bottomrule
    \end{tabular}
    \caption{Existence and polynomial-time approximation guarantees for
    hereditary set system valuations. The polynomial-time approximation results
    assume polynomial-time valuation oracles.}
    \label{tab:results}
\end{table}

In \cref{sec:entitlement}, we consider natural extensions to hereditary set
system valuations, relaxing the requirement that the valuations of the agents
have to be based on the same hereditary set system. We show that the existence
and approximation results extend to the case where each agent has a valuation
function based on a different, but similar, hereditary set system.
Specifically, we require there to be an ordering of the agents, $a_1, \dots,
a_n$, such that for $i < j$, the hereditary set system for agent $a_j$'s
valuation function contains all independent set of the hereditary set system for
agent $a_i$'s valuation function. This setting naturally occurs in real-world
settings, such as in the earlier example with budgets. If the hereditary set
systems do not adhere to this requirement, we show that for any number of
agents, $n \ge 2$, there exists instances for which $1/2$ is the best possible
approximation.

Finally, in \cref{sec:constraints} we consider the impacts of our results on
constrained fair allocation. We show that for several types of constraints, our
results improve current existence or approximation results. In particular, we
show that for budget constraints our results improve the existence bound from
$1/3$ to $1/2$ and produce a $2/5$-approximation algorithm, improving on the
existing ($1/3 - \epsilon$)-approximation algorithm \cite{Li:23a}. For
conflicting items constraints \cite{Hummel:22}, we also improve the existence
bound from $1/3$ to $1/2$. Finally, for interval scheduling constraints, we
improve the existence bound from $1/3$ to $1/2$ and produce a
$2/7$-approximation algorithm, improving on the existing $(0.24 -
\epsilon)$-approximation algorithm \cite{Li:21f}.

\subsection{Additional Related Work}

A range of papers have considered existence and computation of MMS for more
general classes of valuation functions. Of particular interest for hereditary
set system valuations are the results for fractional subadditive (XOS)
valuations, as hereditary set system valuations are a subclass of XOS
valuations. \citet{Ghodsi:22} showed that $1/5$-approximate MMS allocations
always exist for XOS valuations, and presented a $1/8$-approximation algorithm.
They also showed that there exists instances with XOS valuations for which no
allocation is more than $1/2$-approximate MMS. Their existence result was later
improved to $0.219225$ by \citet{Seddighin:24} and more recently $3/13$ by
\citet{Akrami:23e}. \citeauthor{Akrami:23e} also provided a non-polynomial,
randomised algorithm that yields a $1/4$ approximation in expectation. MMS has
also been considered for valuations that are submodular
\cite{Barman:20a,Ghodsi:22,Uziahu:23}, subadditive
\cite{Ghodsi:22,Seddighin:24}, OXS \cite{Kulkarni:23a} and SPLC
\cite{Chekuri:24}.

A range of hereditary constraints have been considered. \citet{Biswas:18}
introduced \emph{cardinality constraints}, in which the items are partitioned
into categories, each with an attached limit on the number of items an agent can
receive from the category. They showed, among other results, that a
$1/3$-approximate MMS allocation always exists and can be found in polynomial
time under cardinality constraints. This was later improved to $1/2$ by
\citet{Hummel:22a}. Cardinality constraints are special case of \emph{matroid
constraints}. Under matroid constraints, each bundle must form an independent
set in a given matroid, which for cardinality constraints is a partition
matroid. Except for the work of \citet{Gourves:19}, which considered a seemingly
similar, but different model in which the combined set of items allocated to all
agents must be independent in the matroid, MMS has not specifically been
considered for general matroid constraints. Although, other fairness criteria
have been studied under matroid constraints \cite{Biswas:18,Biswas:19,Dror:23}.

Another type of hereditary constraints is the \emph{budget constraints}
introduced by \citet{Wu:21}. Budget constraints are equivalent to the earlier
described constraints with item costs and agent budgets. Recently,
\citet{Li:23a} considered MMS under budget constraints, showing the existence
and polynomial time findability of $1/3$-approximate MMS allocations ($1/3 -
\epsilon$ in polynomial-time). Moreover, they showed that for some instances no
allocation provides every agent with more than $3/4$ of her MMS, and that MMS
cannot in polynomial time be approximated within a factor greater than $2/3$,
unless \penp.

\citet{Chiarelli:23a} considered fair allocation of \emph{conflicting items}. In
this model, the items are given as vertices in a graph, and an agent can only
receive a set of items that forms an independent set in the graph.
\citeauthor{Chiarelli:23a} considered the complexity of the closely related
problem of finding an allocation that maximises the value the worst-off agent
receives. This problem is equivalent to deciding the MMS of an agent.
\citet{Hummel:22} considered MMS under conflicting items, showing the existence
of $1/3$-approximate MMS allocations, and provided a $1/\Delta$-approximation
algorithm, where $\Delta$ is the maximum degree of the graph. \citet{Li:21f}
considered the similar setting of interval scheduling constraints, in which each
item represents an interval and is endowed with a duration, a release time and a
deadline.\footnote{The model assumes discreet time steps.} The value of a bundle
for an agent is decided by the maximum-weight subset of items that can be
scheduled so that every item is started no earlier than their release time,
completed before their deadline and no pair of items are scheduled at the same
time. \citeauthor{Li:21f} proved the existence of $1/3$-approximate MMS
allocations and the polynomial-time findability of $(0.24 -
\epsilon)$-approximate MMS allocations in this model.

MMS has also been considered for constraints that are not hereditary, such as
the \emph{connectivity constraints} of \citet{Bouveret:17}.

\section{Preliminaries}

An instance of the fair allocation problem is given by a set of \emph{agents},
$N = \{1, \dots, n\}$, a set of \emph{items}, $M = \{1, \dots, m\}$, and
collection $V$ of \emph{valuation functions}, $v_i : 2^M \to \mathbb{R}_{\ge
0}$, one for agent $i \in N$. The goal is to find an $n$-partition of the items,
$A = (A_1, \dots, A_n)$, called an \emph{allocation}, that assigns to agent $i
\in N$ the items in $A_i$. We call $A_i$ and more generally any set $B \subseteq
M$ of items a \emph{bundle}. An $n$-partition of a subset of items $M' \subset
M$ is called a \emph{partial allocation}. An allocation that is not partial will
sometimes be referred to as a \emph{complete allocation}.

More than finding an allocation, the goal of the fair allocation problem is to
find an allocation that satisfies some fairness criteria. In this paper, we
consider allocations that satisfy the \emph{maximin share guarantee}, introduced
by \citet{Budish:11}. Formally, we define the \emph{maximin share (MMS)},
$\mu_i^n$, of an agent, $i$,
as
\[\mu_i^n = \max_{A \in \Pi_n(M)} \min_{A_j \in A} v_i(A_j)\eqcomma\]
where $\Pi_n(M)$ is the set of all $n$-partitions of $M$. If $n$ is obvious from
context, we use simply $\mu_i$.
Any $n$-partition, $P = (P_1, \dots, P_n)$, with $v_i(P_j) \ge \mu_i$ for every
$P_j \in P$ is called an \emph{MMS partition} of $i$. There always exists at
least one MMS partition of an agent $i$, since any partition maximising the
expression defining $\mu_i$ is an MMS partition. Given an $\alpha \le 1$, we say
that an allocation, $A = (A_1, \dots, A_n)$, is an \emph{$\alpha$-approximate
MMS allocation} if $v_i(A_i) \ge \alpha\mu_i$ for every agent $i \in N$. If
$\alpha = 1$, then $A$ is an \emph{MMS allocation}. Note that computing the MMS
of an agent is NP-hard, even in the case of two agents with additive valuations,
due to a reduction from PARTITION.

\subsection{Hereditary Set System Valuations}

We restrict our consideration to problem instances with hereditary set system
valuations. A \emph{hereditary set system}, $H = (J, \mathcal{F})$, is a family
$\mathcal{F}$ of subsets of a set $J$, such that for any pair of sets $S
\subseteq T \subseteq J$ with $T \in \mathcal{F}$, it holds that $S \in
\mathcal{F}$. We say that a subset $T \subseteq J$ is independent in $H$ if $T
\in \mathcal{F}$. An instance has \emph{hereditary set system valuations} if
there exists a hereditary set system $H = (M, \mathcal{F})$, such that $v_i(B) =
\max_{S \in \mathcal{F}} \sum_{j \in S \cap B} v_{ij}$. That is, each agent $i$
assigns to each item $j$ a value $v_{ij}$ and the value of a bundle $B$ is given
by the additive value of the maximum-value subset of $B$ independent in $H$. We
say that such a valuation function, $v_i$, is \emph{based on} $H$. If $v_i$ is
based on $H$, then a bundle $B$ is independent for $v_i$ if $B \in \mathcal{F}$.
A valuation function has \emph{binary marginal gains} if $v_i(B \cup \{j\}) -
v_i(B) \in \{0, 1\}$ for any $B \subseteq M$ and $j \in M \setminus B$. Notice
that a valuation function $v_i$ based on $H$ has binary marginal gains if an
only if $v_{ij} \in \{0, 1\}$ for every $j \in M$. Additionally, note that hereditary
set system valuations are monotone. Thus, if there exists a partial allocation
granting each agent a bundle worth at least her MMS, there exists an MMS
allocation that can be obtained by allocating the remaining items to an
arbitrary agent.

A valuation function, $v$, is \emph{additive} if $v(S) = \sum_{j \in S}
v(\{j\})$ and \emph{fractionally subadditive (XOS)} if there exists a finite
number of additive valuation functions, $f_1, \dots, f_k$, such that $v(S) =
\max_{j} f_j(S)$.  Note that hereditary set system valuations are a
subclass of XOS valuations, as an additive valuation function $f$ can be
constructed for each independent bundle of the hereditary set
system.\footnote{It is sufficient to create a valuation function, $f$, for each
maximal independent bundle.} As was pointed out by \citet{Li:21}, hereditary set
system valuations are not a subclass of the more restrictive \emph{submodular}
valuation functions.

Before continuing, we note a simple, but useful property of hereditary set
system valuations.

\begin{lemma}\label{lem:independent-equal-value}
    Given a bundle $B$, a hereditary set system $H = (M, \mathcal{F})$ and a
    valuation function $v$ based on $H$, there exists an independent bundle $B'
    \subseteq B$, with $v(B') = v(B)$. If $v$ can be queried in polynomial time,
    then $B'$ can be found in polynomial time.
\end{lemma}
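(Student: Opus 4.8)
The plan is to prove the two assertions separately, starting with the existence of $B'$. I would unfold the definition $v(B) = \max_{S \in \mathcal{F}} \sum_{j \in S \cap B} v_j$, where $v_j$ denotes the value $v$ assigns to item $j$, and fix a set $S^\ast \in \mathcal{F}$ attaining this maximum. The candidate is $B' := S^\ast \cap B$: it is a subset of $B$ by construction, and since $B' \subseteq S^\ast \in \mathcal{F}$, the hereditary property gives $B' \in \mathcal{F}$, so $B'$ is independent. For the value equality, monotonicity of $v$ gives $v(B') \le v(B)$ because $B' \subseteq B$, while taking $B'$ itself as the independent subset in the maximum defining $v(B')$ gives $v(B') \ge \sum_{j \in B'} v_j = \sum_{j \in S^\ast \cap B} v_j = v(B)$; hence $v(B') = v(B)$.

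For the second assertion, the difficulty is that a valuation oracle returns only values, not the maximizer $S^\ast$, so $B'$ cannot simply be read off. I would instead compute $B'$ by greedy deletion: start with $C := B$ and process the elements of $B$ one at a time; when processing $j$, query $v(C)$ and $v(C \setminus \{j\})$, and if they coincide replace $C$ by $C \setminus \{j\}$. This makes $O(m)$ oracle calls, and a straightforward induction shows the invariant $v(C) = v(B)$ holds throughout, in particular for the output. It then remains to argue the output $C$ is independent, for which I would prove the stronger statement that deleting any single element of $C$ strictly decreases $v$. Indeed, if $j$ survives into the final $C$, then at the step where $j$ was processed the current set $C_j \supseteq C$ satisfied $v(C_j \setminus \{j\}) < v(C_j) = v(B)$; since $C \setminus \{j\} \subseteq C_j \setminus \{j\}$, monotonicity yields $v(C \setminus \{j\}) \le v(C_j \setminus \{j\}) < v(B) = v(C)$. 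Now if $C$ were not independent, a maximizer of $v(C)$ could be taken inside $C$ by heredity and would be a proper subset of $C$, exhibiting some $j \in C$ whose removal leaves the value unchanged --- a contradiction. Hence $C \in \mathcal{F}$, and we output $B' := C$.

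The only genuinely nontrivial point is this termination analysis: one must notice that, although the ``removing $j$ strictly decreases the value'' property is tested against the larger current set at processing time, it transfers to the final set $C$ via the monotonicity squeeze above, and this is exactly what forces $C$ to be independent. Everything else --- maintaining the invariant, counting oracle calls, and reducing the existence claim to picking $S^\ast \cap B$ --- is routine. I would also remark in passing that the deletion algorithm reproves the existence claim constructively, so the two parts of the lemma can share most of the argument if one prefers a unified presentation.
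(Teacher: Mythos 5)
Your proposal is correct and follows essentially the same route as the paper: the existence part picks $S^\ast \cap B$ for a maximizing $S^\ast \in \mathcal{F}$, and the algorithmic part is the same greedy single-item deletion with the same value-invariant and the same ``otherwise some surviving item could be deleted for free'' contradiction establishing independence. If anything, your monotonicity squeeze (transferring the strict decrease from the larger set at processing time to the final set) makes explicit a step the paper's proof leaves implicit, so no gaps remain.
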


\begin{proof}
    If $B$ is already independent, setting $B' = B$ is sufficient.  Otherwise,
    let $S \in \mathcal{F}$ be the independent set of $H$ that maximises
    $\sum_{j \in S \cap B} v_{ij}$. By definition, $v(S \cap B) = v(B)$.
    Moreover, $S \cap B$ must be an independent set of $H$, and $B' = S \cap B$
    satisfies the requirements.

    If $v$ can be queried in polynomial time, then $B'$ can be obtained by the
    following greedy algorithm, which can trivially be shown to run in
    polynomial time, with $O(|B|) = O(m)$ queries to the valuation function.
    \begin{algorithmic}[1]
        \STATE $B' = B$
        \FOR{$j \in B$}
            \IF{$v(B' \setminus \{j\}) = v(B')$}
                \STATE $B' = B' \setminus \{j\}$
            \ENDIF
        \ENDFOR
    \end{algorithmic}
    The value of $B'$ never decreases, thus it must hold that $v(B') = v(B)$
    when the algorithm completes. Moreover, there exists no good $j \in B'$ such
    that $v(B' \setminus \{j\}) = v(B')$. If $B'$ was not independent, then
    there would exist an $S \in \mathcal{F}$ such that $S \cap B' \neq B'$ and
    $v(B') = \sum_{j \in S \cap B'} v_{ij}$. For any good $j \in (B' \setminus
    S)$, of which there would be at least one, it would hold that $v(B'
    \setminus \{j\}) = v(B')$, a contradiction. Thus, $B'$ must be independent.
\end{proof}

\Cref{lem:independent-equal-value} is of particular importance to our results,
as independent bundles have the useful property that their value is equal to the
sum of the individual items within the bundle. Thus, allocating independent
bundles hinders wastage of items.

The size of the underlying hereditary set system may be exponential in the
number of items. Moreover, it can be shown that the number of distinct
hereditary set systems is very large (see, e.g., \cite{Li:21} for details).
Thus, it is customary to make the assumption of polynomial-time valuations
oracles when considering polynomial-time algorithms. The standard assumption is
an oracle that given a bundle $B$ returns the value of the bundle in polynomial
time. This is the type of oracle used in the $11/30$-approximation algorithm of
\citeauthor{Li:21}. In order to make our $2/5$-approximation algorithm
applicable to a larger range of hereditary set systems, we consider in addition
a weaker type of valuation oracle. Specifically, we have the following two types
of oracles:

\begin{description}
    \item[Exact valuation oracle] --- Returns in polynomial time for any bundle
        $B \subseteq M$ and agent $i \in N$, the value $v_i(B)$.
    \item[Approximate valuation oracle] --- Returns in polynomial time for any
        bundle $B \subseteq M$ and agent $i \in N$, an independent bundle $B'
        \subseteq B$ with $v_i(B') \ge (1 - \epsilon)v_i(B)$ for some specified
        error bound $0 \le \epsilon < 1$.
\end{description}

Note that an approximate valuation oracle, with error bound $\epsilon = 0$, can
be constructed from an exact valuation oracle by
\cref{lem:independent-equal-value}. Thus, any result that holds for approximate
valuation oracles with error bound $\epsilon \ge 0$, also holds for exact
valuation oracles. Moreover, note that if a bundle $B$ is known to be
independent, then the value of $B$ can be computed in polynomial time, even with
approximate valuation oracles. Additionally, the independence of a singleton
bundle $\{j\}$ can checked in polynomial time by an approximate valuation oracle
$v_i^o$, if $v_{ij} > 0$. This follows from the fact that $v_i(v_i^o(\{j\})) >
(1 - \epsilon)v_i(\{j\}) > 0$ if $v_i(\{j\}) > 0$.

An example of a use case in which an approximate valuation
oracle exists, but an exact valuation oracle does not (assuming \pnenp), is the
case in which calculating the value of a bundle is NP-hard, but there exists a
FPTAS that solves this task. This is the case for budget constraints, in which
finding the value of a bundle is equivalent to solving an instance of the
NP-hard knapsack problem.

\subsection{Lone Divider}

The improvements to existence and approximation guarantees make use of the lone
divider algorithm of \citet{Aigner-Horev:22}. Their algorithm, given as
\cref{alg:lone-divider}, takes as input a fair allocation instance and a
threshold, $x_i$, for each agent $i$ in the instance. The algorithm attempts to
produce a, potentially partial, allocation $A$, such that each agent receives a
bundle worth at least her threshold, i.e., $v_i(A_i) \ge x_i$. Note that if $x_i
= \alpha\mu_i$ for some $\alpha > 0$ and every agent $i$, the resulting
allocation, if any, is $\alpha$-approximate MMS.

\begin{algorithm}[t]
    \caption{Lone divider algorithm of \citeauthor{Aigner-Horev:22} (Algorithm 4
    in~\cite{Aigner-Horev:22})}
    \label{alg:lone-divider}
    \begin{algorithmic}[1]
        \REQUIRE A set of agents, $N$, a set of items, $M$, valuation functions,
        $V$, and values $x_i$ for $i \in N$.
        \ENSURE A (partial) allocation $A$ with $v_i(A_i) \ge x_i$
        \WHILE{$N \neq \emptyset$}
        \STATE select some $i \in N$\label{alg:line:divider}
        \STATE create $|N|$ pairwise disjoint bundles, $B_1, \dots, B_{|N|}$,
        with $v_i(B_j) \ge x_i$ for every $B_j$\label{alg:line:bundles}
        \STATE find a maximal-cardinality envy-free matching, $M_{EF}$, with
        regards to $N$, in the bipartite graph $G = (N \cup (B_1, \dots,
        B_{|N|}), \{(i', B_j) \in N \times (B_1, \dots, B_{|N|}) : v_{i'}(B_j)
        \ge x_{i'}\})$\label{alg:line:envy-free}
        \STATE allocate to each matched agent $i$ in $M_{EF}$ their matched
        bundle in $M_{EF}$
        \STATE update $N$ and $M$ by removing matched agents and items from
        matched bundles in $M_{EF}$
        \ENDWHILE
    \end{algorithmic}%
\end{algorithm}

The algorithm attempts to find a satisfactory allocation by iteratively
allocating satisfactory bundles to subsets of agents, such that the value of
each allocated bundle is small for the remaining agents. That is, for any
allocated bundle $B$, it holds that $v_{i'}(B) < x_{i'}$ for any remaining agent
$i'$. In order to achieve this, the algorithm selects in each iteration some
remaining agent $i$, the \emph{lone divider}, which is tasked with constructing
an equivalent number of bundles as there are remaining agents. Each of the
constructed bundles, $B_j$, must have a value of at least $x_i$ to agent $i$,
i.e., $v_i(B_j) \ge x_i$. The algorithm then considers the bipartite graph given
by the remaining agents (including $i$) and the constructed bundles, with edges
between any agent $i'$ and bundle $B_j$ where $v_{i'}(B_j) \ge x_{i'}$. An
envy-free matching with regards to the agents is found in the graph, and matched
agents are allocated their matched bundle. An \emph{envy-free matching with
regards to the agents} is here a matching in which every non-matched agent is
not connected by an edge to any matched bundle. This guarantees that for every
allocated bundle $B$ and non-matched agent $i'$, $v_{i'}(B) < x_{i'}$.

\citeauthor{Aigner-Horev:22} proved a sufficient condition for the existence of
non-empty envy-free matchings, similar to Hall's marriage theorem for matchings
in bipartite graphs.

\begin{lemma}[Corollary 1.1
    in~\cite{Aigner-Horev:22}]\label{lem:envy-free-matching}
    A bipartite graph $G = (X \cup Y, E)$ admits a non-empty envy-free matching
    with regards to $X$ if $|N_G(X)| \ge |X| > 0$, where $N_G(X)$ is the
    neighbourhood of $X$ in $G$.
\end{lemma}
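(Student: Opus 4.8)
The plan is to prove \Cref{lem:envy-free-matching}, a Hall-type sufficient condition: if $|N_G(X)| \ge |X| > 0$ in a bipartite graph $G = (X \cup Y, E)$, then $G$ admits a non-empty envy-free matching with regards to $X$ — that is, a matching $M_{EF}$ such that no unmatched vertex of $X$ has a neighbour among the matched vertices of $Y$.

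First I would set up the right object to apply Hall's theorem to. The naive approach — just take a maximum matching — does not immediately give the envy-free property, since an unmatched $X$-vertex could still be adjacent to a matched $Y$-vertex. The key idea is to find an \emph{inclusion-maximal subset} $S \subseteq X$ that satisfies Hall's condition \emph{internally}: for every $S' \subseteq S$, $|N_G(S')| \ge |S'|$, while being as large as possible. Start with a single vertex: since $|N_G(X)| \ge |X| > 0$, $X$ is non-empty and every singleton $\{x\}$ has $|N_G(\{x\})| \ge $ \dots well, not necessarily $\ge 1$ for an arbitrary vertex, but since $|N_G(X)| \ge |X|$ there must be at least one vertex of $X$ with a neighbour (in fact one can argue a non-trivial $S$ exists; the cleanest route is to let $S$ be a maximal subset of $X$ such that $G$ restricted to $S \cup N_G(S)$ satisfies Hall's condition for $S$, and observe $S \neq \emptyset$ because otherwise $N_G(X) = \bigcup_{x} N_G(\{x\})$ being large forces some tight violation we can exploit). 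I would then invoke Hall's marriage theorem on $S$ to obtain a perfect matching of $S$ into $Y$, which is non-empty since $S \neq \emptyset$.

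Next I would verify that this matching $M_{EF}$ of $S$ is envy-free with regards to $X$. Suppose not: some $x \in X \setminus S$ is adjacent to a matched vertex $y \in N_G(S) \cap M_{EF}(S)$. Then consider $S \cup \{x\}$. The claim is that $S \cup \{x\}$ still satisfies Hall's condition, contradicting maximality of $S$. For any $S' \subseteq S \cup \{x\}$: if $x \notin S'$ this is immediate; if $x \in S'$, write $S' = S'' \cup \{x\}$ with $S'' \subseteq S$, and I need $|N_G(S'' \cup \{x\})| \ge |S''| + 1$. Since $x$ has a neighbour $y$ and $y \in N_G(S)$, this alone is not enough — I need an \emph{augmenting-path} style argument: because $S$ has a perfect matching into $N_G(S)$ and $x$ connects into this matched set, one can show $N_G(S' \cup \{x\}) \supseteq N_G(S')$ plus potentially reach a new vertex via alternating paths, or more cleanly, use the defect version of Hall (König–Ore): the maximum matching size of the subgraph induced by $(S \cup \{x\}) \cup N_G(S \cup \{x\})$ equals $|S \cup \{x\}|$ minus the maximum deficiency $\max_{S'} (|S'| - |N_G(S')|)$, and show this deficiency is still $0$.

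The main obstacle is exactly this maximality argument — showing that if envy-freeness fails then $S$ can be enlarged while preserving Hall's condition. The cleanest way I would handle it: define $S$ to be a subset of $X$ of \emph{maximum cardinality} among those satisfying Hall's condition (for every subset), rather than merely inclusion-maximal; then if $x \in X \setminus S$ were adjacent to a matched $Y$-vertex, I would directly build a larger Hall-satisfying set. Concretely, take the matching $M$ saturating $S$; since $x$ is adjacent to some $y \in M(S)$, but we cannot add $x$ to $S$, there is a "blocking" subset $T \subseteq S$ with $|N_G(T \cup \{x\})| = |T|$, i.e. $N_G(\{x\}) \subseteq N_G(T)$; but then every neighbour of $x$, including the matched $y$, lies in $N_G(T)$, and one shows using $|N_G(X)| \ge |X|$ together with the tightness on $T$ and on $T \cup \{x\}$ that some vertex of $X \setminus (S \cup \{x\})$ must have been addable to $S$ in the first place, contradicting maximality. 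I expect to spend most of the write-up making this exchange argument precise; everything else (existence of $S \neq \emptyset$, applying Hall, extracting non-emptiness) is routine. If the original paper's proof is shorter, it likely just cites König's theorem / the defect form of Hall directly rather than spelling out the alternating-path bookkeeping, and I would follow that route for brevity.
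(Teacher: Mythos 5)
Your overall plan has a genuine flaw: the statement you reduce the lemma to---that a maximum-cardinality subset $S \subseteq X$ satisfying Hall's condition admits a perfect matching into $Y$ that is envy-free with regards to $X$---is simply false, even under the hypothesis $|N_G(X)| \ge |X| > 0$. Take $X = \{a, b, c\}$, $Y = \{y_1, y_2, y_3\}$ and edges $\{a, y_1\}, \{b, y_1\}, \{c, y_2\}, \{c, y_3\}$. Here $|N_G(X)| = 3 = |X|$, and the maximum Hall-satisfying subsets are $\{a, c\}$ and $\{b, c\}$ (the whole of $X$ fails Hall because $N_G(\{a, b\}) = \{y_1\}$). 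Every perfect matching of $\{a, c\}$ must match $a$ to $y_1$, and then the unmatched vertex $b$ is adjacent to the matched vertex $y_1$; symmetrically for $\{b, c\}$. So no perfect matching of any maximum Hall-satisfying subset is envy-free, although a non-empty envy-free matching does exist, namely $\{c \mapsto y_2\}$. This also shows why your exchange argument cannot be completed: in this instance $X \setminus (S \cup \{x\})$ is empty, so there is no vertex left that "must have been addable", and no contradiction arises---because there is nothing false to contradict. The deeper point is that the $X$-side of a maximum envy-free matching is in general \emph{not} a maximum-cardinality Hall-satisfying subset; it can be strictly smaller, and which vertices it contains is dictated by alternating-path structure rather than by cardinality maximality, so no amount of care in the tight-set bookkeeping will rescue the reduction.

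For reference, the paper does not prove this lemma itself; it imports it as Corollary 1.1 of Aigner-Horev and Segal-Halevi, whose argument is the one you would need: take a maximum matching $M^*$ of $G$, let $X_S \subseteq X$ and $Y_S \subseteq Y$ be the vertices reachable by $M^*$-alternating paths from the $M^*$-unmatched vertices of $X$, and restrict $M^*$ to $X \setminus X_S$. All neighbours of $X_S$ lie in $Y_S$, and no vertex of $Y_S$ is matched to $X \setminus X_S$, so this restriction is envy-free. If it were empty, then every $X$-vertex would lie in $X_S$, each vertex of $Y_S$ would be matched (else $M^*$ admits an augmenting path) to a distinct vertex of $X_S$, and $X_S$ would contain at least one unmatched vertex, giving $|N_G(X)| \le |Y_S| < |X_S| = |X|$, contradicting $|N_G(X)| \ge |X|$. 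In my counterexample this construction correctly discards $a$ and $b$ (both reachable from the unmatched $b$ by alternating paths) and outputs exactly the matching on $\{c\}$.
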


Since the lone divider, $i$, is tasked with creating an equal number of bundles
as there are remaining agents, each of which will be connected to $i$ in the
bipartite graph, a non-empty envy-free matching exists by
\cref{lem:envy-free-matching}. Thus, the number of agents decreases by at least
one. Moreover, since $i$ is connected to every bundle by an edge, $i$ is always
part of the matching. \citeauthor{Aigner-Horev:22} showed through this argument
that the lone divider algorithm finds a satisfactory allocation if the selected
lone divider is always able to construct a sufficient number of bundles.

\begin{theorem}[Theorem~4.1 in~\cite{Aigner-Horev:22}]\label{thr:sufficient-conditions}
    \Cref{alg:lone-divider} finds a (partial) allocation $A$ with $v_i(A_i) \ge
    x_i$ for each agent $i$, if thresholds $x_i$ are chosen such that
    \begin{enumerate}[label=(\arabic*)]
        \item there exists a set of pairwise disjoint bundles $B_1, \dots, B_n$
            with $v_i(B_j) \ge x_i$ for every $j \in \{1, \dots, n\}$;
            and\label{item:base-requirement}
        \item for every $k \in \{1, \dots, n - 1\}$ and any set of $k$ bundles,
            $C_1, \dots, C_k$, that can be allocated before $i$ is selected as
            the lone divider, there exists a set of pairwise disjoint bundles
            $B_1, \dots, B_{n - k}$ with $v_i(B_j) \ge x_i$ for every $j \in
            \{1, \dots, n - k\}$, and $\bigcup_{1 \le j \le n - k} B_j \subseteq
            (M \setminus \bigcup_{1 \le j \le k}
            C_j)$.\label{item:continued-requirement}
    \end{enumerate}
\end{theorem}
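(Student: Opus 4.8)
The plan is to establish three facts about \cref{alg:lone-divider}: that every iteration is well-defined (the chosen lone divider can indeed construct the requested bundles), that the procedure terminates, and that on termination every agent holds a bundle worth at least her threshold. I would run the argument as an induction over iterations, carried by the invariant that after any number of iterations the number of allocated bundles equals the number of removed agents (since each matched bundle is matched to a distinct agent on line~\ref{alg:line:envy-free}).

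For well-definedness I would fix an iteration in which agent $i$ is selected on line~\ref{alg:line:divider} and let $C_1, \dots, C_k$ be the bundles allocated in earlier iterations. By the invariant exactly $k$ agents have been removed, so the current agent set has size $n - k$, contains $i$ (whence $0 \le k \le n - 1$), and the remaining items are precisely $M \setminus \bigcup_{1 \le j \le k} C_j$. If $k = 0$, requirement~\ref{item:base-requirement} directly supplies $n$ pairwise disjoint bundles of value at least $x_i$ to $i$; if $1 \le k \le n - 1$, the sets $C_1, \dots, C_k$ are exactly a family of $k$ bundles allocatable before $i$ is chosen, so requirement~\ref{item:continued-requirement} supplies the needed $n - k$ pairwise disjoint bundles drawn from the residual items. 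Hence line~\ref{alg:line:bundles} always succeeds.

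Next I would handle progress and the matching. With the $n - k$ bundles fixed, form the bipartite graph $G$ of line~\ref{alg:line:envy-free}; since $v_i(B_j) \ge x_i$ for every $j$, agent $i$ is adjacent to all $n - k$ bundles, so $|N_G(N)| \ge n - k = |N| > 0$ and \cref{lem:envy-free-matching} yields a non-empty envy-free matching with regards to $N$, hence also a non-empty maximal-cardinality one. I would then observe that such a matching must include $i$: otherwise $i$ is an unmatched agent adjacent to a matched bundle (indeed to all of them), contradicting the definition of an envy-free matching with regards to the agents. So each iteration removes at least the divider, the while-loop runs at most $n$ times, and the algorithm terminates. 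For correctness, an agent leaves $N$ only upon receiving her matched bundle, and edges of $G$ join $i'$ to $B_j$ only when $v_{i'}(B_j) \ge x_{i'}$, so $v_i(A_i) \ge x_i$ for every $i$; disjointness of $A$ is immediate because the items of each allocated bundle are deleted from $M$, and within an iteration the bundles are disjoint by construction.

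The main obstacle I anticipate is not any single calculation but the bookkeeping in the second step: pinning down that ``$k$ bundles allocated'' and ``$k$ agents removed'' stay synchronised, so that the precondition of requirement~\ref{item:continued-requirement} (a family of exactly $k$ previously-allocatable bundles, together with the exact residual item set $M \setminus \bigcup_j C_j$) is matched verbatim by the state of the algorithm. The remaining subtlety is the short but easy-to-overlook point that the lone divider is always part of the envy-free matching; everything else is routine.
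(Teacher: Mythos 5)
Your proposal is correct and follows essentially the same argument the paper sketches for this cited result (Theorem~4.1 of \cite{Aigner-Horev:22}): requirements~\ref{item:base-requirement} and~\ref{item:continued-requirement} guarantee the divider can always produce the bundles, \cref{lem:envy-free-matching} gives a non-empty envy-free matching since the divider is adjacent to all bundles (hence the divider is matched and at least one agent is removed per iteration), and the edge definition plus envy-freeness ensure every allocated bundle meets the recipient's threshold. No gaps; the bookkeeping point you flag (allocated bundles staying in bijection with removed agents) is handled exactly as you describe.
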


Further, as is pointed out by \citeauthor{Aigner-Horev:22}, if valuations
functions can be queried in polynomial time and the creation of the bundles
(line~\ref{alg:line:bundles}) can be performed in polynomial time, the algorithm
runs in polynomial time.  Specifically, this follows from the fact that a
maximal-cardinality envy-free matchings can be computed in polynomial time.

\begin{theorem}[Theorem 1.2 in~\cite{Aigner-Horev:22}]
    A maximal-cardinality envy-free matching can be found in polynomial time in
    the number of vertices and edges in the graph.
\end{theorem}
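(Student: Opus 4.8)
The plan is to prove the theorem by exhibiting such an algorithm, based on iteratively \emph{peeling off minimal Hall violators}. The first thing I would record is a structural reformulation: a matching $M$ of the bipartite graph $G = (X \cup Y, E)$ is envy-free with regards to $X$ if and only if, writing $U$ for the set of $X$-vertices left unsaturated by $M$ and $Y_M$ for the set of saturated $Y$-vertices, we have $N_G(U) \cap Y_M = \emptyset$; equivalently, $M$ is a perfect matching of $X \setminus U$ into $Y \setminus N_G(U)$. With this in hand, the algorithm is: set $G_0 = G$; while the current graph $G_t$ contains a \emph{Hall violator}, i.e.\ a set $S \subseteq X_t$ with $|N_{G_t}(S)| < |S|$, pick an inclusion-minimal one $S_t$ and delete $S_t$, $N_{G_t}(S_t)$, and all incident edges to obtain $G_{t+1}$; once no Hall violator remains, Hall's condition holds on the remaining $X$-side $X_k$, so $G_k$ has a matching $M$ saturating $X_k$, which the algorithm returns. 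Checking that $M$ is envy-free in $G$ is short: the $X$-vertices unsaturated by $M$ are exactly $\bigcup_t S_t$ (the $S_t$ are pairwise disjoint and $X_k$ is fully saturated), and every $G$-neighbour of $\bigcup_t S_t$ was deleted at some stage and hence is not among the $Y$-vertices surviving in $G_k$, which contain $Y_M$.

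The heart of the argument is optimality, and it rests on the lemma: \emph{if $S$ is a minimal Hall violator of a bipartite graph and $M'$ is any envy-free matching of that graph, then $M'$ leaves every vertex of $S$ unsaturated}. I would prove this by splitting $S$ into its $M'$-saturated part $S_{\mathrm{sat}}$ and its unsaturated part $S_{\mathrm{un}}$. If $S_{\mathrm{sat}} = S$, then $M'$ injects $S$ into $N(S)$, contradicting $|N(S)| < |S|$. If $\emptyset \neq S_{\mathrm{sat}} \subsetneq S$, envy-freeness forces $N(S_{\mathrm{un}})$ to be disjoint from the $|S_{\mathrm{sat}}|$ distinct matched $Y$-vertices of $S_{\mathrm{sat}}$, all of which lie in $N(S)$; hence $|N(S_{\mathrm{un}})| \le |N(S)| - |S_{\mathrm{sat}}| \le |S_{\mathrm{un}}| - 1$, so $S_{\mathrm{un}}$ is a Hall violator properly contained in $S$, contradicting minimality. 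Given the lemma, I would induct on $t$: any envy-free matching $M'$ of $G$ leaves all of $S_0$ unsaturated, hence (by envy-freeness) uses no vertex of $N_{G_0}(S_0)$, hence restricts to an envy-free matching of $G_1$; repeating, $M'$ leaves all of $\bigcup_t S_t$ unsaturated, so $|M'| \le |X| - |\bigcup_t S_t| = |M|$. Thus $M$ has maximum cardinality among envy-free matchings, which is what the theorem asserts.

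It remains to argue polynomial running time, for which the one nontrivial ingredient is finding a minimal Hall violator. I would do this with standard matching machinery: compute a maximum matching $M''$ of $G_t$ (e.g.\ via Hopcroft--Karp); if it saturates $X_t$ there is no violator and we stop, and otherwise pick an unsaturated vertex $x_0 \in X_t$ and let $L$ be the set of $X$-vertices reachable from $x_0$ by $M''$-alternating paths. Standard facts give $N_{G_t}(L) = R_Y$, where $R_Y$ is the set of reachable $Y$-vertices, and $|R_Y| = |L| - 1$, so $L$ is a Hall violator; minimality follows because every vertex of $R_Y$ is matched into $L$, so a shortest-alternating-path argument rules out any Hall violator strictly inside $L$. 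Each of the at most $|X|$ peeling iterations then costs one maximum-matching computation plus one graph search, and the final matching is one more computation, all polynomial in $|X| + |Y| + |E|$. The main obstacle is getting the optimality lemma and the minimal-Hall-violator subroutine exactly right; the rest is bookkeeping and appeals to classical matching theory.
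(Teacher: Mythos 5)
Your proof is correct, but note that the paper itself does not prove this statement at all: it is imported verbatim as Theorem~1.2 of \citet{Aigner-Horev:22}, so the only comparison available is with that source's argument. There, the algorithm computes a single maximum matching $M^*$ of $G$, lets $Z$ be everything reachable by $M^*$-alternating paths from the $X$-vertices unsaturated by $M^*$, and returns the restriction of $M^*$ to $X \setminus Z$; maximality is proved in one shot from properties of that reachable set. Your route is genuinely different: you iteratively peel inclusion-minimal Hall violators, at the cost of up to $|X|$ maximum-matching computations instead of one, but in exchange you get a clean invariant (your key lemma that every envy-free matching leaves a minimal Hall violator entirely unsaturated) that makes the optimality induction transparent; the lemma and the induction are correct, and your verification of envy-freeness of the returned matching is sound. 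The only place you are terse is the claim that the reachable set $L$ is an inclusion-\emph{minimal} Hall violator: the ``shortest alternating path'' phrase should be replaced by a closure argument --- if some violator $S$ with $x_0 \in S \subsetneq L$ existed, then since the $|S|-1$ matched partners of $S \setminus \{x_0\}$ already lie in $N(S)$, the violation forces $N(S)$ to be exactly that set of partners, so every alternating path from $x_0$ stays inside $S$ and its partners, contradicting reachability of any vertex of $L \setminus S$ (violators avoiding $x_0$ are excluded because all their vertices are matched into distinct neighbours). With that detail filled in, both your subroutine and the overall algorithm are polynomial in $|X| + |Y| + |E|$, and your proof establishes the maximum-cardinality guarantee that the paper's applications actually use.
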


Finally, note that the concept of lone divider algorithms predates the work of
Aigner-Horev and Segal-Halevi, originating in the cake-cutting literature
\cite{Steinhaus:48,Kuhn:67}. Moreover, the lone divider algorithm discussed here
has recently been used in existence and approximation algorithms for MMS under
fair allocation of indivisible items, including in the paper of
\citeauthor{Aigner-Horev:22}
\cite{Hummel:22a,Aigner-Horev:22,Hosseini:21,Hosseini:22d}.

\section{Existence}\label{sec:existence}

In this section, we improve both the lower and upper bound guarantees for the
existence of $\alpha$-approximate MMS allocations under hereditary set system
valuations. First, we consider the improvement to the lower bound. In order to
obtain the given existence guarantee of $1/2$, we will consider a slight
variation of \cref{alg:lone-divider}. Specifically, we require each bundle
created on line~\ref{alg:line:bundles} to be independent in the underlying
hereditary set system. Recall that by \cref{lem:independent-equal-value}, the
requirement of constructing enough independent bundles worth at least $\mu_i/2$
is not a stricter requirement than simply constructing enough bundles worth at
least $\mu_i/2$.

The requirement of bundles that are independent in the hereditary set system is
made to limit the amount of lost value from each allocated bundle for the
remaining agents. If a bundle is not independent, then the bundle contains
wasted items, i.e., items that do not make a contribution to the bundle's value.
In other words, the sum of the individual values for the items in the bundle may
greatly exceed the value of the bundle. Thus, unless the bundles are
independent, the envy-free matching does not restrict the lost item value for
each allocated bundle to $x_i$. This could make it impossible for later agents
to create enough satisfactory bundles. With independent bundles, it holds by the
definition of hereditary set system valuations that $v_{i'}(B) = \sum_{j \in B}
v_{i'j}$ for any agent $i'$. Thus, the value of the allocated bundles accurately
bounds the value of the removed items.

\begin{theorem}\label{thr:existence}
    A $1/2$-approximate MMS allocation always exists under hereditary set system
    valuations.
\end{theorem}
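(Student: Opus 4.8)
The plan is to apply the modified lone divider algorithm (\cref{alg:lone-divider} with the extra requirement that bundles constructed on line~\ref{alg:line:bundles} be independent in the underlying hereditary set system $H$) with thresholds $x_i = \mu_i / 2$ for every agent $i$. By \cref{thr:sufficient-conditions} it suffices to verify the two bundle-creation conditions \ref{item:base-requirement} and \ref{item:continued-requirement}; since $x_i = \mu_i/2$, producing a satisfactory allocation then gives exactly a $1/2$-approximate MMS allocation. Condition \ref{item:base-requirement} is immediate: take an MMS partition $(P_1,\dots,P_n)$ of the chosen divider $i$, which satisfies $v_i(P_j) \ge \mu_i \ge \mu_i/2$; by \cref{lem:independent-equal-value} each $P_j$ contains an independent subset of the same value, so we get $n$ pairwise disjoint independent bundles each worth at least $\mu_i/2$ to $i$.

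The substance is in condition \ref{item:continued-requirement}: when agent $i$ is selected as lone divider, some set of $k$ pairwise disjoint bundles $C_1,\dots,C_k$ has already been allocated (with $1 \le k \le n-1$), and we must build $n-k$ disjoint independent bundles, each worth at least $\mu_i/2$ to $i$, inside $M \setminus \bigcup_{j} C_j$. Here I would use the key structural fact guaranteed by the algorithm's envy-free-matching step: each previously allocated bundle $C_j$, being independent (a divider always builds independent bundles) and having been passed over by agent $i$ as a non-matched agent at the time it was allocated, satisfies $v_i(C_j) = \sum_{\ell \in C_j} v_{i\ell} < x_i = \mu_i/2$. Starting from an MMS partition $(P_1,\dots,P_n)$ of $i$, I would argue that the removed items $\bigcup_j C_j$ can "damage" at most $k$ of the parts in a controlled way, using a merging/pooling argument: the total value (to $i$) of all removed items is strictly less than $k\mu_i/2$, so after deleting them the remaining items still carry value more than $n\mu_i - k\mu_i/2$ to $i$, concentrated in the $n$ parts; one then repartitions the surviving items of the $n$ parts into $n-k$ groups each of $i$-value at least $\mu_i/2$. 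A clean way to see this: take the $n-k$ parts among $P_1,\dots,P_n$ that lose the least value, or alternatively merge parts in pairs/triples and use that each merge of two parts that each lost less than $\mu_i/2$ still leaves at least $\mu_i/2$ — a standard bin-covering style counting — and finally apply \cref{lem:independent-equal-value} to each resulting group to make it independent without changing its value.

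I expect the main obstacle to be making the repartitioning argument in condition \ref{item:continued-requirement} fully rigorous, in particular nailing down exactly how the budget ``total removed $i$-value $< k\mu_i/2$'' is distributed across the $n$ original MMS parts and why this always permits forming $n-k$ surviving groups of value $\ge \mu_i/2$ each; the subtlety is that a single removed bundle $C_j$, while of small \emph{total} $i$-value, could intersect several parts $P_\ell$, so a naive "one bad part per $C_j$" claim is false and one instead needs a global value-counting argument (e.g., greedily peeling off groups of value $\ge \mu_i/2$ from the multiset of surviving item values and showing at least $n-k$ such groups can be formed). I also note that the stated $n/(2n-1)$ refinement mentioned in the introduction suggests the right threshold bookkeeping is slightly tighter than $\mu_i/2$; but for the $1/2$ bound claimed in \cref{thr:existence}, the coarser counting above should suffice, and it is natural to expect the argument to proceed by induction on the number of remaining agents, mirroring the structure of \cref{thr:sufficient-conditions}.
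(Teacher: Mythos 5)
Your skeleton is the paper's: run the lone divider algorithm with independent bundles and thresholds $x_i = \mu_i/2$, verify the two conditions of \cref{thr:sufficient-conditions}, and use the independence of the previously allocated bundles $C_1,\dots,C_k$ to bound the total removed item value by $\sum_j v_i(C_j) < k\mu_i/2$. However, the step you flag as the ``main obstacle'' is exactly where the paper's argument is short and where your proposed completions go astray. No repartitioning, merging, or greedy peeling is needed (or sound): one simply keeps the surviving parts of the MMS partition themselves. For each part $P_j$, monotonicity of the drop gives $v_i(P_j \cap M') \ge v_i(P_j) - \sum_{\ell \in P_j \setminus M'} v_{i\ell} \ge \mu_i - (\text{removed item value in } P_j)$, so a part can fall below $\mu_i/2$ only if strictly more than $\mu_i/2$ of removed item value lies inside it. Since the global budget of removed item value is strictly less than $k\mu_i/2$ (this is precisely what independence of the $C_j$ buys), fewer than $k+1$ parts can be destroyed, so at least $n-k$ parts $P_j \cap M'$ retain value $\ge \mu_i/2$; applying \cref{lem:independent-equal-value} to each makes them independent. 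Your worry that a single $C_j$ may intersect several parts is moot, because the counting charges removed \emph{item value} globally, not bundles to parts.

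By contrast, the two fallback arguments you sketch would fail under hereditary set system valuations. Merging two damaged parts, or pooling surviving items across parts by their item values, implicitly assumes the value of a group is the sum of its members' values; that holds only for independent bundles, and a merged or pooled group need not be independent (e.g., the set system may cap bundle sizes, so two parts each retaining $\mu_i/4$ can merge into a bundle still worth only $\mu_i/4$). This is exactly why the paper stays inside single parts $P_j \cap M'$ and invokes \cref{lem:independent-equal-value} only at the end. Finally, the $n/(2n-1)$ refinement does not come from a tighter threshold but from the same counting: a destroyed part costs more than $\mu_i - x_i$, so $|P^*| < n-k$ forces $kx_i > (k+1)(\mu_i - x_i)$, i.e.\ the argument works for $\alpha \le (k+1)/(2k+1)$, and taking $k = n-1$ gives the corollary.
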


\begin{proof}[Proof of \cref{thr:existence}]
    Consider a variety of \cref{alg:lone-divider} in which every bundle created
    on line~\ref{alg:line:bundles} is independent. It is sufficient to show that
    the two conditions of \cref{thr:sufficient-conditions} hold when $x_i =
    \mu_i/2$. Note that requirement~\ref{item:base-requirement} holds for any
    $x_i \le \mu_i$, as any MMS partition, $P = (P_1, \dots, P_n)$, of agent $i$
    contains $n$ bundles, each with $v_i(P_j) \ge \mu_i$.

    To show that requirement~\ref{item:continued-requirement} holds, fix some
    $k \in \{1, \dots, n - 1\}$ and set of independent bundles $C_1, \dots,
    C_{k}$ with $v_i(C_j) < x_i$ for every $j \in \{1, \dots, k\}$. Let $M' = M
    \setminus \bigcup_{1 \le j \le k} C_k$ be the set of unallocated items.
    Moreover, let $P = (P_1, \dots, P_n)$ be an MMS partition of agent $i$ and
    $P^* = \{P_j \cap M' : P_j \in P, v_i(P_j \cap M') \ge x_i\}$ be the set of
    bundles in $P$ valued at $x_i$ or greater, after the removal of the already
    allocated items. We wish to show that the number of bundles in $P^*$ is at
    least $n - k$, i.e., $|P^*| \ge n - k$. Then, any subset of $P^*$ with
    cardinality $n - k$ satisfies the requirements for $B_1, \dots, B_{n - k}$.

    Consider a bundle $P_j \in P$. We have that $v_i(P_j) \ge v_i(P_j \cap M')$
    by the monotonicity of the valuation functions. Moreover, if $v_i(P_j \cap
    M') < x_i$, it must hold that $v_i(P_j) - v_i(P_j \cap M') > (\mu_i - x_i)$.
    Since each $C_j$ is independent, it follows that
    \[
        \sum_{j = 1}^n v_i(P_j) - v_i(P_j \cap M') \le
        \sum_{j \in M \setminus M'} v_{ij} =
        \sum_{j = 1}^k v_i(C_j) < kx_i \eqdot
    \]
    Where the first step follows from the fact that each item $j$ contributes at
    most $v_{ij}$ to the value of a bundle $B$ with $j \in B$. For $|P^*| < n -
    k$, it must therefore hold that $kx_i > (k + 1)(\mu_i - x_i)$. In other
    words, if $kx_i \le (k + 1)(\mu_i - x_i)$, then $|P^*| \ge n - k$. Letting
    $x_i = \alpha\mu_i$ for some $\alpha > 0$, we get the following limitations
    on the value of $\alpha$ for which $|P^*| \ge n - k$ is guaranteed.
    \begin{align*}
        k\alpha\mu_i &\le (k + 1)(\mu_i - \alpha\mu_i) \\
        k\alpha &\le (k + 1) - (k\alpha + 1) \\
        (2k + 1)\alpha &\le k + 1 \\
        \alpha &\le \frac{k + 1}{2k + 1}
    \end{align*}
    Since $1/2 \le (k + 1)/(2k + 1)$ for any $k \ge 1$,
    requirement~\ref{item:continued-requirement} holds for $x_i = \mu_i/2$.
    The existence of $1/2$-approximate MMS allocations is therefore guaranteed
    by \cref{thr:sufficient-conditions}.
\end{proof}

The calculations in the proof of \cref{thr:existence} do not make use of the
fact that $k$ is bounded by $n - 1$ in \cref{thr:sufficient-conditions}.
Exploiting this fact yields a slightly better existence guarantee as a
corollary. Note that this improved bound tends to $1/2$ as the number of agents
increases, but is significantly larger when the number of agents is small.

\begin{corollary}\label{lem:existence}
    A ($n/(2n - 1)$)-approximate MMS allocation always exists under hereditary
    set system valuations.
\end{corollary}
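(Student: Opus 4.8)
The plan is to re-run the argument in the proof of \cref{thr:existence}, but this time keep track of the tighter bound on $k$. Recall that in \cref{thr:sufficient-conditions} the index $k$ ranges over $\{1, \dots, n-1\}$, so when we need requirement~\ref{item:continued-requirement} to hold, we only ever need it for $k \le n-1$. The proof of \cref{thr:existence} derived that, for $x_i = \alpha\mu_i$, the bound $|P^*| \ge n-k$ is guaranteed whenever $\alpha \le (k+1)/(2k+1)$. The function $k \mapsto (k+1)/(2k+1)$ is decreasing in $k$, so over the range $k \in \{1, \dots, n-1\}$ its minimum is attained at $k = n-1$, giving $(n-1+1)/(2(n-1)+1) = n/(2n-1)$.

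Concretely, I would write: run the same variation of \cref{alg:lone-divider} with independent bundles, now setting $x_i = \mu_i \cdot n/(2n-1)$ for each agent $i$. Requirement~\ref{item:base-requirement} still holds since $n/(2n-1) \le 1$, so $x_i \le \mu_i$ and any MMS partition supplies the required $n$ bundles. For requirement~\ref{item:continued-requirement}, I would fix $k \in \{1, \dots, n-1\}$ and the allocated independent bundles $C_1, \dots, C_k$ with $v_i(C_j) < x_i$, and reproduce verbatim the counting argument from the proof of \cref{thr:existence}: the total value lost from the MMS partition bundles under removal of allocated items is at most $\sum_{j=1}^k v_i(C_j) < k x_i$, and each bundle $P_j$ that drops below $x_i$ must have lost strictly more than $\mu_i - x_i$, so fewer than $n-k$ such drops are possible as long as $(k+1)(\mu_i - x_i) \ge k x_i$, i.e. $\alpha \le (k+1)/(2k+1)$. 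Since $n/(2n-1) \le (k+1)/(2k+1)$ for all $k \le n-1$, the requirement holds, and \cref{thr:sufficient-conditions} yields the allocation.

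There is essentially no obstacle here beyond observing monotonicity of $(k+1)/(2k+1)$ and pinning down that the worst case is $k = n-1$ rather than $k \to \infty$; the hereditary/independence machinery and the counting inequality are exactly as in \cref{thr:existence}. To avoid repetition I would phrase the corollary's proof as "identical to the proof of \cref{thr:existence}, except that the bound $k \le n-1$ is used: since $(k+1)/(2k+1)$ is decreasing in $k$, it suffices that $\alpha \le n/(2n-1)$." One minor point worth stating explicitly: the inequality $n/(2n-1) \le (k+1)/(2k+1)$ for $1 \le k \le n-1$ should be checked (it is equivalent to $n(2k+1) \le (k+1)(2n-1)$, i.e. $2nk + n \le 2nk + 2n - k - 1$, i.e. $k + 1 \le n$, which holds), and this is the only new line of algebra required.
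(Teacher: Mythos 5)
Your proposal is correct and matches the paper's own proof of this corollary: both simply observe that the bound $\alpha \le (k+1)/(2k+1)$ derived in the proof of \cref{thr:existence} is monotonically decreasing in $k$, so the binding case is $k = n-1$, yielding $\alpha \le n/(2n-1)$. Your explicit verification that $n/(2n-1) \le (k+1)/(2k+1)$ for $k \le n-1$ is a harmless extra line; no further changes are needed.
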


\begin{proof}
    The proof of \cref{thr:existence} guarantees that
    requirement~\ref{item:base-requirement} holds as long as $\alpha \le 1$.
    Moreover, it shows that requirement~\ref{item:continued-requirement} holds
    when $\alpha \le (k + 1)/(2k + 1)$. As $(k + 1)/(2k + 1)$ is monotonically
    decreasing in $k$, the strongest requirement on $\alpha$ is given by the
    largest value of $k$, which is $k = n - 1$. Thus, the requirement holds for
    \[\alpha \le \frac{(n - 1) + 1}{2(n - 1) + 1} = \frac{n}{2n - 1}\eqdot\]
    By \cref{thr:sufficient-conditions}, a ($n/(2n - 1)$)-approximate MMS
    allocation exists.
\end{proof}

To complete our consideration of existence guarantees, we provide an upper bound
on existence of $\alpha$-approximate MMS allocations for every number of agents.
That is, for any number of agents, except the trivial case of one agent, there
exists an instance in which no approximation better than $2/3$ is possible.
Combined with \cref{lem:existence}, this fully determines the case of two
agents, where a $2/3$-approximate MMS allocation always exists, but no better
approximation is guaranteed to exist.

\begin{theorem}\label{thr:upper-bound}
    For any $n \ge 2$ and $\epsilon > 0$, there exists an instance with $n$
    agents with hereditary set system valuations with binary marginal gains,
    such that no $(2/3 + \epsilon)$-approximate MMS allocation exists.
\end{theorem}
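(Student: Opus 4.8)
The plan is to construct, for each $n \ge 2$, a hereditary set system on a carefully chosen item set such that all $n$ agents share the same binary-valued valuation, every agent's MMS is some value $\mu$, but any allocation forces at least one agent down to a bundle worth at most $\lfloor 2\mu/3 \rfloor$ (or thereabouts), so that the approximation ratio cannot exceed $2/3 + \epsilon$ once $\mu$ is taken large. Since the valuations have binary marginal gains, by the remark in the preliminaries it suffices to specify, for each agent, which singletons have value $1$ (here, all of them) and then specify the family $\mathcal{F}$ of independent sets; the value of a bundle is then the size of its largest independent subset. Because all agents are identical, an MMS partition is the same for everyone, and the task reduces to a purely combinatorial statement: partition the items into $n$ independent-heavy bundles of value $\mu$ each (to establish $\mu_i = \mu$), yet show no $n$-partition of the \emph{same} items gives every block an independent subset of size exceeding $\frac{2}{3}\mu$.

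The natural building block is a set system in which independent sets are "balanced" in a way that a single agent can exploit when free to choose the whole partition, but which becomes impossible to replicate simultaneously across $n$ blocks. Concretely, I would take three groups of items $X_1, X_2, X_3$, each of size roughly $\mu \cdot n$ (or a small multiple thereof), and declare a set to be independent if and only if it does not contain "too many" items from the same group relative to the others — e.g., a set $S$ is independent iff $|S \cap X_t| \le $ (something like) the sum of its intersections with the other two groups, or more simply iff $S$ meets each group in at most $\mu$ items and $|S| \le$ some cap. The point is to engineer it so that the whole item set splits into $n$ blocks each of which is independent and has size exactly $\mu$ (giving $\mu_i \ge \mu$, and an easy counting upper bound gives equality), while the hereditary/"balance" constraint means that in \emph{any} $n$-partition, the pigeonhole distribution of the three groups among the $n$ blocks forces some block to be lopsided, so its largest independent subset has size at most $2\mu/3 + O(1)$. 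I expect the cleanest incarnation uses exactly $3$ special items per unit and a constraint of the form "independent $\iff$ at most two of the three colours appear," scaled up by a factor of $\mu$ so that the additive $O(1)$ slack becomes negligible and any ratio strictly above $2/3$ is ruled out.

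The steps, in order: (1) define the item set and the family $\mathcal{F}$, and verify the hereditary property (immediate from the "at most"-style definition) and that the induced valuation has binary marginal gains; (2) exhibit an explicit $n$-partition certifying $\mu_i \ge \mu$ for the common valuation, and give a matching counting upper bound so $\mu_i = \mu$ exactly; (3) show that in every $n$-partition of the item set, some block $A_j$ has $v_i(A_j) \le \frac{2}{3}\mu + c$ for an absolute constant $c$ — this is the crux, argued by summing the per-colour contributions over the $n$ blocks, noting the total "colour mass" is fixed, and invoking an averaging/pigeonhole argument that at least one block must be dominated by a single colour beyond the independence cap; (4) conclude that for any $\epsilon > 0$, choosing $\mu$ large enough that $\frac{2/3\,\mu + c}{\mu} < \frac{2}{3} + \epsilon$ yields an instance with no $(2/3+\epsilon)$-approximate MMS allocation, and remark that for $n = 2$ this meets the $n/(2n-1) = 2/3$ lower bound of \cref{lem:existence}.

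The main obstacle will be step (3): making the averaging argument tight enough that the loss is genuinely $\approx \mu/3$ and not something weaker like $\mu/2$, while keeping the construction simple. The danger is that a clever adversarial $n$-partition "hides" the colour imbalance by concentrating it in only one block and keeping the others balanced — but that one bad block is exactly what we need, provided we have arranged the total colour mass so that perfect balance in $n-1$ blocks necessarily overloads the last. Getting the sizes of $X_1, X_2, X_3$ and the cap to interact correctly (so the honest partition achieves value $\mu$ everywhere, yet the divisibility forces a genuine $1/3$-deficit in the worst block of any partition) is the delicate bookkeeping I would focus the bulk of the write-up on.
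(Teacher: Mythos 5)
Your proposal has a fatal structural flaw: you make all $n$ agents identical, and with identical agents a non-existence result is impossible. Your steps (2) and (3) contradict each other directly: the explicit $n$-partition you exhibit in step (2) to certify $\mu_i \ge \mu$ is itself an $n$-partition in which every block has value at least $\mu$, so the claim in step (3) — that \emph{every} $n$-partition has a block of value at most $\tfrac{2}{3}\mu + c$ — is refuted by that very partition. More generally, whenever agents share a common valuation, any MMS partition of that valuation is already an exact MMS allocation (give each agent one block), so no upper bound below $1$ on the approximation ratio can be established this way. Identical agents do appear in the paper, but only in the computational hardness result (\cref{thr:2/3-np-hard}); non-existence intrinsically requires at least two valuation types whose MMS partitions are structurally incompatible, and your colour-balance gadget, however the bookkeeping in step (3) is arranged, cannot get around this.

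For comparison, the paper's proof of \cref{thr:upper-bound} uses exactly such a two-type conflict on $4n$ items: the independent sets are the triples $\{4k+1,4k+2,4k+3\}$ and $\{4k+3,4k+4,4k+6\}$ (indices wrapping around) together with their subsets; one agent assigns value $1$ to all items except those of the form $4k$, and the other $n-1$ agents assign value $1$ to all items except those of the form $4k+1$. Each agent's MMS equals $3$, witnessed by the partition into ``its own'' family of $n$ disjoint triples and capped by the total value $3n$. Integrality then forces a $(2/3+\epsilon)$-approximation to give every agent value at least $3$, and the crux is an overlap count: any bundle worth $3$ to the first agent must contain one of its triples, and each such triple intersects two of the $n$ disjoint triples the other type needs, so at most $n-2$ of the remaining $n-1$ agents can reach value $3$. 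If you want to rescue your write-up, the colour/balance machinery has to be redeployed to create this kind of cross-type interference between two different valuation functions, not to defeat a single common one.
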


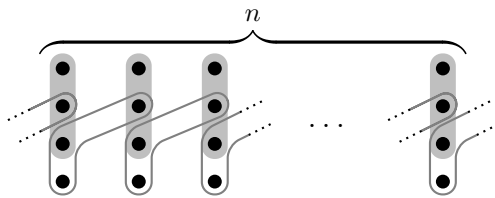
\begin{figure}
    \centering
    \begin{tikzpicture}
        \draw[every node/.append style={circle, fill=black, inner sep=0.65mm}]
            (0, 0) node (v11) {}
            (0, 0.5) node (v12) {}
            (0, 1) node (v13) {}
            (0, 1.5) node (v14) {}
            (1, 0) node (v21) {}
            (1, 0.5) node (v22) {}
            (1, 1) node (v23) {}
            (1, 1.5) node (v24) {}
            (2, 0) node (v31) {}
            (2, 0.5) node (v32) {}
            (2, 1) node (v33) {}
            (2, 1.5) node (v34) {}
            (5, 0) node (v41) {}
            (5, 0.5) node (v42) {}
            (5, 1) node (v43) {}
            (5, 1.5) node (v44) {}
            ;

        \node at (3.5, 0.75) {\large$\dots$};

        \draw[decorate, decoration={calligraphic brace,amplitude=3mm}, very thick]
        (-0.3, 1.7) -- (5.3, 1.7);
        \node at (2.5, 2.2) {$n$};

        \begin{pgfonlayer}{bg}
            \draw[draw=none,fill=gray!50,rounded corners=1.65mm]
            ([xshift=-1mm, yshift=1.3mm]v14.north west) rectangle
            ([xshift=1mm, yshift=-1.3mm]v12.south east)
            ([xshift=-1mm, yshift=1.3mm]v24.north west) rectangle
            ([xshift=1mm, yshift=-1.3mm]v22.south east)
            ([xshift=-1mm, yshift=1.3mm]v34.north west) rectangle
            ([xshift=1mm, yshift=-1.3mm]v32.south east)
            ([xshift=-1mm, yshift=1.3mm]v44.north west) rectangle
            ([xshift=1mm, yshift=-1.3mm]v42.south east)
            ;

            \draw[thick, draw=gray, rounded corners=1.65mm]
                ([xshift=-1mm, yshift=-1mm]v11.south west) --
                ([xshift=-1mm, yshift=1mm]v12.north west) --
                ([xshift=0.5mm, yshift=1.575mm]v23.north east) --
                ([xshift=1.5mm, yshift=-0.075mm]v23.south east) --
                ([xshift=1mm, yshift=0.5mm]v12.south east) --
                ([xshift=1mm, yshift=-1mm]v11.south east) -- cycle;

            \draw[thick, draw=gray, rounded corners=1.65mm]
                ([xshift=-1mm, yshift=-1mm]v21.south west) --
                ([xshift=-1mm, yshift=1mm]v22.north west) --
                ([xshift=0.5mm, yshift=1.575mm]v33.north east) --
                ([xshift=1.5mm, yshift=-0.075mm]v33.south east) --
                ([xshift=1mm, yshift=0.5mm]v22.south east) --
                ([xshift=1mm, yshift=-1mm]v21.south east) -- cycle;

            \draw[thick, draw=gray, rounded corners=1.65mm]
                ([xshift=1mm, yshift=-2.5mm]v32.south east) --
                ([xshift=1mm, yshift=-1mm]v31.south east) --
                ([xshift=-1mm, yshift=-1mm]v31.south west) --
                ([xshift=-1mm, yshift=1mm]v32.north west) --
                ++(0.5, 0.25)
                edge[dotted] ++(0.3, 0.15)
                ([xshift=1mm, yshift=-2.5mm]v32.south east) --
                ([xshift=1mm, yshift=0.5mm]v32.south east) --
                ++(0.275, 0.1375)
                edge[dotted] ++(0.3, 0.15)
                ;

            \draw[thick, draw=gray, rounded corners=1.65mm]
                ([xshift=1mm, yshift=-2.5mm]v42.south east) --
                ([xshift=1mm, yshift=-1mm]v41.south east) --
                ([xshift=-1mm, yshift=-1mm]v41.south west) --
                ([xshift=-1mm, yshift=1mm]v42.north west) --
                ++(0.5, 0.25)
                edge[dotted] ++(0.3, 0.15)
                ([xshift=1mm, yshift=-2.5mm]v42.south east) --
                ([xshift=1mm, yshift=0.5mm]v42.south east) --
                ++(0.275, 0.1375)
                edge[dotted] ++(0.3, 0.15)
                ;

            \draw[thick, draw=gray, rounded corners=1.65mm]
                ([xshift=-5.0mm, yshift=-1.125mm]v13.north east) --
                ([xshift=0.5mm, yshift=1.575mm]v13.north east) --
                ([xshift=1.5mm, yshift=-0.075mm]v13.south east) --
                ++(-0.5, -0.25)
                edge[dotted] ++(-0.3, -0.15)
                ([xshift=-3.5mm, yshift=-2.575mm]v13.south east) --
                ([xshift=1.5mm, yshift=-0.075mm]v13.south east) --
                ([xshift=0.5mm, yshift=1.575mm]v13.north east) --
                ++(-0.55,-0.27)
                edge[dotted] ++(-0.3, -0.1472)
                ;
            \draw[thick, draw=gray, rounded corners=1.65mm]
                ([xshift=-5.0mm, yshift=-1.125mm]v43.north east) --
                ([xshift=0.5mm, yshift=1.575mm]v43.north east) --
                ([xshift=1.5mm, yshift=-0.075mm]v43.south east) --
                ++(-0.5, -0.25)
                edge[dotted] ++(-0.3, -0.15)
                ([xshift=-3.5mm, yshift=-2.575mm]v43.south east) --
                ([xshift=1.5mm, yshift=-0.075mm]v43.south east) --
                ([xshift=0.5mm, yshift=1.575mm]v43.north east) --
                ++(-0.55,-0.27)
                edge[dotted] ++(-0.3, -0.1472)
                ;
        \end{pgfonlayer}
    \end{tikzpicture}
    \caption{An instance with no ($2/3 + \epsilon$)-approximate MMS allocation
    for any $\epsilon > 0$. The independent sets of the hereditary set system
    are given by the triples marked by the filled and outlined gray areas, and
    all their subsets. One agent assigns a value of 1 to each item in the three
    upper rows. The remaining $n - 1$ agents assign a value of 1 to each item in
    the three lower rows.}
    \label{fig:2/3-non-existence}
\end{figure}

\begin{proof}
    We consider an instance with $n$ agents and $4n$ items. Let $H = (M,
    \mathcal{F})$ be the hereditary set system given by the triples
    $\{4k+1,4k+2,4k+3\}$ and $\{4k+3,4k+4,4k+6\}$ for every $k \in \{0, 1,
    \dots, n - 1\}$, and all their subsets, where the item $4k + 6$ is replaced
    by item $2$ when $k = n - 1$. These triples are visualised in
    \cref{fig:2/3-non-existence}, where the filled gray areas are the triples
    $\{4k+1,4k+2,4k+3\}$ and the outlined gray areas are the triples $\{4k + 3,
    4k+4,4k+6\}$.

    We construct valuation functions, $v_1$ and $v_2$, based on $H$, with the
    following item values
    \[
        v_{1j} = \begin{cases}
            1 & j \neq 4k \text{ for every } k \in \mathbb{N} \\
            0 & \text{otherwise}
        \end{cases}
        \;\;\;\;\;\;\;\;\;\;\;
        v_{2j} = \begin{cases}
            1 & j \neq 4k + 1 \text{ for every } k \in \mathbb{N} \\
            0 & \text{otherwise}
        \end{cases}
    \]
    Using the visualisation of \cref{fig:2/3-non-existence}, $v_1$ is equivalent
    to assigning a value of $1$ to every item in the three upper rows and a
    value of $0$ to every item in the lower row. Equivalently, $v_2$ assigns a
    value of $1$ to every item in the three lower rows and a value of $0$ to
    every item in the upper row.

    The MMS of an agent with either valuation function is exactly $3$, as
    \[\mu_i \le (\sum_{j \in M} v_i(\{j\}))/n = 3n/n = 3\eqcomma\]
    for both $v_i = v_1$ and $v_i = v_2$. Moreover, there exists for both $v_1$
    and $v_2$ a partial $n$-partition where every bundle has value exactly $3$.
    For $v_1$ this is the partition $(\{1, 2, 3\}, \{5, 6, 7\}, \dots, \{4n - 3,
    4n - 2, 4n - 1\})$ that contains every triple $\{4k+1,4k+2,4k+3\}$ and for
    $v_2$ the partition $(\{3, 4, 6\}, \{7, 8, 10\}, \dots, \{4n - 2, 4n - 1,
    2\})$ that contains every triple $\{4k+3,4k+4,4k+6\}$. Note that for both
    $v_1$ and $v_2$, it can easily be verified that any bundle $B$ with $v(B)
    \ge 3$ contains a triple on the form $\{k+1,k+2,k+3\}$ and
    $\{k+3,k+4,k+6\}$, respectively.

    We finish constructing our instance, by letting one of the agents, $i$, have
    valuation function $v_1$ and the remaining $n - 1$ agents have valuation
    function $v_2$.\footnote{The proof works just as well with $r$ agents with
    valuation function $v_1$ and $n - r$ agents with valuation function $v_2$,
    as long as $0 < r < n$.} For any $\epsilon > 0$, a $(2/3 +
    \epsilon)$-approximate MMS allocation requires each agent, $i'$, to receive
    a bundle with value at least $3$, as $v_{i'}(B) \in \mathbb{N}$ and $\lceil
    2\mu_{i'}/3\rceil = 3$. Any bundle with value $3$ to agent $i$ overlaps with
    two of the $n$ bundles with value $3$ according to $v_2$ (every filled gray
    area in \cref{fig:2/3-non-existence} overlaps with two outlined gray areas).
    Thus, if agent $i$ receives a bundle with value at least $3$, then at most
    $n - 2$ of the remaining agents can receive a bundle with value at least
    $3$.  Consequently, it is impossible to guarantee every agent a bundle with
    value at least $3$ and no $(2/3 + \epsilon)$-approximate MMS allocation can
    exist.
\end{proof}

Note that the proof of \cref{thr:upper-bound} makes use of only two types of
agents. Thus, the nonexistence also holds when the types of agents is bounded.

\section{An Approximation Algorithm}\label{sec:approximation}

In the previous section, we saw that there always exists an $\alpha$-approximate
MMS allocation for $\alpha = 1/2$. While the proof is constructive, it requires
the computation of MMS partitions, one for each of the chosen lone dividers.
As computing MMS partitions is NP-hard, a polynomial-time algorithm cannot be
obtained directly from the proof. However, notice that if for some $\alpha$, a
$2\alpha$-approximate MMS partition could be found in polynomial time for each
agent, then the method used in the proof would yield a way to create $n - k$
bundles with value $\alpha\mu_i$ for the selected agent, $i$. Thus, the
existence of a good approximation algorithm for MMS partitions would yield an
all right approximation algorithm for MMS allocations.

In the additive case, a PTAS for $\alpha$-approximate MMS partitions exists,
permitting polynomial-time computation of $\alpha$-approximate MMS partitions
for any fixed choice of $\alpha < 1$ \cite{Woeginger:97}. If this algorithm
could be adapted to hereditary set system valuations, allocations that are
almost $1/2$-approximate MMS could be found in polynomial time. Unfortunately,
this is not possible for hereditary set system valuations. In fact, it can be
shown that it is impossible to compute $(2/3 + \epsilon)$-approximate MMS
partitions in polynomial time for any $\epsilon > 0$, unless \penp.

The inapproximation result follows directly from a result of \citet{Li:23a}.
They showed that there is exists no polynomial-time ($2/3 +
\epsilon$)-approximation algorithm for MMS under budget constraints, unless
\penp. While budget constraints is not a subclass of hereditary set system
valuations, their proof makes use of a case in which budget constraints overlap
with hereditary set system valuations. Specifically, the case in which all
agents are identical. For completeness and to highlight its validity in the
context of hereditary set systems, we give a variety of their proof using only
the context of hereditary set systems.

\begin{theorem}[Follows from Theorem~4 in~\cite{Li:23a}]\label{thr:2/3-np-hard}
    Given an $\epsilon > 0$, computing a $(2/3 + \epsilon)$-approximate MMS
    allocation, if one exists, is strongly NP-hard under hereditary set system
    valuations, even when agents are identical, have binary marginal gains and
    the independent sets of the hereditary set system are given as input.
\end{theorem}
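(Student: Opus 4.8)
The plan is to reduce from a strongly NP-hard partition-type problem — the natural candidate is $3$-PARTITION, which is strongly NP-hard and asks whether a multiset of $3n$ positive integers summing to $nT$, each strictly between $T/4$ and $T/2$, can be partitioned into $n$ triples each summing exactly to $T$. I would build a hereditary set system instance in which all agents are identical with binary marginal gains, so that the MMS partition problem reduces to packing items into $n$ bundles of equal value. The idea is to encode each integer $a_\ell$ of the $3$-PARTITION instance as a gadget of $a_\ell$ unit-valued items (binary marginal gains), and to use the hereditary family $\mathcal F$ to forbid mixing items from gadgets whose total would exceed $T$ within a single independent set, while allowing any collection of gadgets that fits. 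Concretely, the independent sets are all subsets of unions of gadgets whose index-sums do not exceed $T$; since each $a_\ell \in (T/4, T/2)$, an independent set touching gadgets can contain at most three gadgets fully, and exactly three only when they sum to exactly $T$. This makes $v(B)$ for any bundle equal to the best such packing, and the common MMS value becomes exactly $T$ precisely when a valid $3$-PARTITION exists.

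The key steps, in order, are: (i) define the instance — $n$ identical agents, $3n$ gadgets of sizes $a_\ell$, and the hereditary family generated by gadget-unions of index-sum at most $T$, and verify it is genuinely hereditary and that its independent sets can be written down in polynomial size (this is why the "independent sets given as input" hypothesis matters — I would give the maximal independent sets or an equivalent compact description); (ii) show the instance has an MMS allocation of value $T$ (equivalently $\mu_i = T$ for the common valuation) if and only if the $3$-PARTITION instance is a yes-instance — the forward direction is immediate from a valid triple-partition, the reverse direction uses the size constraint $a_\ell \in (T/4,T/2)$ to force every value-$T$ bundle to be exactly three full gadgets summing to $T$; (iii) argue the scaling: since all item values are integral and $\mu_i = T$, any $(2/3 + \epsilon)$-approximate MMS allocation must give each agent a bundle of value at least $\lceil (2/3+\epsilon)T \rceil$, and by choosing the encoding so that $T$ is large relative to $1/\epsilon$ (or more simply, noting $\lceil(2/3+\epsilon)T\rceil$ can be pushed up to $T$ by a preliminary scaling of all the $a_\ell$), a $(2/3+\epsilon)$-approximate MMS allocation would in fact be an exact MMS allocation, hence decide $3$-PARTITION; (iv) conclude strong NP-hardness, observing all numbers in the construction are polynomially bounded (inherited from the strong NP-hardness of $3$-PARTITION), so the hardness is strong.

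The main obstacle I expect is step (i): making the hereditary set system simultaneously (a) large enough that $v$ genuinely behaves like a bin-packing-into-triples valuation, (b) small enough — or compactly enough describable — that the reduction is polynomial and the "independent sets given as input" hypothesis is honoured, and (c) verified hereditary. In particular I must be careful that no independent set can "partially" use four or more gadgets to reach value $T$ by clever sub-selection of items within gadgets; the constraint $a_\ell > T/4$ handles four full gadgets, but one has to check mixtures of partial gadgets cannot do better, which should follow because the family is downward closed and the only way to certify value $T$ is membership in a maximal independent set, each of which is a union of at most three full gadgets. A secondary subtlety is the ceiling in step (iii): I would handle it cleanly by scaling every $a_\ell$ (and hence $T$) by a fixed polynomial factor up front, so that $\lceil(2/3+\epsilon)T\rceil = T$ for the relevant range of $\epsilon$, at the cost of only a polynomial blow-up that preserves strong NP-hardness. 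Alternatively, one can cite the overlap with budget constraints directly as in \citet{Li:23a}, but the self-contained version above keeps everything inside the hereditary-set-system language as the theorem statement promises.
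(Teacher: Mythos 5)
Your construction in steps (i)--(ii) is fine (the gadget family is hereditary, has polynomially many maximal sets since each $a_\ell > T/4$ allows at most three full gadgets, and $\mu_i = T$ iff the $3$-PARTITION instance is a yes-instance), but step (iii) contains a genuine gap that breaks the reduction. In your encoding the common MMS equals $T$, which is large, so a $(2/3+\epsilon)$-approximate MMS allocation only certifies each agent a value of $\lceil(2/3+\epsilon)T\rceil \approx (2/3)T$; allocations achieving that value can exist (and be returned) for no-instances as well, so the output does not decide $3$-PARTITION. Your proposed fix---``a preliminary scaling of all the $a_\ell$'' to push $\lceil(2/3+\epsilon)T\rceil$ up to $T$---goes in the wrong direction: scaling multiplies both $T$ and $(2/3+\epsilon)T$ by the same factor, so the ratio is unchanged and the absolute shortfall $(1/3-\epsilon)T$ only grows; the ceiling gains at most $1$, and $\lceil(2/3+\epsilon)T\rceil = T$ holds only when $(1/3-\epsilon)T < 1$, i.e.\ $T \le 3$ for small $\epsilon$, which no nontrivial instance satisfies under your unit-item-per-weight encoding. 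Making $T$ ``large relative to $1/\epsilon$'' makes matters strictly worse.

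The way to repair this---and it is what the paper does---is to keep the optimum constant rather than equal to $T$: represent each integer $a_j$ by a \emph{single} item of value $1$ and push the numeric data into the set system itself, taking $\mathcal{F} = \{S : |S| \le 3,\ \sum_{j \in S} a_j \le T\}$ (listable in $O(m^3)$ time, honouring the ``independent sets given as input'' hypothesis). Then every bundle has value at most $3$, $\mu_i = 3$ exactly when the $3$-PARTITION answer is YES and $\mu_i \le 2$ otherwise, and integrality gives $\lceil(2/3+\epsilon)\cdot 3\rceil = 3$ for every $\epsilon > 0$, so any $(2/3+\epsilon)$-approximate MMS allocation is forced to be exact and decides the instance; note also that this construction needs no $|S|\le 3$-forcing from the $T/4 < a_j < T/2$ promise, since the cardinality bound is built into $\mathcal{F}$. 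Your high-level reduction source (strong NP-hardness of $3$-PARTITION) and the integrality-plus-ceiling idea are the right ingredients, but they only combine correctly when the target MMS value is a small constant, which your gadget encoding does not provide.
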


\begin{proof}[Proof (Based on the proof of Theorem~4 in~\cite{Li:23a})]
    An instance of $3$-PARTITION is a multiset of $m$ positive integers, $a_1,
    \dots, a_m$. The problem asks if there exists a partition of $a_1, \dots,
    a_m$ into $n$ triples, with $m = 3n$, such that every triple sums to the
    same value, $T = (\sum_{j = 1}^{m} a_j)/n$. It has be shown that
    $3$-PARTITION is strongly NP-hard \cite{Garey:90}.

    Given an instance of $3$-PARTITION, we construct a hereditary set
    system $H = (\{1, \dots, m\}, \{S \subseteq \{1, \dots, m\} : |S| \le 3,
    \sum_{j \in S} a_j \le T\})$. As there are $O(m^3)$ subsets of cardinality
    at most three, $H$ can be constructed in polynomial time. Let $v$ be the
    valuation function given by $H$, with $v(j) = 1$ for $j \in \{1, \dots,
    m\}$. For any bundle $B \subseteq \{1, \dots, m\}$, $v(B) \ge 3$ holds if
    and only if there is a triple $\{j_1, j_2, j_3\} \subseteq B$ with $a_{j_1}
    + a_{j_2} + a_{j_3} \le T$. Thus, if and only if the answer to the
    $3$-PARTITION instance is YES, there exists an $n$-partition $(B_1,
    \dots, B_n)$ such that $v(B_j) \ge 3$ for every $B_j$ in the partition.

    Consider $n$ agents with valuation function $v$. It follows that $\mu_i = 3$
    for any agent $i$, if an only if the answer to the $3$-PARTITION instance is
    YES. Otherwise, as valuations are integer, it follows that $\mu_i \le 2$. In
    a similar fashion to in \cref{thr:upper-bound}, $\lceil(2/3 +
    \epsilon)\mu_i\rceil = \mu_i$ for any $\epsilon > 0$. Thus, any $(2/3 +
    \epsilon)$-approximate MMS allocation can be used to determine the answer to
    the original $3$-PARTITION instance in polynomial time. Since, the
    construction of the fair allocation problem from the $3$-PARTITION instance
    is polynomial, the strongly NP-hardness of $(2/3 + \epsilon)$-approximate
    MMS follows.
\end{proof}

As a consequence of \cref{thr:2/3-np-hard}, the approach used to created bundles
in the proof of \cref{thr:existence} can only work for $\alpha \le 1/3$ in
polynomial time (unless \penp). To achieve an approximation guarantee of $2/5$,
we instead create the required bundles directly from the set of remaining items
at each step of the lone divider algorithm. The algorithm for creating bundles,
\cref{alg:2/5-poly-partition}, works by first dividing the items into two
categories, $M_H$ and $M_L$, containing high- and low-valued items,
respectively. An item $j$ is considered high-valued to agent $i$, if $v_i(\{j\})
> \mu_i/5$. After constructing the two sets, bundles are created in two phases.
In the first phase, bundles are created that contain at most one item from
$M_H$. The construction is done greedily, considering each item $j \in M_H$ in
turn. For each item $j$, the approximate valuation oracle is queried for the set
$\{j\} \cup M_L$. If the returned bundle $B$ satisfies $v_i(B) \ge (2/5)\mu_i$,
the algorithm finds some bundle $B' \subseteq B$ such that $v_i(B') \ge
(2/5)\mu_i$ and for any $j' \in B'$, $v_i(B' \setminus \{j'\}) < (2/5)\mu_i$.
The simplification of $B$ to $B'$ can be done greedily, considering in turn
every item $j' \in B$. For each $j'$, it is checked if the item can be removed
from $B$ without the value dropping below $(2/5)\mu_i$. If this is the case, the
item is removed. After constructing the bundle $B'$, the process is continued
for the remaining items.

\begin{algorithm}[t]
    \caption{Find $k$ disjoint independent bundles with value at least
    $\frac{2}{5}\mu_i^*$ for valuation function $v_i$}
    \label{alg:2/5-poly-partition}
    \begin{algorithmic}[1]
        \REQUIRE An approximate valuation oracle $v^o_i$, items $M'$, a number $k$
        and an estimate $\mu_i^*$
        \ENSURE $k$ pairwise disjoint independent bundles $B_1, \dots, B_k$ with
        $v_i(B_j) \ge \frac{2}{5}\mu_i^*$ or \textsc{Failure}
        \STATE $M_H = \{j \in M' : v_i(\{j\}) > \frac{1}{5}\mu_i^*\}$
        \STATE $M_L = M' \setminus M_H$
        \COMMENT{Phase one}
        \FOR{$j \in M_H$ with $v_i(\{j\}) \ge \frac{2}{5}\mu_i^*$}
            \STATE create bundle $\{j\}$
            \STATE $M_H = M_H \setminus \{j\}$
        \ENDFOR
        \WHILE{$\exists j \in M_H$ with $v_i(v^o_i(\{j\} \cup M_L)) \ge
        \frac{2}{5}\mu_i^*$}\label{alg:line:while-1}
            \STATE $B = v^o_i(\{j\} \cup M_L)$
            \WHILE{$\exists j' \in B$ with $v_i(B \setminus \{j'\}) \ge
            \frac{2}{5}\mu_i^*$}
                \STATE $B = B \setminus \{j'\}$
            \ENDWHILE
            \STATE create bundle $B$
            \STATE $M_H = M_H \setminus B$
            \STATE $M_L = M_L \setminus B$
        \ENDWHILE
        \COMMENT{Phase two}
        \STATE $G = (M_H, \{\{j_1, j_2\} \in M_H \times M_H : v_i(v^o_i(\{j_1,
        j_2\})) \ge \frac{2}{5}\mu_i^*\})$
        \STATE Find a maximum-cardinality matching $M_G$ in $G$
        \FOR{$\{j_1, j_2\} \in M_G$}
            \STATE create bundle $\{j_1, j_2\}$
        \ENDFOR
        \IF{less than $k$ bundles have been created}
            \RETURN \textsc{Failure}
        \ELSE
            \RETURN $k$ created bundles
        \ENDIF
    \end{algorithmic}
\end{algorithm}

When the algorithm is no longer able to construct bundles in the first phase, it
moves on to the second phase. In this phase, it constructs as many bundles as
possible, such that for every constructed bundle $B$, it holds that $v_i(B) \ge
(2/5)\mu_i$ and $|B| = |B \cap M_H| = 2$. A maximum-cardinality collection of
pairwise disjoint bundles that satisfy these criteria can be found through a
maximum-cardinality matching in the graph $G = (M_H, \{\{j_1, j_2\} \in M_H
\times M_H : v_i(\{j_1, j_2\}) \ge (2/5)\mu_i\})$. Specifically, each edge in
the matching would correspond to a bundle comprised of the two items the edge
connects.

The idea behind the two phases is to initially restrict the value of each
allocated bundle, to avoid excessively overshooting the required value. In the
first phase, the value of a bundle is at most $(3/5)\mu_i$, unless the bundle is
a singleton. If the bundle contained more than one item and had a value
exceeding $(3/5)\mu_i$, there would be at least one item $j' \in M_L$ in the
bundle. Since $v_i(\{j'\}) \le \mu_i/5$, it would be possible to remove $j'$
from the bundle without reducing the value of the bundle by more than $\mu_i/5$.
As the value of the bundle would remain at least $(2/5)\mu_i$ after removing
$j'$, there can be no such item $j'$ and the value of the bundle is less than
$(3/5)\mu_i$.

Consider an MMS partition, $P = (P_1, \dots, P_n)$, of agent $i$, with all items
removed that have been allocated in the lone divider algorithm or used in the
bundles created in phase one. Since every one of these bundles, except the
singleton bundles from phase one, have a value of less than $(3/5)\mu_i$, at
most $r$ bundles can now have a value of less than $(2/5)\mu_i$, where $r$ is
the number of allocated and created bundles. Note that, the potentially higher
value of the singleton bundles does not affect this observation, as the removal
of a single item affects only the remaining value in a single bundle in $P$.
Thus, at least $n - r$ bundles have a remaining value of at least $(2/5)\mu_i$.
Moreover, every one of these $n - r$ bundles must contain at least two items
from $M_H$, as $v_i(\{j\} \cup M_L) < (2/5)\mu_i$ for every $j \in M_H$. As a
consequence, the matching in phase two will have cardinality of at least $n -
r$, resulting in a sufficient number of bundles.

While it may be clear that the described method will produce a sufficient number
of bundles, it assumes knowledge of $\mu_i$ and the argumentation does not make
any considerations in regards to the inaccuracy of the valuation oracle. As
computing $\mu_i$ is NP-hard, \cref{alg:2/5-poly-partition} makes instead use of
an estimate $\mu_i^*$. It can be shown that if this estimate is not too much
higher than $\mu_i$, the algorithm is guaranteed to find a satisfactory number
of bundles with value at least $(2/5)\mu_i^*$, even when supplied with an
approximate valuation oracle with error bound by $1/(n + 1)$.

\begin{lemma}\label{lem:2/5-partition-bounds}
    Consider an instance given by a set $N$ of $n \ge 2$ agents, a set $M$ of
    items and hereditary set system valuations $V$. Let $\mu^*_i > 0$ be an
    estimate of the MMS of an agent $i \in N$, with $\mu^*_i \le (1 + 1/(5n -
    1))\mu_i$, and $v^o_i$ an approximate valuation oracle for $v_i$, with error
    bound by $1/(n + 1)$. Given $0 \le \ell < n$ pairwise disjoint bundles $C_1,
    \dots, C_\ell$ with $\sum_{j' \in C_j} v_{ij'} \le (3/5)\mu^*_i$,
    \cref{alg:2/5-poly-partition} will return $k = n - \ell$ pairwise disjoint
    independent bundles $B_1, \dots, B_k$ with $v_i(B_j) \ge (2/5)\mu^*_i$, when
    ran for $v^o_i$, $M' = M \setminus (\bigcup_{1 \le j \le \ell}C_j)$, $k$ and
    $\mu^*_i$.
\end{lemma}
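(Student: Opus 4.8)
The plan is to use the failure condition of \cref{alg:2/5-poly-partition}: it suffices to show the algorithm creates at least $k$ bundles, since every bundle it outputs has value at least $\frac{2}{5}\mu_i^*$ and can be made independent (for the phase-two pairs, replace each by an equal-value independent subset via \cref{lem:independent-equal-value}). Write $s$ for the number of singleton bundles created by the for-loop of phase one and $w$ for the number of bundles created by the while-loop, and set $r=\ell+s+w$. If $s+w\ge k=n-\ell$ we are already done; so assume $s+w<n-\ell$, which gives $\ell+w\le n-1$, and it then remains to prove that the maximum matching $M_G$ found in phase two has size at least $n-r$.

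First I would record a uniform bound on the ``footprint'' of every removed bundle. Each $C_j$ has $\sum_{j'\in C_j}v_{ij'}\le\frac{3}{5}\mu_i^*$ by hypothesis. For a while-loop bundle $B$: it is independent (the oracle returns an independent set and the inner loop only deletes items), so $v_i(B)=\sum_{j'\in B}v_{ij'}$; it has value at least $\frac{2}{5}\mu_i^*$ and is inclusion-minimal for this; since its seed lies in $M_H$ with singleton value below $\frac{2}{5}\mu_i^*$ and every other candidate item is low-valued, $B$ has size at least two and contains an item $j'\in M_L$, so $v_{ij'}\le\frac{1}{5}\mu_i^*$, and minimality forces $v_i(B)<\frac{2}{5}\mu_i^*+\frac{1}{5}\mu_i^*=\frac{3}{5}\mu_i^*$. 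Hence every removed bundle other than a phase-one singleton has total item-value strictly below $\frac{3}{5}\mu_i^*$, while each singleton is a single item.

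The heart of the argument is a counting step against a reference partition. Take an MMS partition of $i$ and, using \cref{lem:independent-equal-value}, replace each part by an independent subset of equal value, obtaining pairwise disjoint independent $P_1,\dots,P_n$ with $v_i(P_j)\ge\mu_i$. Let $M''=M\setminus(\bigcup_j C_j\cup\text{(all phase-one bundles)})$ be the items present when phase two starts; since each $P_j$ is independent, $v_i(P_j\cap M'')=v_i(P_j)-\sum_{j'\in P_j\setminus M''}v_{ij'}$. Call $P_j$ \emph{good} if $v_i(P_j\cap M'')\ge\frac{n+1}{n}\cdot\frac{2}{5}\mu_i^*$. A non-good part loses more than $\mu_i-\frac{n+1}{n}\cdot\frac{2}{5}\mu_i^*$ of value; at most $s$ non-good parts contain one of the $s$ singleton items, and for all other non-good parts the lost value is supplied entirely by the $C_j$'s and while-loop bundles, whose item-values total at most $(\ell+w)\cdot\frac{3}{5}\mu_i^*$. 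Combining these with $\ell+w\le n-1$ and integrality, the number of non-good parts is at most $r$ provided $\frac{(3/5)\mu_i^*}{\mu_i-\frac{n+1}{n}(2/5)\mu_i^*}\le\frac{n}{n-1}$, and this inequality is exactly equivalent to the hypothesis $\mu_i^*\le(1+\frac{1}{5n-1})\mu_i$ — which is why that constant appears. Thus at least $n-r$ parts are good.

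Finally I would show each good $P_j$ contains two items of the final $M_H$ whose pair is an edge of $G$; since the good parts are disjoint this yields a matching of size at least $n-r$. Because the while-loop has terminated, $v_i\!\left(v_i^o(\{j'\}\cup M_L)\right)<\frac{2}{5}\mu_i^*$ for every remaining $j'\in M_H$, so the error bound gives $v_i(\{j'\}\cup M_L)<\frac{n+1}{n}\cdot\frac{2}{5}\mu_i^*$, hence $v_{ij'}+v_i(L)<\frac{n+1}{n}\cdot\frac{2}{5}\mu_i^*$ for any $j'\in M_H$ and any independent $L\subseteq M_L$. Splitting $P_j\cap M''$ into its $M_H$-part $A$ and $M_L$-part $L$ and feeding this into the good-part threshold forces $|A|\ge2$ and pins enough value onto the two heaviest items of $A$; since those sit inside the independent set $P_j$, their pair has additive value, and the oracle then certifies a value of at least $\frac{2}{5}\mu_i^*$, i.e.\ the pair is an edge of $G$. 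I expect this last step to be the main obstacle: making the accounting between the good-part threshold, the while-loop termination bound, and the $\frac{1}{n+1}$ oracle error tight enough that the two available heavy items clear $\frac{2}{5}\mu_i^*$ after the oracle's loss is precisely what calibrates the constants $\frac{n+1}{n}\cdot\frac{2}{5}$ and $1+\frac{1}{5n-1}$, and it also requires separately handling the degenerate case in which no heavy items remain.
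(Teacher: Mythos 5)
Your setup, footprint bound, and counting step are correct and essentially the paper's own argument: the strict $(3/5)\mu_i^*$ bound on non-singleton phase-one bundles via minimality plus a low-valued item, the good-part threshold $\tfrac{n+1}{n}\cdot\tfrac{2}{5}\mu_i^*$, the worst case $\ell+w\le n-1$ with integrality, and the equivalence of your final inequality with $\mu_i^*\le(1+\tfrac{1}{5n-1})\mu_i$ all match \cref{lem:2/5-partition-bounds} (your replacement of the MMS parts by equal-value independent subsets via \cref{lem:independent-equal-value} is a harmless variant of the paper's bookkeeping, which instead bounds each part's loss by the item values of the removed items).

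The genuine gap is the phase-two edge step, which you yourself flag as unresolved. Your plan --- calibrate the constants so that ``the two available heavy items clear $\tfrac{2}{5}\mu_i^*$ after the oracle's loss'' --- cannot work: a good part may contain exactly two remaining $M_H$ items worth just over $\tfrac{1}{5}\mu_i^*$ each, so the pair is worth barely more than $\tfrac{2}{5}\mu_i^*$ and $(1-\tfrac{1}{n+1})$ times its value falls below the threshold, while all of the part's surplus value sits in $M_L$; no choice of constants repairs this. The paper's argument is structural rather than quantitative: after the for-loop every remaining $j\in M_H$ satisfies $\tfrac{1}{5}\mu_i^*<v_i(\{j\})<\tfrac{2}{5}\mu_i^*$, so in an independent pair $\{j_1,j_2\}\subseteq M_H$ each item exceeds a third of the pair's (additive) value; since the oracle's error is $\tfrac{1}{n+1}\le\tfrac{1}{3}$ for $n\ge2$, it cannot drop either item, hence $v_i^o(\{j_1,j_2\})=\{j_1,j_2\}$, there is no loss at all, and the pair (already worth more than $\tfrac{2}{5}\mu_i^*$) is an edge of $G$. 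The same observation is what makes the phase-two bundles independent as output --- your proposed ``replace the pair by an equal-value independent subset'' is neither needed nor available, since a dependent pair could never have produced an oracle value of $\tfrac{2}{5}\mu_i^*$. Two smaller points: your intermediate claim that $v_{ij'}+v_i(L)<\tfrac{n+1}{n}\cdot\tfrac{2}{5}\mu_i^*$ for \emph{any} independent $L\subseteq M_L$ is false in general (it holds in your application only because $j'$ and $L$ lie in the common independent set $P_j$); and the degenerate case you mention, where no heavy items remain, is indeed not treated by the paper either --- its dismissal of the ``at most one heavy item'' case relies on some $j\in M_H$ being available to seed the while-loop query.
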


\begin{proof}
    We wish to show that the algorithm creates at least $k$ pairwise disjoint
    independent bundles $B_1, \dots B_k$, with $v_i(B_j) \ge (2/5)\mu_i^*$.
    Whenever a bundle is created, its items are removed from consideration.
    Thus, the bundles created must be pairwise disjoint. Moreover, it can easily
    be verified that any bundle created in the algorithm has value at least
    $(2/5)\mu_i^*$. It remains to show that each bundle created is independent
    and that $k$ bundles are created.

    Note that each bundle $B$ created in phase one must be independent, as $B$
    either consists of a single item with non-zero value or is a subset of an
    independent bundle returned by the approximate valuation oracle. Since every
    item $j \in M_H$ with $v_i(\{j\}) \ge (2/5)\mu_i^*$ is allocated in phase
    one, it holds that $\mu_i^*/5 < v_i(\{j\}) < (2/5)\mu_i^*$ in phase two.
    Thus, $v_i(v_i^o(\{j_1, j_2\})) \ge (2/5)\mu_i^*$ requires that
    $v_i^o(\{j_1, j_2\}) = \{j_1, j_2\}$. Hence, every bundle created in phase
    two is also independent.

    Let $r$ be the number of bundles created in phase one. We wish to show that
    the number of bundles created in phase two is at least $k - r$, whenever $r
    < k$. Consider an MMS partition, $P = (P_1, \dots, P_n)$, for agent $i$ and
    let $M'' = M_H \cup M_L$ be the remaining items at the start of phase two.
    Let $P' = (P'_1, \dots, P'_n)$ be such that $P'_j = P_j \cap M''$. We claim
    that there are at least $k - r = n - \ell - r$ bundles $P'_j \in P'$ with
    \[v_i(P'_j) \ge \frac{2}{5}\cdot\frac{\mu_i^*}{(1 - \frac{1}{n + 1})}\eqcomma\]
    $|P'_j \cap M_H| \ge 2$ and $v_i(P'_j \cap M_H) \ge (2/5)\mu_i^*$. Note that
    the last two conditions hold as long as the first condition holds, as
    otherwise there is at most one item $j \in (M_H \cap P'_j)$ that contributes
    to the value of $P'_j$ and $v_i(v_i^o(\{j\} \cup M_L)) \ge (1 -
    1/(n + 1))v_i(P'_j) \ge (2/5)\mu_i^*$. In other words, phase one could not yet
    have completed.

    To see that there exists at least $n - \ell - r$ bundles $P'_j \in P'$ with
    sufficient value, we wish to show that after removing from $P$ the items
    in $M \setminus M''$, at most $\ell + r$ of the bundles no longer have
    sufficient value. In other words, it must be shown that items with a
    combined value of
    \[
        \mu_i - \frac{2}{5} \cdot \frac{\mu_i^*}{1 - \frac{1}{n + 1}}
        \ge \mu_i - \frac{2}{5}\cdot\frac{(1 + \frac{1}{5n - 1})\mu_i}{1 -
        \frac{1}{n + 1}}\eqcomma
    \]
    have been removed from no more than $\ell + r$ of the bundles in $P$.

    First, consider the singleton bundles created in phase one, letting $r_1$
    denote the number of such bundles. As they contain only a single item each,
    they remove value from at most $r_1$ bundles in $P$. In the worst case, each
    one of these $r_1$ bundles no longer have sufficient value. The value of the
    remaining $n - r_1$ bundles has not changed.

    Notice that every other bundle created in phase one has a value of strictly
    less than $(3/5)\mu_i^*$, as otherwise there is some low-valued item in the
    bundle, that could be removed from the bundle without decreasing the value
    of the bundle beyond $(2/5)\mu_i^*$. As all the bundles are independent,
    $(3/5)\mu_i^*$ is also a bound on the total value of the items in each of
    the bundles. Similarly, the total value of the items in each bundle $C_j$ is
    also bound by $(3/5)\mu_i^*$. As in the proof of \cref{thr:existence}, we
    can therefore bound the value of the items removed from the bundles in $P$,
    that were not affected by the singleton bundles, by $(\ell + r -
    r_1)(3/5)\mu_i^*$. Thus, we wish to show that

    \begin{align*}
        (\ell + r - r_1 + 1)\left(\mu_i - \frac{2}{5} \cdot \frac{\left(1 +
        \frac{1}{5n - 1}\right)\mu_i}{1 - \frac{1}{n + 1}}\right) &\ge (l + r -
        r_1)\frac{3}{5}\mu_i^* \\
        (\ell + r - r_1 + 1)\left(1 - \frac{2}{5} \cdot \frac{1 +
        \frac{1}{5n - 1}}{1 - \frac{1}{n + 1}}\right)\mu_i &\ge \frac{3}{5}(l + r -
        r_1)\left(1 + \frac{1}{5n - 1}\right)\mu_i \\
        (\ell + r - r_1 + 1)\left(1 - \frac{2}{5} \cdot \frac{1 + \frac{1}{5n -
        1}}{1 - \frac{1}{n + 1}}\right) &\ge \frac{3}{5}(l + r - r_1)\left(1 +
        \frac{1}{5n - 1}\right) \\
        1 - \frac{2}{5} \cdot \frac{1 + \frac{1}{5n - 1}}{1 - \frac{1}{n + 1}}
        &\ge \frac{3}{5}\left(\frac{l + r - r_1}{l + r - r_1 + 1}\right)\left(1
        + \frac{1}{5n - 1}\right)
    \end{align*}
    Note that $\ell + r - r_1 \le n - 1$, as $r < k$. Thus, we need only show
    that
    \begin{align*}
        1 - \frac{2}{5} \cdot \frac{1 + \frac{1}{5n - 1}}{1 - \frac{1}{n + 1}}
        &\ge \frac{3}{5}\left(\frac{n-1}{n}\right)\left(1 + \frac{1}{5n -
        1}\right) \\
        5n - 2n \cdot \frac{\frac{5n}{5n - 1}}{\frac{n}{n + 1}} &\ge
        3(n-1)\left(\frac{5n}{5n - 1}\right) \\
        5n - \frac{2(5n)(n + 1)}{5n - 1} &\ge \frac{3(n - 1)(5n)}{5n - 1} \\
        5n(5n - 1) - 2(5n)(n + 1) &\ge 3(5n)(n - 1) \\
        5n(5n - 1) &\ge 5(5n)(n - 1) + 4(5n) \\
        25n^2 - 5n &\ge 25n^2 - 5n
    \end{align*}
    As a consequence, there are at least $k - r$ bundles that satisfy the
    requirements at the start of phase two.

    We claim that when there are $x \ge 0$ bundles $P'_j \in P'$ with $|P'_j
    \cap M_H | \ge 2$ and $v_i(P'_j \cap M_H) \ge (2/5)\mu_i^*$, at least $x$
    bundles will be created in phase two. Indeed, for any pair of items
    $j_1,j_2 \in M_H$ with $v_i(\{j_1, j_2\}) \ge (2/5)\mu_i^*$, it holds that
    $v_i(\{j_1\}) > (1/3)v_i(\{j_1, j_2)\})$ and $v_i(\{j_2\}) > (1/3)v_i(\{j_1,
    j_2\})$. Thus, as $n \ge 2$ and $v_i(v_i^o(B)) \ge (1 - 1/(n + 1))v_i(B)$,
    it holds that $v_i^o(\{j_1, j_2\}) = \{j_1, j_2\}$ for at least one pair
    $j_1, j_2 \in P'_j \cap M_H$ for each $P'_j$ that satisfies the two
    criteria.  Since $P$ is a partition, there must exist at least $x$ pairwise
    disjoint pairs of items that are each internally connected by an edge in the
    graph $G$, and at least $x$ bundles are created in phase two.

    Unless $r \ge k$, in which case enough bundles are created in phase one,
    there are at least $k - r$ bundles $P'_j \in P'$ with $|P'_j \cap M_H | \ge
    2$ and $v_i(P'_j \cap M_H) \ge (2/5)\mu_i^*$. As a consequence, at least $k$
    bundles are created across the two phases.
\end{proof}

There is a trade-off in \cref{lem:2/5-partition-bounds} between the
restrictiveness of the error bound on the approximate valuation oracle and the
upper bound on the estimate, $\mu_i^*$, for $\mu_i$. If the error of the oracle
is worse, then the upper bound on $\mu_i^*$ must be stricter, and vice versa.
Note that if there is no error in the estimate $\mu_i^*$, i.e., $\mu_i$ is
known, then the error of the valuation oracle could be as big as $\min\{3/(2n +
3), 1/3\}$.\footnote{The upper bound of $1/3$ is required for the approximate
valuation oracle to discover the pairs with value at least $(2/5)\mu_i$ in the
second phase.} Similarly, if the valuation oracle is exact, it can be shown that
the error bound on $\mu_i^*$ can be as big as $3/(5n - 3)$. Thus, the error
bounds used in \cref{lem:2/5-partition-bounds} differ from the optimum by only a
small constant factor in either direction.

Next, we show that \cref{alg:2/5-poly-partition} runs in polynomial time in the
number of agents and items, when $\mu_i^* > 0$.

\begin{lemma}\label{lem:2/5-poly-partition}
    Given an instance, \cref{alg:2/5-poly-partition} runs in polynomial time in
    the number of agents, $n$, and items, $m$, when $\mu_i^* > 0$.
\end{lemma}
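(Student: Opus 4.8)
The plan is to walk through \cref{alg:2/5-poly-partition} line by line and bound, for each step, the number of oracle calls and arithmetic operations performed, using the fact that the item values $v_{ij'}$ are given as part of the input and that $v^o_i$ always returns an \emph{independent} bundle, whose value is therefore simply the sum of its item values. The one genuinely recurring concern — and the step I expect to be the main obstacle — is arguing that the two \textbf{while} loops terminate after only polynomially many iterations; everything else is a routine counting argument.

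First I would handle the construction of $M_H$ and $M_L$. For each item $j \in M'$ with $v_{ij} > 0$, a single call to $v^o_i(\{j\})$ reveals whether $\{j\}$ is independent (it returns $\{j\}$ if and only if $v_i(\{j\}) > 0$, as noted after \cref{lem:independent-equal-value}), and hence the exact value $v_i(\{j\})$, which is either $v_{ij}$ or $0$; items with $v_{ij} = 0$ have $v_i(\{j\}) = 0$ directly. Thus $M_H$ and $M_L$ are obtained with $O(m)$ oracle calls. The first \textbf{for} loop of phase one then inspects each $j \in M_H$ once using these already-known singleton values, so it is clearly polynomial.

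For the phase-one \textbf{while} loop, the key observations are: (i) evaluating the loop condition amounts to, for each remaining $j \in M_H$, one call $v^o_i(\{j\} \cup M_L)$ followed by summing the item values over the returned independent bundle, so one condition check costs $O(m)$ oracle calls and $O(m^2)$ arithmetic; (ii) the inner \textbf{while} loop works with a set $B$ that is independent, so every test $v_i(B \setminus \{j'\}) \ge \frac{2}{5}\mu^*_i$ is again just a sum of known values, and since each iteration deletes one item from $B$ it runs at most $m$ times; and (iii) the created bundle $B$ satisfies $v_i(B) \ge \frac{2}{5}\mu^*_i > 0$ — here is exactly where the hypothesis $\mu^*_i > 0$ is used — so $B \neq \emptyset$, and removing $B$ from $M_H \cup M_L$ strictly decreases the number of remaining items. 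Hence the outer \textbf{while} loop executes at most $m$ times, and the whole of phase one runs in time polynomial in $n$ and $m$.

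Finally, in phase two the graph $G$ has at most $m$ vertices and at most $\binom{m}{2}$ candidate edges, each tested by a single oracle call $v^o_i(\{j_1,j_2\})$ plus an addition, so $G$ is built in polynomial time; a maximum-cardinality matching in $G$ is computed by any standard polynomial-time matching algorithm; and the subsequent \textbf{for} loop creates at most $m/2$ bundles. Comparing the number of created bundles against $k$ and returning is $O(1)$. Combining the bounds from all phases, \cref{alg:2/5-poly-partition} runs in polynomial time in $n$ and $m$ whenever $\mu^*_i > 0$.
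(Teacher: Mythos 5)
Your proof is correct and follows essentially the same route as the paper's: bound the for-loop and both while-loops by $m$ iterations (using $\mu_i^* > 0$ to guarantee each created bundle is non-empty), note that all value queries involve independent bundles or singletons so they reduce to summing known item values, and observe that phase two needs only $O(m^2)$ pair tests plus a polynomial-time matching. Your extra detail on determining singleton values via the approximate oracle matches the paper's earlier remarks and introduces no gap.
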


\begin{proof}
    It can easily be verified that every individual operation in phase one can
    be performed in polynomial time. Particularly, every time $v_i$ is queried,
    the supplied bundle is known to be independent or has a cardinality of one.
    Moreover, the for-loop has at most $|M_h| \le m$ iterations. As $\mu_i^* >
    0$, every iteration of the outer while-loop (line~\ref{alg:line:while-1})
    creates a non-empty bundle $B$, removing the items in $B$ from further
    consideration. Thus, there are at most $m$ iterations of the loop. The inner
    loop reduces the size of $B$ in each iteration, and is thus also restricted
    to at most $m$ iterations.  Therefore, the first phase can be performed in
    polynomial time.

    Phase two can also be verified to run in polynomial time. To construct the
    graph $G$, at most $m^2$ pairs of items need to be considered. Moreover, the
    maximum-cardinality matching can be computed in time polynomial to the
    number of vertices ($O(m)$) and edges ($O(m^2)$) in the graph and has
    cardinality at most $m/2$.
\end{proof}

A crucial, missing piece of the $2/5$-approximation algorithm is the computation
of the error bound, $\mu_i^*$, for each agent $i$. As mentioned, it is NP-hard
to compute $\mu_i$ or even approximate it within a factor better than $2/3$ by
\cref{thr:2/3-np-hard}. Instead, we take an approach used in other MMS
approximation algorithm, including the algorithm of \citeauthor{Li:21}. That is,
we start by setting $\mu_i^*$ to some value that is guaranteed to be no less
than $\mu_i$. If the algorithm fails to return a satisfactory allocation, an
agent $i'$ with a too high estimate, $\mu_{i'}^*$, is identified and the value
of $\mu_{i'}^*$ is reduced.

\begin{algorithm}[t]
    \caption{Find a $2/5$-approximate MMS allocation}
    \label{alg:2/5-approximation}
    \begin{algorithmic}[1]
        \REQUIRE Agents $N$, items $M$, and approximate valuation oracles $V$
        \ENSURE A $2/5$-approximate MMS allocation
        \FOR{$i \in N$}
            \STATE Let $\mu_i^* = m \cdot v_i(j)$, where $j$ is the $n$-th most
            valuable item according to $i$
        \ENDFOR
        \WHILE{a (partial) allocation has not been found}
            \STATE Run the lone divider algorithm for the agents in $N$ and items in
            $M$, with estimates $\mu_i^*$ for $\mu_i$. It returns either partial
            allocation $A$ or an agent $i$ which could not produce the required
            bundles when chosen.
            \IF{an agent is returned}
                \STATE $\mu_i^* = \frac{n}{n +
                1}\mu_i^*$\label{alg:line:multiplicative-adjustment}
            \ENDIF
        \ENDWHILE
        \RETURN $A$
    \end{algorithmic}
\end{algorithm}

Observe that \cref{lem:2/5-partition-bounds} allows us to determine some agent
$i$ with a too high estimate. If the lone divider algorithm is ran, with
\cref{alg:2/5-poly-partition} constructing bundles,
\cref{lem:2/5-partition-bounds} guarantees that the algorithm will never fail to
construct bundles for an agent $i$ with a sufficiently low estimate. If the
algorithm fails for an agent $i$, then it must hold that $\mu_i^* > (1 + 1/(n +
1))\mu_i$.  Thus, the estimate of agent $i$ can safely be reduced. Specifically,
if the estimate is lowered by the multiplicative factor $1/(1 + 1/(n + 1)) =
n/(n + 1)$, it still holds that $\mu_i^* > \mu_i$. \Cref{alg:2/5-approximation}
employs this approach, together with an observation of \citeauthor{Li:21} on
upper and lower bounds for the value of $\mu_i$, to find a $2/5$-approximate MMS
allocation.

\begin{lemma}[Claim~3 in~\cite{Li:21}]\label{lem:mms-bounds}
    Let $j$ be the $n$-th most valuable item according to agent $i$. Then,
    $v_i(\{j\}) \le \mu_i \le mv_i(\{j\})$.
\end{lemma}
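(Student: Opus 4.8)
The plan is to prove the two inequalities separately. Write $x = v_i(\{j\})$ and index the items so that $v_i(\{1\}) \ge v_i(\{2\}) \ge \cdots \ge v_i(\{m\})$, so that $j$ is item $n$; this presupposes $m \ge n$, which is implicit in the statement (if $m < n$ every $n$-partition has an empty bundle and $\mu_i = 0$).

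For the lower bound $v_i(\{j\}) \le \mu_i$, I would exhibit a single $n$-partition certifying $\mu_i \ge x$. Since items $1, \dots, n$ each satisfy $v_i(\{\cdot\}) \ge x$, place item $k$ into bundle $P_k$ for $k \in \{1, \dots, n\}$ and distribute the remaining items arbitrarily. By monotonicity of hereditary set system valuations, $v_i(P_k) \ge v_i(\{k\}) \ge x$ for every $k$, so this partition witnesses $\mu_i \ge x = v_i(\{j\})$.

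For the upper bound $\mu_i \le m\, v_i(\{j\})$, I would take an arbitrary MMS partition $P = (P_1, \dots, P_n)$ of agent $i$ and isolate a bundle of small total weight. At most $n-1$ of the $n$ bundles can contain one of the $n-1$ items $1, \dots, n-1$, so by the pigeonhole principle some bundle $P_{k^*}$ satisfies $P_{k^*} \subseteq \{n, \dots, m\}$; hence $v_i(\{j'\}) \le x$ for every $j' \in P_{k^*}$. Let $S \in \mathcal{F}$ be the independent set attaining $v_i(P_{k^*})$, so that $v_i(P_{k^*}) = \sum_{j' \in S \cap P_{k^*}} v_{ij'}$ (equivalently, take the independent witness provided by \cref{lem:independent-equal-value}). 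By the hereditary property, every singleton $\{j'\}$ with $j' \in S \cap P_{k^*}$ is independent, so $v_{ij'} = v_i(\{j'\}) \le x$; summing over the at most $|P_{k^*}| \le m$ items of $S \cap P_{k^*}$ yields $\mu_i \le v_i(P_{k^*}) \le m x = m\, v_i(\{j\})$.

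The argument is largely routine; the one point requiring care is the distinction between the per-item weights $v_{ij}$ and the singleton valuations $v_i(\{j\})$ — these coincide precisely when $\{j\}$ is independent, and the hereditary property supplies exactly this for any item lying inside an independent set, which is what legitimises the final termwise bound. The only other things worth stating explicitly are the pigeonhole count ($n-1$ ``heavy'' items against $n$ bundles) and the standing assumption $m \ge n$.
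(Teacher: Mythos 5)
Your proof is correct. Note that the paper does not prove this statement at all---it is imported as Claim~3 of \cite{Li:21}---so there is no in-paper argument to compare against; your two-sided argument (a witness partition built from the top $n$ items for the lower bound, and a pigeonhole bundle avoiding the top $n-1$ items for the upper bound) is the standard one and works. You also correctly handle the one genuine subtlety, namely that the per-item weights $v_{ij'}$ and the singleton values $v_i(\{j'\})$ coincide only for items whose singletons are independent, which the hereditary property guarantees for every item of the maximizing independent set $S$. A marginally shorter route for the upper bound is to invoke subadditivity of XOS (hence hereditary set system) valuations, $v_i(P_{k^*}) \le \sum_{j' \in P_{k^*}} v_i(\{j'\}) \le m\, v_i(\{j\})$, which avoids mentioning $S$ explicitly, but this buys only brevity, not generality.
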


\begin{theorem}\label{thr:2/5-poly}
    A $2/5$-approximate MMS allocation can be found in polynomial time under
    hereditary set system valuations, given approximate valuation oracles with
    an error bound by $1/(n + 1)$.
\end{theorem}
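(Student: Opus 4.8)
The plan is to combine the three ingredients already developed: the lone divider framework (\cref{thr:sufficient-conditions}), the bundle-construction subroutine \cref{alg:2/5-poly-partition} with its correctness guarantee (\cref{lem:2/5-partition-bounds}) and running-time guarantee (\cref{lem:2/5-poly-partition}), and the estimate-adjustment loop of \cref{alg:2/5-approximation} together with the MMS bounds of \cref{lem:mms-bounds}. First I would argue correctness: when \cref{alg:2/5-approximation} terminates with a (partial) allocation $A$, each agent $i$ received a bundle of value at least $(2/5)\mu_i^*$, and since the invariant $\mu_i^* \ge \mu_i$ is maintained throughout, $A$ is $2/5$-approximate MMS. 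By monotonicity of hereditary set system valuations, the partial allocation can be completed without loss, so we obtain a complete $2/5$-approximate MMS allocation.

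The two things that need checking are the invariant and termination. For the invariant: initially $\mu_i^* = m\,v_i(\{j\})$ with $j$ the $n$-th most valuable item, so $\mu_i^* \ge \mu_i$ by \cref{lem:mms-bounds}. The invariant is only threatened on line~\ref{alg:line:multiplicative-adjustment}. By \cref{lem:2/5-partition-bounds}, if at the moment agent $i$ is selected as lone divider we had $\mu_i^* \le (1 + 1/(5n-1))\mu_i$, then \cref{alg:2/5-poly-partition} would have returned the required $n - \ell$ bundles (the already-allocated independent bundles $C_1, \dots, C_\ell$ each have summed item value below $(3/5)\mu_i^*$ precisely because they were created as $2/5$-threshold-minimal independent bundles in earlier rounds, so the hypothesis of the lemma applies). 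Hence failure for $i$ implies $\mu_i^* > (1 + 1/(5n-1))\mu_i > (1 + 1/(n+1))\mu_i$, so after multiplying by $n/(n+1) = 1/(1 + 1/(n+1))$ we still have $\mu_i^* > \mu_i$, preserving the invariant.

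For termination and the polynomial time bound — which I expect to be the main obstacle to state cleanly — the key point is that each agent's estimate can only be reduced a bounded number of times. The invariant gives $\mu_i^* > \mu_i$, while each reduction multiplies $\mu_i^*$ by $n/(n+1)$; starting from $m\,v_i(\{j\})$ and never dropping to or below $\mu_i \ge v_i(\{j\})$, the number of reductions for any fixed agent is $O(n \log m)$, which is polynomial. Each outer iteration of the while loop in \cref{alg:2/5-approximation} either terminates the algorithm or performs at least one reduction, so there are polynomially many iterations; each runs the lone divider algorithm, whose steps — bundle creation via \cref{alg:2/5-poly-partition} (polynomial by \cref{lem:2/5-poly-partition}, using that $\mu_i^* > 0$ since $v_i(\{j\}) > 0$ can be assumed, else the agent is trivially satisfied) and maximal-cardinality envy-free matching (polynomial by Theorem~1.2 of \cite{Aigner-Horev:22}) — are all polynomial. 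I would close by noting that the lone divider correctness theorem \cref{thr:sufficient-conditions} applies once the first requirement is verified: an initial set of $n$ disjoint independent bundles of value $\ge (2/5)\mu_i^*$ exists whenever $\mu_i^* \le (1 + 1/(5n-1))\mu_i$ via \cref{lem:2/5-partition-bounds} with $\ell = 0$, and otherwise $i$'s estimate gets reduced — so the loop makes progress in every case and eventually every agent's estimate is low enough for the run to succeed.
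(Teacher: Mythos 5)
Your plan is structurally the same as the paper's proof: verify the hypothesis of \cref{lem:2/5-partition-bounds} each time a divider is chosen, maintain the invariant $\mu_i^* > \mu_i$ across reductions, count $O(n\log m)$ reductions per agent, and use \cref{lem:2/5-poly-partition} plus polynomial envy-free matching for the per-iteration running time. However, two of your steps fail as written. First, your justification that the already-allocated bundles $C_1,\dots,C_\ell$ satisfy $\sum_{j'\in C_j} v_{ij'} \le (3/5)\mu_i^*$ is the wrong one: this bound is with respect to the \emph{remaining} agent $i$'s item values and $i$'s current estimate, whereas threshold-minimality at creation time only constrains the value seen by the divider who built the bundle (and not even that for the singleton and phase-two pair bundles, which can be arbitrarily valuable). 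The argument the paper uses is that every allocated bundle is independent, so $\sum_{j'\in C_j} v_{ij'} = v_i(C_j)$, and agent $i$ was left unmatched by the envy-free matching in the round where $C_j$ was allocated, so $v_i(C_j) < (2/5)\mu_i^* \le (3/5)\mu_i^*$ (estimates are fixed within a single run of the lone divider, since \cref{alg:2/5-approximation} restarts it from scratch after every adjustment). Relatedly, you never say how the matching graph's edges are computed for agents other than the divider given only approximate oracles; again independence of the created bundles is what makes these values exactly computable.

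Second, your inequality chain for preserving the invariant is backwards: for $n \ge 2$ we have $1/(5n-1) < 1/(n+1)$, so a failure only certifies $\mu_i^* > \bigl(1 + \tfrac{1}{5n-1}\bigr)\mu_i$, not $\mu_i^* > \bigl(1 + \tfrac{1}{n+1}\bigr)\mu_i$; also $1/\bigl(1+\tfrac{1}{n+1}\bigr) = \tfrac{n+1}{n+2} \neq \tfrac{n}{n+1}$. Multiplying by $n/(n+1)$ after a failure therefore only gives $\mu_i^* > \tfrac{5n^2}{(n+1)(5n-1)}\mu_i < \mu_i$, so the invariant $\mu_i^* > \mu_i$ is not maintained by your argument. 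The repair is to align the reduction factor with the guarantee you actually get from \cref{lem:2/5-partition-bounds}, e.g.\ reduce by $1/\bigl(1+\tfrac{1}{5n-1}\bigr) = \tfrac{5n-1}{5n}$ upon failure; then the invariant is preserved and the number of reductions per agent is still $O(n\log m)$, so polynomiality is unaffected. (The paper's own write-up is loose about this constant in the same place, but your proof should make the factor and the lemma's bound match explicitly.)
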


\begin{proof}
    First, we show that we can restrict our consideration to instances where
    every agent has a non-zero MMS and there are at least two agents. By
    \cref{lem:mms-bounds}, $\mu_i > 0$ if and only if agent $i$ assigns a
    non-zero value to the in her eyes $n$-th most valuable item. Thus, it can be
    determined if $\mu_i > 0$ for each agent in an instance in polynomial time.
    If there are no agents with $\mu_i > 0$, then any allocation can be
    returned. Similarly, if there is only one agent with $\mu_i > 0$, this agent
    can be given every item. Otherwise, every agent with $\mu_i = 0$ can be
    removed to produce a satisfactory instance.

    We wish to show that \cref{alg:2/5-approximation} finds a $2/5$-approximate
    MMS allocation in polynomial time, when there are at least two agents and
    $\mu_i > 0$ for every agent $i \in N$. First, note that every step of the
    lone divider algorithm can be performed in polynomial time. For the creation
    of bundles, this follows from \cref{lem:2/5-poly-partition}. Moreover, as
    every bundle created is independent, the valuation queries needed to
    construct the graph for the maximal-cardinality envy-free matching can also
    be performed in polynomial time. Thus, the lone divider step of
    \cref{alg:2/5-approximation} and subsequently each iteration of the while
    loop runs in polynomial time.

    By \cref{lem:2/5-partition-bounds}, the bundle creation can not fail for an
    agent $i$ if $\mu_i^* \le (1 + 1/(n + 1))\mu_i$, as each bundle $C_j$
    allocated prior to choosing $i$ as the lone divider is independent and has
    value $v_i(C_j) < (2/5)\mu_i^*$. Thus, the number of iterations of the
    while-loop is bound by the number of agents times the maximum number of
    adjustments that can be made to $\mu_i^*$ before $\mu_i^* \le (1 + 1/(n +
    1))\mu_i$ for an agent $i$. By \cref{lem:mms-bounds}, $\mu_i^* \le m\mu_i$
    initially. Thus, the number of adjustments per agent is at most
    \[
        \log_{1 + \frac{1}{n + 1}}m
        = \frac{\ln m}{\ln \left(1 + \frac{1}{n + 1}\right)}
        < \frac{\ln m}{\frac{5}{6(n + 1)}}
        = \frac{6}{5}(n + 1)\ln m\eqcomma
    \]
    where the second to last step follows from the Maclaurin series $\ln(1 + x)
    = \sum_{r = 1}^\infty (-1)^{r + 1} x^r/r$, which can be lower bounded by
    $5/(6(n + 1))$ in the case of $x = \frac{1}{n + 1}$ and $n \ge 2$
    \[\ln\left(1 + \frac{1}{n + 1}\right) = \sum_{r = 1}^\infty (-1)^{r +
    1}\frac{\left(\frac{1}{n + 1}\right)^r}{r} > \frac{1}{n + 1} - \frac{1}{2(n
    + 1)^2} \ge \frac{1}{n + 1} - \frac{1}{6(n + 1)} = \frac{5}{6(n + 1)}\eqdot\]
    Thus, the number of iterations of the while-loop is polynomially bounded and
    the algorithm runs in polynomial time.

    Consider the (partial) allocation return by the algorithm. It holds by
    \cref{thr:sufficient-conditions,lem:2/5-partition-bounds} that each agent
    $i$ is allocated a bundle with value at least $(2/5)\mu_i^*$. Since $\mu_i^*
    > \mu_i$ at the start of the algorithm and $\mu_i^*$ is only adjusted by the
    multiplicative factor $n/(n + 1)$ if $\mu_i^* > (1 + 1/(n + 1))\mu_i$, we
    get that $\mu_i^* > (n/(n + 1))(1 + 1/(n + 1))\mu_i = \mu_i$ holds
    throughout the execution of the algorithm. Consequently, each agent receives
    a bundle with value at least $(2/5)\mu_i$.
\end{proof}

Finally, we note that the approach used to obtain a $2/5$-approximation
algorithm, can be used to obtain weaker approximations when the error of the
inaccurate approximation oracle is larger. For example, in the case in which the
accuracy is bound by a constant. The proof is deferred to
\cref{app:inaccurate-oracle}, due to its similarity to that of
\cref{thr:2/5-poly}.

\begin{theorem}\label{thr:inaccurate-oracle}
    Given an instance with hereditary set system valuations and inaccurate
    approximation oracles with error bound $0 \le \epsilon < 1$, a ($\frac{1 -
    \epsilon}{1 + (3/2)(1 - \epsilon)}$)-approximate MMS allocation can be found
    in polynomial time if
    \begin{enumerate}[label=(\arabic*)]
        \item There exists a $\delta$ polynomial in the size of the
            instance, such that $\epsilon \le 1 - \frac{1}{\delta}$; and
        \item Either $\epsilon \le 1/3$ or the independence of bundles of
            cardinality two can be checked in polynomial time.
    \end{enumerate}
\end{theorem}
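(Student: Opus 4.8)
The plan is to rerun the proof of \cref{thr:2/5-poly} with the constants $1/5$, $2/5$ and $1/(n+1)$ replaced by $\epsilon$-dependent parameters. Put $\alpha = \frac{1-\epsilon}{1+(3/2)(1-\epsilon)}$, which is positive because $\epsilon < 1$, and $\gamma = \frac{3(1-\epsilon)}{2n + 3(n-1)(1-\epsilon)}$; then $\gamma > 0$ and $1/\gamma = \frac{2n}{3(1-\epsilon)} + (n-1)$, so condition~(1) guarantees that $1/\gamma$ is polynomially bounded in the size of the instance. The goal is an allocation giving every agent $i$ a bundle worth at least $\alpha\mu_i$.

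First I would restate \cref{alg:2/5-poly-partition} with target value $\alpha\mu_i^*$ per bundle and high-valued set $M_H = \{j \in M' : v_i(\{j\}) > \tfrac{\alpha}{2}\mu_i^*\}$. Phase one is unchanged up to rescaling the thresholds, so after the greedy trimming every phase-one bundle is independent, the phase-one singletons are exactly the items of value at least $\alpha\mu_i^*$, and every other phase-one bundle has total (hence, being independent, actual) item value in $[\alpha\mu_i^*, \tfrac{3}{2}\alpha\mu_i^*)$. In phase two the graph $G$ on $M_H$ should have an edge for each independent pair of value at least $\alpha\mu_i^*$. When $\epsilon \le 1/3$ this graph is still obtained from the oracle test $v_i(v_i^o(\{j_1,j_2\})) \ge \alpha\mu_i^*$: at that point every element of $M_H$ has value strictly between $\tfrac{\alpha}{2}\mu_i^*$ and $\alpha\mu_i^*$, so two of them lie within a factor $2 \le (1-\epsilon)/\epsilon$ of each other and an $\epsilon$-approximate oracle applied to an independent pair cannot drop either element, while the test never accepts a non-edge. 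When $\epsilon > 1/3$ this may fail, and $G$ is instead built using the polynomial-time cardinality-two independence check granted by condition~(2) (the pair's value is then $v_{ij_1}+v_{ij_2}$). The polynomial-time claim for this routine, the analogue of \cref{lem:2/5-poly-partition}, follows as before.

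The heart of the argument is the analogue of \cref{lem:2/5-partition-bounds}: if $\mu_i^* \le (1+\gamma)\mu_i$ and the routine is given $\ell < n$ disjoint bundles $C_j$ with $\sum_{j'\in C_j} v_{ij'} \le \tfrac{3}{2}\alpha\mu_i^*$, it returns $n-\ell$ disjoint independent bundles of value at least $\alpha\mu_i^*$. Its proof transcribes the one in \cref{lem:2/5-partition-bounds}: take an MMS partition $P$, delete the items of the $C_j$'s and of the phase-one bundles; each deletion removes at most $\tfrac{3}{2}\alpha\mu_i^*$ of item value spread over the parts of $P$, except for the $r_1$ phase-one singletons which each hit a single part; a counting argument using $\ell + r - r_1 \le n-1$ then leaves at least $n-\ell-r$ parts of value at least $\tfrac{\alpha\mu_i^*}{1-\epsilon}$, and any such part must contain at least two $M_H$-elements (else phase one could still act on it), which since each exceeds $\tfrac{\alpha}{2}\mu_i^*$ form an edge of $G$. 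The counting step is exactly the inequality
\[
 1 - \frac{\alpha(1+\gamma)}{1-\epsilon} \ \ge\ \frac{n-1}{n}\cdot\frac{3}{2}\,\alpha(1+\gamma),
\]
which is an equality for the chosen $\alpha$ and $\gamma$, playing the role of the identity ``$25n^2-5n\ge 25n^2-5n$'' in \cref{lem:2/5-partition-bounds}.

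Finally I would wrap this in the analogue of \cref{alg:2/5-approximation}: initialise $\mu_i^* = m\,v_i(\{j\})$ for $j$ the $n$-th most valuable item of $i$ (an upper bound on $\mu_i$ by \cref{lem:mms-bounds}), run the lone-divider algorithm (\cref{alg:lone-divider}) with the above bundle builder, and whenever it fails for some agent $i$ replace $\mu_i^*$ by $\tfrac{1}{1+\gamma}\mu_i^*$. A failure for $i$ implies $\mu_i^* > (1+\gamma)\mu_i$ by the lemma above (the bundles allocated before $i$ is chosen as divider are independent and, by the envy-free matching, worth less than $\alpha\mu_i^* < \tfrac32\alpha\mu_i^*$ to $i$ in total item value, so they meet the $C_j$-hypothesis), hence $\mu_i^* > \mu_i$ is maintained and the allocation returned --- which exists by \cref{thr:sufficient-conditions} --- is $\alpha$-approximate MMS. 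Each $\mu_i^*$ is decreased at most $\log_{1+\gamma} m \le \tfrac{2\ln m}{\gamma}$ times, polynomial by condition~(1), and the remaining steps are polynomial exactly as in \cref{thr:2/5-poly}; the degenerate cases $n=1$ and $\mu_i = 0$ are handled verbatim. I expect the only real work to be confirming that the displayed inequality is tight for the stated $\alpha$ and $\gamma$ and keeping the $\epsilon \le 1/3$ and $\epsilon > 1/3$ cases of phase two apart; everything else is a rescaling of the $2/5$ argument.
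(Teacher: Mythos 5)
Your proposal is correct and follows essentially the same route as the paper's appendix proof: the same $\alpha$-rescaled two-phase bundle builder with high-value threshold $\tfrac{\alpha}{2}\mu_i^*$, the same counting argument against an MMS partition with the $C_j$-bound $\tfrac{3}{2}\alpha\mu_i^*$, the same case split between $\epsilon \le 1/3$ (oracle cannot drop either item of an independent pair) and the cardinality-two independence check, and the same estimate-lowering wrapper whose number of adjustments is made polynomial via condition (1). The only immaterial difference is that your tolerance $\gamma = \frac{3(1-\epsilon)}{2n+3(n-1)(1-\epsilon)}$ is chosen to make the counting inequality exactly tight, whereas the paper uses the slightly smaller $\frac{3-3\epsilon}{5n-3n\epsilon+3\epsilon+3}$ and leaves slack ($18 \ge 18\epsilon$); both choices are valid since each deficient bundle strictly exceeds the removal threshold.
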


\section{Entitled Hereditary Set System Valuations}\label{sec:entitlement}

Hereditary set system valuations require that the same hereditary set system is
used for every agent's valuation function. In this section, we consider a
natural relaxation of this requirement, that encapsulates additional real-world
settings, e.g., budget constraints. Specifically, we consider cases in which
agent's valuations are not required to be based on the same hereditary set
system. Instead, there is a separate hereditary set system $H_i = (M,
\mathcal{F}_i)$ for each agent $i$. The only requirements placed on the choice
of hereditary set systems is that there exists an ordering of the agents, $a_1,
\dots, a_n$, such that $\mathcal{F}_{a_1} \subseteq \mathcal{F}_{a_2} \subseteq
\dots \subseteq \mathcal{F}_{a_n}$. Conceptually, this means that some agents
are allowed to derive value from additional subsets of items. We refer to this
type of valuations by \emph{entitled hereditary set system valuations}, as some
agents are more entitled in their choice of bundles.

We now show that our existence and approximation results from hereditary set
system valuations hold even under entitled hereditary set system valuations. The
extension follows from the simple observation that the bundles allocated in the
lone divider algorithm can be independent for every remaining agent, if whenever
the lone divider $i$ is chosen, $i$ is selected as the remaining agent that
appears at the earliest point in the agent ordering.

\begin{lemma}\label{lem:existence-entitlement}
    A ($n/(2n - 1)$)-approximate MMS allocation always exists under entitled
    hereditary set system valuations.
\end{lemma}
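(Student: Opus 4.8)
The plan is to adapt the proof of \cref{thr:existence} (and its refinement \cref{lem:existence}) to the entitled setting, using the observation already stated in the text: whenever the lone divider algorithm selects a divider, it should pick the remaining agent who appears earliest in the ordering $a_1, \dots, a_n$. First I would fix an ordering of the agents with $\mathcal{F}_{a_1} \subseteq \dots \subseteq \mathcal{F}_{a_n}$, and modify \cref{alg:lone-divider} so that line~\ref{alg:line:divider} always chooses, among the remaining agents, the one with the smallest index in this ordering, and so that line~\ref{alg:line:bundles} constructs bundles that are independent in $H_i$ for the chosen divider $i$. The key structural point is that if $i = a_p$ is selected as divider while some agent $a_q$ is still remaining, then $q \ge p$, so $\mathcal{F}_{a_p} \subseteq \mathcal{F}_{a_q}$; hence any bundle independent in $H_{a_p}$ is also independent in $H_{a_q}$. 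Consequently, for every allocated bundle $C$ and every remaining agent $i'$, we have $v_{i'}(C) = \sum_{j \in C} v_{i'j}$, exactly the property that drove the counting argument in \cref{thr:existence}.

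Next I would verify the two conditions of \cref{thr:sufficient-conditions} with $x_i = \alpha\mu_i$ for $\alpha = n/(2n-1)$ (equivalently, the weaker $\alpha = 1/2$ for the statement as phrased). Condition~\ref{item:base-requirement} is immediate from any MMS partition of $i$ in its own set system $H_i$, combined with \cref{lem:independent-equal-value} applied to $H_i$ to make the bundles independent without losing value. For condition~\ref{item:continued-requirement}, I would repeat the argument of \cref{thr:existence} verbatim: fix $k$ already-allocated bundles $C_1, \dots, C_k$; since the divider chosen at each earlier step was the earliest remaining agent and $i$ was still present then, $i$ comes no earlier in the ordering than any of those earlier dividers, so each $C_j$ is independent in $H_i$, giving $\sum_{j=1}^k v_i(C_j) = \sum_{j \in M \setminus M'} v_{ij}$. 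Taking an MMS partition $P$ of $i$ in $H_i$ and the set $P^*$ of its bundles whose value survives above $x_i$ after removing $M \setminus M'$, the same chain of inequalities shows that fewer than $n-k$ surviving bundles would force $k x_i > (k+1)(\mu_i - x_i)$, i.e. $\alpha > (k+1)/(2k+1)$, which is excluded for $\alpha \le n/(2n-1)$ since $k \le n-1$.

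The main obstacle — and it is a conceptual rather than a technical one — is making precise that the chosen divider is indeed "no earlier in the ordering" than all previously chosen dividers, so that the independence-transfer property applies to every already-allocated bundle, not just the most recent. This needs a short induction on the iterations of the while-loop: since dividers are always picked as the minimum-index remaining agent, and removing matched agents never resurrects a smaller-index agent, the sequence of divider indices is nondecreasing; thus if $i$ is the divider at some iteration, every agent matched (and hence every bundle allocated) in a strictly earlier iteration had a divider of index $\le$ index of $i$, while every bundle allocated \emph{in} the current iteration is independent in $H_i$ by construction. Once this monotonicity is in hand, the counting argument transfers with no further changes, and \cref{thr:sufficient-conditions} yields the claimed $(n/(2n-1))$-approximate MMS allocation, which in particular is $1/2$-approximate.
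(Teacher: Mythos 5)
Your proposal is correct and follows essentially the same route as the paper: choose as lone divider the remaining agent with the most restrictive hereditary set system (earliest in the ordering), observe that all previously allocated bundles are then independent in $H_i$ for every remaining agent $i$, and rerun the counting argument of \cref{thr:existence} together with the $k \le n-1$ refinement of \cref{lem:existence}. The monotonicity argument you flag as the "main obstacle" is exactly the observation the paper relies on (stated there in one line), so your write-up is simply a more detailed version of the paper's proof.
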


\begin{proof}
    Modify line~\ref{alg:line:divider} of \cref{alg:lone-divider} so that the
    agent chosen is the remaining agent with the most restrictive hereditary set
    system. Then, at any point during the allocation of the algorithm, the
    already allocated bundles are independent for every remaining agent, and the
    proof of \cref{thr:existence} can be repeated with the additional
    observations from \cref{lem:existence}.
\end{proof}

Following the same method for choosing the lone divider, it can easily be shown
that also the approximation results hold if the ordering of the agents is known.
However, the results can be shown to hold even when the ordering is not given
and valuations can only be accessed through an approximate valuation oracle. The
idea is to partially determine the agent order through the error bound of the
valuation oracle. Notice that if for some agent $i'$ and bundle $B$ it holds
that $v_{i'}(v_{i'}^o(B)) < (1 - \epsilon)\sum_{j \in B} v_{i'}(\{j\})$, then
$B$ is not independent for $i'$. Thus, whenever a lone divider has created a set
of bundles, it can be checked if this condition holds for some remaining agent
$i'$ and created bundle $B_j$. In the case that the condition holds for some
agent $i'$, we know that $i'$ precedes $i$ in the ordering, as $B_j$ is
independent for $i$. Agent $i$ can then be replaced by $i'$ as the lone divider.
If the condition does not hold for any agent $i'$, it provides an upper bound on
the combined value of the items in $B$, which can be to satisfy the $\sum_{j'
\in C_j} v_{ij'} \le (3/5)\mu_i^*$ bound of \cref{lem:2/5-partition-bounds}.
These changes to the lone divider algorithm are shown in
\cref{alg:lone-divider-entitled}.

\begin{algorithm}[t]
    \caption{Modified lone divider algorithm for entitled hereditary set system
    valuations}
    \label{alg:lone-divider-entitled}
    \begin{algorithmic}[1]
        \REQUIRE A set of agents, $N$, a set of items, $M$, approximate
        valuation oracles, $v_i^o$, with error bound $\epsilon$, and values
        $x_i$ for $i \in N$.  \ENSURE A (partial) allocation $A$ with $v_i(A_i) \ge x_i$
        \WHILE{$N \neq \emptyset$}
        \STATE select some $i \in N$
        \STATE create $|N|$ pairwise disjoint bundles, $B_1, \dots, B_{|N|}$,
        where every $B_j$ is independent in $H_i$ and has $v_i(B_j) \ge
        x_i$\label{alg:line:divider-entitled}
       \IF{$v_{i'}(v_{i'}^o(B_j)) < (1 - \epsilon)\sum_{j' \in B_j} v_{i'j'}$
        for some $i' \in N$ and $B_j$}
            \STATE let $i = i'$
            \STATE go to line~\ref{alg:line:divider-entitled}
        \ENDIF
        \STATE find a maximal-cardinality envy-free matching, $M_{EF}$, with
        regards to $N$, in the bipartite graph $G = (N \cup (B_1, \dots,
        B_{|N|}), \{(i', B_j) : v_{i'}(v_{i'}^o(B_j)) \ge
        x_{i'} \text { if } i' \neq i \text{, otherwise } v_{i}(B_j) \ge x_i\})$
        \STATE allocate to each matched agent $i$ in $M_{EF}$ their matched
        bundle in $M_{EF}$
        \STATE update $N$ and $M$ by removing matched agents and items from
        matched bundles in $M_{EF}$
        \ENDWHILE
    \end{algorithmic}%
\end{algorithm}

\begin{lemma}\label{lem:poly-entitlement}
    A $2/5$-approximate MMS allocation can be found in polynomial time under
    entitled hereditary set system valuations when the ordering of the agents is
    unknown, given an approximate valuation oracle with error bound by $1/(n +
    1)$.
\end{lemma}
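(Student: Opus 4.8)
The plan is to prove \cref{lem:poly-entitlement} by mirroring the proof of \cref{thr:2/5-poly}, but running \cref{alg:2/5-approximation} with its lone divider step replaced by \cref{alg:lone-divider-entitled}, and with the bundles on each lone‑divider turn produced by \cref{alg:2/5-poly-partition} (where independence is now understood relative to the hereditary set system $H_i$ of whichever agent $i$ is the current divider). As in \cref{thr:2/5-poly}, I would first reduce to instances with $n\ge 2$ and $\mu_i>0$ for all $i$, initialise each $\mu_i^*$ via \cref{lem:mms-bounds}, and shrink $\mu_i^*$ by the factor $n/(n+1)$ whenever the lone divider step reports that agent $i$ failed to produce its bundles. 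Granted the two new facts below, the correctness argument (each agent ends with value at least $\tfrac25\mu_i^*>\tfrac25\mu_i$, using the invariant $\mu_i^*>\mu_i$) and the polynomial bound on the number of outer iterations carry over verbatim from \cref{thr:2/5-poly}, since those parts only invoke \cref{lem:2/5-partition-bounds,lem:mms-bounds} and the shrink rule.

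The first new ingredient is that the divider‑re‑selection loop of \cref{alg:lone-divider-entitled} terminates quickly and leaves the algorithm able to proceed. Each created bundle $B_j$ is independent in $H_i$ for the current candidate $i$, so if the test $v_{i'}(v_{i'}^o(B_j)) < (1-\epsilon)\sum_{j'\in B_j} v_{i'j'}$ fires for some remaining agent $i'$, then $B_j\notin\mathcal{F}_{i'}$ while $B_j\in\mathcal{F}_i$; since $\mathcal{F}_1,\dots,\mathcal{F}_n$ form a chain under inclusion, this forces $\mathcal{F}_{i'}\subsetneq\mathcal{F}_i$. Hence every replacement of the candidate divider moves strictly earlier in the (unknown) agent ordering, so at most $n-1$ replacements occur per turn, and each attempt—one call to \cref{alg:2/5-poly-partition}, polynomial by \cref{lem:2/5-poly-partition} as $\mu_i^*>0$, plus $O(n^2m)$ oracle queries to run the test on all remaining agents against all created bundles—is polynomial; together with the polynomial‑time envy‑free matching this makes each turn, and hence each full run, polynomial. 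Conversely, when the loop settles on a divider $i$, the test failed for every remaining agent $i''$ and every created bundle, so $v_{i''}(v_{i''}^o(B_j))\ge(1-\epsilon)\sum_{j'\in B_j}v_{i''j'}$; since $i$ is joined to all $|N|$ bundles, \cref{lem:envy-free-matching} still yields a non‑empty envy‑free matching containing $i$, so the turn removes at least one agent.

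The second new ingredient is verifying the hypothesis of \cref{lem:2/5-partition-bounds} on the already‑allocated bundles. Fix a bundle $C_j$ allocated on an earlier turn of the same run and a later divider $i$. When $C_j$ was allocated, $i$ was present and unmatched, so the envy‑free matching gives $v_i(v_i^o(C_j)) < x_i = \tfrac25\mu_i^*$ (with $\mu_i^*$ the estimate, fixed throughout that run), while the test having failed for $i$ at that turn gives $v_i(v_i^o(C_j)) \ge (1-\epsilon)\sum_{j'\in C_j}v_{ij'}$; using $\epsilon = 1/(n+1)$,
\[
    \sum_{j'\in C_j} v_{ij'} \le \frac{1}{1-\epsilon}\,v_i(v_i^o(C_j)) < \frac{n+1}{n}\cdot\frac{2}{5}\mu_i^* = \frac{2(n+1)}{5n}\mu_i^* \le \frac{3}{5}\mu_i^* \eqcomma
\]
where the last step uses $n\ge 2$. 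Thus the allocated bundles meet the $\tfrac35\mu_i^*$ precondition of \cref{lem:2/5-partition-bounds}, so that lemma applies as stated: the divider for $i$ can always build the required $n-\ell$ independent bundles of value $\ge\tfrac25\mu_i^*$ whenever $\mu_i^*$ is small enough relative to $\mu_i$, and the only way the lone divider step fails for $i$ is that $\mu_i^*$ is still too large and can safely be shrunk—exactly as in \cref{thr:2/5-poly}. I expect this bookkeeping to be the main obstacle: one must make the single oracle‑based non‑independence test do double duty, simultaneously uncovering enough of the hidden ordering to keep every allocated bundle independent for all later dividers (so the matching and \cref{lem:existence-entitlement}‑style reasoning stay valid) and delivering, for the agents it does not flag, the tight $\tfrac35\mu_i^*$ weight bound needed by \cref{lem:2/5-partition-bounds}. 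Once both are in place, the proof of \cref{thr:2/5-poly} transfers with no further changes.
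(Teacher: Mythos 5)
Your proposal is correct and follows essentially the same route as the paper: the same $\sum_{j'\in C_j} v_{ij'} \le \frac{1}{1-\epsilon}v_i(v_i^o(C_j)) < \frac{n+1}{n}\cdot\frac{2}{5}\mu_i^* \le \frac{3}{5}\mu_i^*$ bound (combining the non-firing oracle test with envy-freeness of the matching) to meet the precondition of \cref{lem:2/5-partition-bounds}, and the same chain argument that each divider replacement strictly shrinks $\mathcal{F}_i$, giving at most $n-1$ replacements per iteration, with the rest inherited verbatim from \cref{thr:2/5-poly}. No substantive differences from the paper's proof.
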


\begin{proof}
    In the same way as for \cref{thr:2/5-poly}, we can assume that $\mu_i > 0$
    for every agent and that there is at least two agents. We wish to show that
    \cref{alg:2/5-approximation}, with the modified lone divider algorithm of
    \cref{alg:lone-divider-entitled}, finds a $2/5$-approximate MMS allocation
    in polynomial time.

    Note that at any stage of \cref{alg:lone-divider-entitled}, we have the
    following bound on the value of the items in any already allocated bundle
    $C_j$ for any remaining agent $i$.
    \[
        \sum_{j' \in C_j} v_{ij'} \le
        \frac{1}{1 - \frac{1}{n + 1}}v_{i}(v_{i}^o(C_j)) <
        \frac{1}{1 - \frac{1}{n + 1}}\cdot\frac{2}{5}\mu_{i}^* \le
        \frac{n + 1}{n}\cdot\frac{2}{5}\mu_{i}^* \le
        \frac{3}{2}\cdot\frac{2}{5}\mu_{i}^* = \frac{3}{5}\mu_{i}^*
    \]
    The second to last step follows from the fact that $n \ge 2$. This bound on
    the item values in $C_j$ is no worse than the $(3/5)\mu_i^*$ bound required
    by \Cref{lem:2/5-partition-bounds}. Thus, following the same steps as in the
    proof for \cref{thr:2/5-poly} yields the conclusion that a $2/5$-approximate
    MMS allocation is returned from \cref{alg:2/5-approximation} using
    \cref{alg:lone-divider-entitled} for the lone divider step.

    It remains to show that the running time of the algorithm is polynomial.
    First, note that in each iteration of \cref{alg:lone-divider-entitled}, the
    lone divider is changed at most $n - 1$ times. Indeed, every time the lone
    divider is changed from some agent $i$ to some agent $i'$, it holds that
    $v_{i'}(v_{i'}^o(B_j)) < (1 - 1/(n + 1))\sum_{j' \in B_j} v_{i'j'}$ for some
    bundle $B_j$ that is independent for agent $i$. By the definition of
    $v_{i'}^o$, $B_j$ can not be independent for $i'$, and thus
    $\mathcal{F}_{i'} \subset \mathcal{F}_{i}$. Since the hereditary set system
    of the lone divider becomes strictly more restrictive each time the lone
    divider is changed, each agent can be assigned as the lone divider at most
    once in each iteration. Thus, \cref{alg:lone-divider-entitled} runs in
    polynomial time under the same arguments as for \cref{alg:lone-divider}.
    Moreover, since \cref{lem:2/5-partition-bounds} holds, the number of
    iterations of the while-loop in \cref{alg:2/5-approximation} can be shown to
    be polynomial in $n$ and $m$ in the same way as in the proof of
    \cref{thr:2/5-poly}.
\end{proof}

A further natural extension of the class of valuation functions is to relax the
correlation requirement between the hereditary set systems. We consider now the
class of valuation functions in which there is no correlation between the
hereditary set systems that the valuation functions are based on. We will refer
to this class of valuations by \emph{asymmetric hereditary set system
valuations}. Note that asymmetric hereditary set system valuations are a proper
subclass of XOS valuations.

Unfortunately, our results do not easily extend to asymmetric hereditary set
system valuations. In fact, it can be shown that for any number of agents $n \ge
2$ and $\epsilon > 0$, there exists an instance with $n$ agents such that there
is no $(1/2 + \epsilon)$-approximate MMS allocation. A stark contrast to
\cref{lem:existence,lem:existence-entitlement}, which guarantee that for any
number of agents $n$ there exists an $\epsilon > 0$, such that $(1/2 +
\epsilon)$-approximate MMS allocations exist for every instance with $n$ agents
under (entitled) hereditary set system valuations. Our nonexistence proof is a
modification of a proof for an equivalent result for general XOS valuations,
given by \citet{Ghodsi:22}. Their proof uses valuation functions that are
almost, but not entirely asymmetric hereditary set system valuations.

\begin{lemma}
    For any $n \ge 2$ and $\epsilon > 0$, there exists an instance with $n$
    agents and asymmetric hereditary set system valuations with binary marginal
    gains, such that no $(1/2 + \epsilon)$-approximate MMS allocation exists.
\end{lemma}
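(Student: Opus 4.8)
I would build on the known XOS lower‑bound construction of \citet{Ghodsi:22} and show that, up to small modifications, the valuation functions it uses can be realized as asymmetric hereditary set system valuations with binary marginal gains. The goal is to exhibit, for each $n\ge 2$ and $\epsilon>0$, an instance in which every agent has MMS equal to some integer $t$, but no allocation can give all $n$ agents a bundle of value at least $\lceil(1/2+\epsilon)t\rceil$. Since the valuations will have binary marginal gains (hence integer values), it suffices to make $\lceil(1/2+\epsilon)t\rceil$ strictly exceed $t/2$, i.e.\ to pick $t$ even and force the need for value $t/2+1$; a counting argument then shows not all $n$ agents can be simultaneously satisfied.

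First I would fix the underlying item set and, for each agent $i$, describe a hereditary set system $H_i=(M,\mathcal F_i)$ together with a $0/1$ item weighting $v_{ij}$. The key design constraint is twofold: (a) each agent must individually be able to partition $M$ into $n$ independent bundles each of value $t$ (so that $\mu_i\ge t$), while no partition does better (so $\mu_i=t$, using the averaging bound $\mu_i\le(\sum_j v_{ij})/n$ as in the proof of \cref{thr:upper-bound}); and (b) the ``good'' bundles of different agents must overlap in a combinatorially rigid way, so that if enough agents each grab a value‑$(t/2+1)$ bundle, the bundles cannot be pairwise disjoint. The cleanest way to get (b) is to arrange that any bundle of value $>t/2$ for agent $i$ must contain a specific ``core'' set of items, and to let these cores be shared across agents so that the cores of any two agents intersect — exactly the mechanism used in the XOS construction, where a single critical item (or small critical block) must lie in any high‑value bundle of every agent. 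Making the set systems \emph{asymmetric} is what gives the freedom to force different critical blocks for different agents, which is impossible under a single shared hereditary set system — this is precisely the contrast with \cref{lem:existence,lem:existence-entitlement}.

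The main obstacle I anticipate is verifying part (a): hereditary set system valuations are more restrictive than general XOS, so I must check that the family $\mathcal F_i$ I write down is genuinely downward closed \emph{and} still admits an MMS partition into $n$ bundles of value $t$ while capping the value of any bundle that avoids agent $i$'s critical block at $\le t/2$. In the original XOS proof the valuation is specified directly as a max of additive functions and need not come from a hereditary family, so some of the additive ``clauses'' may have to be replaced by the downward closure of a suitable collection of maximal independent sets; I would need to confirm that taking this downward closure does not accidentally create a new large independent set that breaks the $\le t/2$ cap. Once $\mathcal F_i$ is pinned down, computing $v_i$ on the relevant bundles reduces to a finite check, and the binary‑marginal‑gains property is immediate from $v_{ij}\in\{0,1\}$ together with \cref{lem:independent-equal-value}.

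Finally I would assemble the nonexistence argument: suppose an $(1/2+\epsilon)$‑approximate MMS allocation $A$ exists. Each $A_i$ then has $v_i(A_i)\ge t/2+1$, hence (by the rigidity in (b)) $A_i$ contains agent $i$'s critical block $K_i$. Since the allocation is a partition, the sets $A_i$ are pairwise disjoint, so the $K_i$ must be pairwise disjoint too; but the construction guarantees $K_i\cap K_j\ne\emptyset$ for some pair $i\ne j$ (indeed, one can force a common critical item shared by all agents, as in \citet{Ghodsi:22}), a contradiction. This yields the claimed impossibility for every $n\ge2$ and every $\epsilon>0$, matching the $1/2$ lower bound of \cref{lem:existence,lem:existence-entitlement} and showing it is tight once the correlation between the hereditary set systems is dropped.
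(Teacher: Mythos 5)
There is a genuine gap in the core mechanism you propose. You require (a) that each agent $i$ has $\mu_i = t$, witnessed by a partition of $M$ into $n$ independent bundles each of value $t$, and (b) that every bundle of value $> t/2$ for agent $i$ contains a fixed critical block $K_i$, with the $K_i$ pairwise intersecting (or even a single critical item common to all agents). These two requirements are mutually exclusive: the $n \ge 2$ bundles of agent $i$'s own MMS partition are pairwise disjoint and each has value $t > t/2$, so each would have to contain $K_i$, forcing $K_i = \emptyset$. (For the same reason, a common critical item shared by all agents cannot exist in any instance where every agent's MMS is attained by $n\ge 2$ disjoint bundles.) With $K_i = \emptyset$ the final contradiction argument --- ``the $A_i$ are disjoint, hence the $K_i$ are disjoint, contradicting $K_i \cap K_j \neq \emptyset$'' --- collapses. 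You flag the verification of (a) as the main obstacle but treat it as a bookkeeping issue about downward closure; in fact it is the rigidity mechanism itself that cannot survive (a).

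The fix is to replace ``one fixed critical block per agent'' by ``one of $n$ pairwise disjoint critical pairs per agent,'' and to replace pairwise intersection of blocks by a counting argument. This is what the paper does: take $m = 2n$ items, all item values $1$, and two hereditary set systems whose nontrivial independent sets are the two perfect matchings of a cycle on the items, e.g.\ the pairs $\{1,2\},\{3,4\},\dots,\{2n-1,2n\}$ for $H_1$ and the shifted pairs $\{2,3\},\{4,5\},\dots,\{2n,1\}$ for $H_2$. Give $n-1$ agents valuations based on $H_1$ and one agent a valuation based on $H_2$; every agent has $\mu_i = 2$ (her own matching gives $n$ disjoint bundles of value $2$, and the averaging bound gives $\mu_i \le m/n = 2$), and by integrality a $(1/2+\epsilon)$-approximation forces value $\ge 2$ for everyone. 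Any value-$2$ bundle for an $H_1$-agent must contain one of the $n$ pairs of the first matching, so satisfying the $n-1$ such agents leaves at most one matching-$1$ pair for the last agent, and that pair is not independent in $H_2$, giving her value at most $1$. Your high-level plan of adapting the construction of \citet{Ghodsi:22} is the same as the paper's, but without this change from a single critical block to a family of disjoint critical pairs (and from an intersection argument to a counting argument), the proof as proposed does not go through.
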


\begin{proof}
    For a given number of agents $n \ge 2$, we construct an instance with $m =
    2n$ items. For each agent $i$ and item $j$, let $v_{ij} = 1$.  Let $H_1 =
    (M, \{S \subset M : |S| = 1 \text{ or } S = \{2i, 2i + 1\}, i \in
    \mathbb{N}\})$.  Similarly, let $H_2 = (M, \{S \subset M : |S| = 1 \text{ or
    } S = \{2i + 1, 2i + 2\}, i \in \mathbb{N} \text{ or } S = \{1, 2n\})$.
    Notice that any agent with valuation function based on $H_1$ or $H_2$ has
    $\mu_i = 2$, as the partitions $(\{1, 2\}, \dots, \{2n - 1, 2n\})$ and
    $(\{2, 3\}, \dots, \{2n - 2, 2n - 1\}, \{1, 2n\})$ contain $n$ bundles
    valued at $2$ for $H_1$ and $H_2$, respectively. Furthermore, since $v_{ij} =
    1$, $\mu_i \le m/n = 2$.

    Let $n - 1$ agents have valuation functions defined by $H_1$ and the
    remaining agent, $i$, have a valuation function defined by $H_2$. Since
    valuations are integer, any ($1/2 + \epsilon$)-approximate MMS allocation
    requires each agent to receive a bundle with value at least $2$. For the
    agents based on $H_1$, this requires a bundle that contains some subset
    $\{1, 2\}, \dots, \{2n - 1, 2n\}$. Thus, providing each of the $n - 1$
    agents based on $H_1$ with value at least $2$, leaves at most two items $2j,
    2j + 1$ for the last agent. Note that $\{2j, 2j + 1\}$ is not independent in
    $H_2$ for any $j$. Thus, $v_i(\{2j, 2j + 1\}) = 1$. Consequently, it is
    impossible to provide every agent with a bundle with value greater than $1$,
    and no ($1/2 + \epsilon$)-approximate MMS allocation can exist.
\end{proof}

\section{Constrained Fair Allocation}\label{sec:constraints}

In this section we highlight how the improved results for hereditary set system
valuations improve existence and approximation results for a range of
constrained fair allocation problems. We are interested in constraints that
dictate, independently of the overall allocation, which bundles an agent may
receive---a bundle that satisfies such a criteria for an agent is referred to as
\emph{feasible} for the agent. Particularly, we are interested in constraints in
which the set of feasible bundles for each agent forms a hereditary set system,
i.e., the types of constraints where every subset of a feasible bundle is
feasible. Moreover, for our results to be applicable we must limit our
consideration further, to constraints in which the set of feasible bundles is
either the same for each agent (symmetric constraints) or where the hereditary
set systems satisfy the same requirements as for entitled hereditary set system
valuations. That is, there is some ordering of the agents, $a_1, \dots, a_n$,
such that whenever $i < j$, every feasible bundle for $a_i$ is also feasible for
$a_j$.

Constraints differ from hereditary set system valuations in a fundamental way.
Under hereditary set system valuations, there is no requirement that each agent
receives an independent bundle. However, when constraints are enforced each
agent must receive a feasible bundle. Fortunately, as the bundles constructed in
our lone divider algorithms, except under entitled hereditary set system
valuations, are independent, each of the earlier results can produce instead a
partial allocation, $A = (A_1, \dots, A_n)$, with each $A_i \in A$ independent
and $v_i(A_i) \ge \alpha\mu_i$. For entitled hereditary set system valuations,
this can also be achieved by slight modification of
\cref{alg:lone-divider-entitled}.  Notice that the edges in the graph $G$ for
the envy-free matching are constructed using the value of the bundle returned by
the approximate valuation oracle of each agent. Thus, if this subset of items is
allocated to the matched agent, rather than bundle created by the lone divider,
the matched agent receives an independent bundle with sufficient value.

\begin{observation}\label{obs:partial}
    The results of
    \cref{thr:existence,thr:2/5-poly,thr:inaccurate-oracle,lem:existence-entitlement,lem:poly-entitlement}
    hold also for partial allocations in which every agent receives an
    independent bundle.
\end{observation}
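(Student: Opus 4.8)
The plan is to revisit each of the five cited results and check that the (partial) allocation produced in the corresponding proof already assigns independent bundles, or can be made to do so with only a cosmetic change. The key observation, already flagged in the surrounding text, is that in every one of our lone divider variants the bundles created on the bundle-creation line are independent in the relevant hereditary set system: in the proof of \cref{thr:existence} and \cref{lem:existence-entitlement} this is enforced explicitly (recall \cref{lem:independent-equal-value} shows this costs nothing), and in \cref{thr:2/5-poly} (via \cref{alg:2/5-poly-partition}) and \cref{thr:inaccurate-oracle} every bundle returned by \cref{alg:2/5-poly-partition} was shown in the proof of \cref{lem:2/5-partition-bounds} to be independent. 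Since a bundle allocated to an agent in the lone divider algorithm is exactly one of these constructed bundles, the final allocation $A$ has $A_i \in \mathcal{F}$ (resp. $A_i \in \mathcal{F}_i$) for every matched agent $i$, and the remaining unmatched items are simply not allocated — so $A$ is a partial allocation with every bundle independent, and the approximation guarantee $v_i(A_i) \ge \alpha\mu_i$ is unchanged.

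The one case needing slightly more care is \cref{lem:poly-entitlement}, which uses the modified algorithm \cref{alg:lone-divider-entitled}. Here the lone divider $i$ creates bundles $B_j$ independent in $H_i$, but a bundle $B_j$ matched to a \emph{different} agent $i'$ need not lie in $\mathcal{F}_{i'}$. The fix, which I would spell out, is the one indicated in the discussion preceding the observation: the edges of the matching graph $G$ in \cref{alg:lone-divider-entitled} are already defined using $v_{i'}(v_{i'}^o(B_j))$, the value of the independent subset $v_{i'}^o(B_j) \subseteq B_j$ returned by $i'$'s oracle. So when a matched agent $i' \neq i$ receives $B_j$, we instead allocate the independent bundle $v_{i'}^o(B_j)$; this still has $v_{i'}(v_{i'}^o(B_j)) \ge x_{i'} = (2/5)\mu_{i'}^*$ by the edge condition, and the leftover items of $B_j$ become unallocated. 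For the lone divider $i$ itself we allocate $v_i^o(B_j)$ (or, since $B_j \in \mathcal{F}_i$ already, $B_j$ itself). One must then check that removing these independent bundles, rather than the full $B_j$'s, does not disturb the accounting in the proof of \cref{lem:poly-entitlement} — but it does not, since the $(3/5)\mu_i^*$ bound there is derived from $v_i(v_i^o(C_j))$, i.e. from exactly the independent subset we now allocate, not from $\sum_{j' \in C_j} v_{ij'}$.

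I expect the main (and only real) obstacle to be this bookkeeping in the entitled, unknown-ordering case: verifying that allocating $v_{i'}^o(B_j)$ in place of $B_j$ is consistent both with the feasibility/approximation guarantee and with the item-value bound feeding into \cref{lem:2/5-partition-bounds}, and that the running-time argument (number of lone-divider reassignments, number of while-loop iterations) is untouched. For the other four results the argument is essentially a one-line appeal to the fact that the constructed bundles were independent by design, so I would state those together and then devote the bulk of the proof to the \cref{alg:lone-divider-entitled} modification.
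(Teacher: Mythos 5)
Your proposal is correct and matches the paper's own argument: the paper likewise observes that the bundles constructed in the lone divider variants are independent by design (for the entitled existence result, independence for the recipient follows from choosing the most restrictive remaining agent as divider), and handles \cref{lem:poly-entitlement} exactly as you do, by allocating to each matched agent the independent subset returned by her approximate valuation oracle, whose value already defines the edges of the envy-free matching graph. Your extra check that this substitution does not disturb the $(3/5)\mu_i^*$ accounting in \cref{lem:2/5-partition-bounds} is a valid (and slightly more explicit) elaboration of what the paper leaves implicit.
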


Note that some constraints require complete allocations. Our results do not
necessarily extend to this requirement, as it might for some constraints not be
possible to produce a complete allocation from a partial allocation without
reallocating already allocated items. For these constraints, a different
approach must be taken. However, if it is always possible to extend a partial
allocation to a complete allocation by simply allocating the unallocated items
in some way, the result are still applicable.  To this point, we note that
matroid constraints, in which the set of feasible bundles is the independent
sets of some given matroid, are often defined to require complete allocations.
Thus, our results on hereditary set system valuations are only applicable to
certain types of matroids that allow any partial feasible allocation to be
extended to a complete allocation, such as partition matroids---or if the
completeness requirement is foregone. Note that under matroid constraints, an
exact valuation oracle can easily be constructed from an independence oracle.

We start by considering conflicting items constraints \cite{Chiarelli:23a}.
Under conflicting items constraints, each instance is furnished by a graph $G =
(M, E)$. A bundle is feasible if it contains no pair of neighbouring items in
the graph $G$, i.e., the set of feasible bundles are the independent sets of
$G$. As any subset of an independent set in a graph is also an independent set,
the feasible bundles form a hereditary set system. Note that allocations are
required to be complete under conflicting items constraints. However, if the
maximum degree, $\Delta(G)$, of any vertex in $G$ is strictly smaller than $n$,
a complete allocation can be obtained from a partial allocation by greedily
allocating any unallocated item to an agent without any conflicting items. Under
this mild assumption, \citet{Hummel:22} showed that $1/3$-approximate MMS
allocations always exist for additive valuations.
\Cref{thr:existence,obs:partial} allow this result to be improved to $1/2$.

\begin{lemma}
    For an instance of fair allocation of conflicting items with additive
    valuations and $n > \Delta(G)$, there exists a complete $1/2$-approximate
    MMS allocations.
\end{lemma}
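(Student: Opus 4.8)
The plan is to reduce the conflicting items instance to a hereditary set system valuation instance and then invoke \cref{thr:existence} together with \cref{obs:partial}. First I would observe that, given the conflict graph $G = (M, E)$, the collection $\mathcal{F}$ of independent sets of $G$ forms a hereditary set system $H = (M, \mathcal{F})$, since any subset of an independent set is independent. For each agent $i$ with additive valuation $v_i$, define item values $v_{ij} = v_i(\{j\})$ and let $\hat v_i$ be the hereditary set system valuation based on $H$ with these item values. The key point is that on feasible (i.e. independent) bundles the two valuations agree: for any independent $B$, $\hat v_i(B) = \sum_{j \in B} v_{ij} = v_i(B)$ by additivity. I would then note that the maximin share is the same under both valuations: an MMS partition only ever cares about the values of its bundles, and since in the worst case an agent can always restrict each bundle to an independent subset of equal-or-greater $\hat v_i$-value, one checks $\mu_i^n$ computed for $\hat v_i$ equals the MMS computed for $v_i$ over partitions into feasible bundles. (In fact, for the purposes of this lemma it suffices that the MMS under $\hat v_i$ is at least the ``feasible MMS'' under $v_i$, which is the relevant benchmark.)

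Next I would apply \cref{thr:existence} to the hereditary set system instance with valuations $\hat v_i$, obtaining a $1/2$-approximate MMS allocation; by \cref{obs:partial}, this can be taken to be a partial allocation $A = (A_1, \dots, A_n)$ in which every $A_i$ is independent in $H$, i.e. feasible under the conflicting items constraint, with $\hat v_i(A_i) \ge \tfrac{1}{2}\mu_i$. Since each $A_i$ is independent, $v_i(A_i) = \hat v_i(A_i) \ge \tfrac{1}{2}\mu_i$.

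The remaining work is to extend this partial allocation to a complete one without destroying feasibility or the value guarantee. Here I would use the hypothesis $n > \Delta(G)$: process the unallocated items one at a time; for an unallocated item $j$, it has at most $\Delta(G) \le n - 1$ neighbours in $G$, so among the $n$ agents there is at least one agent $i$ whose current bundle contains none of $j$'s neighbours, and giving $j$ to that agent keeps $A_i$ independent. Since valuations are monotone (additive, nonnegative), this only increases each agent's value, so the final complete allocation still satisfies $v_i(A_i) \ge \tfrac{1}{2}\mu_i$ for every $i$, and it is feasible under the conflicting items constraint. I expect the main (minor) obstacle to be stating cleanly why the MMS benchmark is unchanged when passing from $v_i$ to $\hat v_i$ — one must be careful that the relevant benchmark is the MMS over partitions into \emph{feasible} bundles, and argue $\mu_i(\hat v_i)$ coincides with it — but this is exactly the kind of bookkeeping already handled implicitly by the machinery of \cref{lem:independent-equal-value}, so it should go through without difficulty.
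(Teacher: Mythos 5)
Your proposal is correct and follows essentially the same route as the paper: reduce to a hereditary set system valuation instance via the independent sets of $G$, note that the MMS benchmark can only increase (the paper shows $\mu_i' \ge \mu_i$, which is exactly the direction you identify as sufficient), apply \cref{thr:existence} with \cref{obs:partial} to get a feasible partial allocation, and greedily complete it using $n > \Delta(G)$. No substantive differences.
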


\begin{proof}
    Let $H = (M, \{S \subseteq M : S \text{ independent in } G\})$ be a
    hereditary set system. For each agent $i \in N$, let $v'_i$ be the valuation
    function given by the item values of $v_i$ and the hereditary set system
    $H$. Consider the fair allocation instance with hereditary set system
    valuations given by $N$, $M$ and the valuation functions $v_i'$. Let $\mu_i$
    and $\mu_i'$ be the MMS of agent $i$ in the original instance and the
    instance with hereditary set system valuations, respectively. Then, $\mu_i'
    \ge \mu_i$, as $v_i'(B) = v_i(B)$ for any feasible bundle $B$.

    By \cref{thr:existence,obs:partial}, there exists a partial
    allocation $A = (A_1, \dots, A_n)$ such that every $A_i$ is independent in
    $H$ and $v_i'(A_i) \ge \mu_i'/2$. Since $A_i$ is independent in $H$, $A_i$
    is also independent in $G$ and it holds that $v_i(A_i) = v_i'(A_i) \ge
    \mu_i'/2 \ge \mu_i/2$. The partial allocation $A$ can be extended to a
    complete allocation by in turn allocating each unallocated item $j$ to some
    agent with no item in conflict with $j$. At least one such agent must exist,
    as there are at most $\Delta(G) < n$ items in conflict with $j$. Thus, a
    complete $1/2$-approximate MMS allocation exists.
\end{proof}

As feasible bundles are independent sets under conflicting items constraints,
approximating $v_i'$ is equivalent to solving the weighted independent set
problem. Thus, our results do not generally improve the existing
$1/\Delta(G)$-approximation result of \citet{Hummel:22}. However, for classes of
graphs in which the weighted independent set problem can be solved in polynomial
time, a $2/5$-approximation result can be shown to hold in the same way as the
existence result.

Next, we consider the case of interval scheduling constraints, introduced by
\citet{Li:21f}. Under interval scheduling constraints, each item $j \in M$ is
endowed by a value $v_{ij}$, a \emph{processing time} (or \emph{duration}) $p_j
\in \mathbb{N}^+$, a \emph{release time} $r_j \in \mathbb{N}^+$ and a
\emph{deadline} $d_j \in \mathbb{N}^+$, where $d_j \ge r_j + p_j - 1$. A set of
items $S$ is \emph{feasible}, if it is possible to schedule, without overlap,
each item $j \in S$ to $p_j$ consecutive time periods in $[r_j, d_j]$, where a
time period is given by $[t, t + 1)$ for any $t \in \mathbb{N}^+$. That is,
$v_i(B)$ is given by
\[v_i(B) = \max_{\text{ feasible } B' \subseteq B}\sum_{j \in B'} v_{ij}\eqdot\]
It can easily be verified that the valuation functions under interval scheduling
constraints are hereditary set system valuations, as if a set $S$ is feasible,
then any subset $S' \subseteq S$ must also be feasible by simply leaving the
items in $S \setminus S'$ out of the schedule. We note that \citeauthor{Li:21f}
concentrated on finding partial allocations in which each bundle is feasible,
referred to as \emph{feasible allocations} (or \emph{feasible schedules}).

Making use of \cref{thr:existence,thr:inaccurate-oracle}, we can improve both
the existence and approximation guarantees of \citeauthor{Li:21f} ($1/3$ and
$0.24 - \epsilon$, respectively). Their approximation result is dependent on how
well the valuation functions can be approximated in polynomial time. However,
their approximation guarantee is $\beta/(2 + \beta)$, as opposed to the
$\beta/(1 + (3/2)\beta)$ guarantee of \cref{thr:inaccurate-oracle}, where
$\beta$ is the approximation ratio for the valuation function,. Since $\beta \le
1$, the latter provides a better approximation for any $\beta > 0$. Note that
\citeauthor{Li:21f} base their $0.24$ result on $\beta = 0.6448$, attributed to
a result of \citet{Im:20}. This seems to be an oversight by the authors, as the
$0.6448$ result of \citeauthor{Im:20} is for the unweighted case of the related
machine scheduling problem, while the valuation functions are equivalent to the
weighted case. Thus, our result instead makes use of $\beta = 1/2$, from a
result of \citet{Bar-Noy:99}.

\begin{lemma}
    For an instance of fair allocation with interval scheduling constraints,
    there always exists a feasible $1/2$-approximate MMS allocation and a
    feasible $2/7$-approximate MMS allocation can be found in polynomial time.
\end{lemma}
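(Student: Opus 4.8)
The plan is to reduce both claims to results already established for hereditary set system valuations. As observed just above the statement, the valuation functions arising from interval scheduling constraints are hereditary set system valuations: taking $H = (M, \mathcal{F})$ with $\mathcal{F}$ the collection of feasible item sets, each $v_i$ is based on $H$, and an agent's MMS under the scheduling constraint coincides with her MMS for the corresponding hereditary set system valuation, since the two valuation functions are literally equal. Moreover, a feasible allocation in the sense of \citeauthor{Li:21f} is exactly a partial allocation in which every bundle is independent in $H$.

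For the existence claim I would simply invoke \cref{thr:existence} together with \cref{obs:partial}: these guarantee a partial allocation $A = (A_1, \dots, A_n)$ with every $A_i$ independent in $H$ and $v_i(A_i) \ge \mu_i / 2$. Since each $A_i$ is independent it is feasible, so $A$ is a feasible $1/2$-approximate MMS allocation.

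For the polynomial-time claim I would apply \cref{thr:inaccurate-oracle} using an approximate valuation oracle built from a $1/2$-approximation algorithm for maximum-weight feasible scheduling. Given a bundle $B$ and agent $i$, such an oracle must return a feasible $B' \subseteq B$ with $v_i(B') \ge \tfrac{1}{2} v_i(B)$; this is precisely the single-machine weighted throughput problem with release times and deadlines, for which \citet{Bar-Noy:99} provide a polynomial-time $1/2$-approximation, giving an approximate valuation oracle with error bound $\epsilon = 1/2$ (equivalently $\beta = 1 - \epsilon = 1/2$). It then remains to check the two hypotheses of \cref{thr:inaccurate-oracle}. Hypothesis~(1) holds with $\delta = 2$, which is trivially polynomial in the instance size. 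For hypothesis~(2), since $\epsilon = 1/2 > 1/3$, I must argue that feasibility of a two-item set $\{j_1, j_2\}$ is decidable in polynomial time: this reduces to checking whether $j_1$ and $j_2$ can be scheduled without overlap inside their respective windows, a constant-size computation — it suffices to test the two possible orderings and, for each, whether the first job can start no earlier than its release time and the second can still complete by its deadline. Hence \cref{thr:inaccurate-oracle} yields in polynomial time a $\left(\frac{1 - 1/2}{1 + (3/2)(1 - 1/2)}\right)$-approximate MMS allocation, and $\frac{1/2}{1 + 3/4} = \frac{2}{7}$; by \cref{obs:partial} this allocation may be taken to be partial with every bundle independent, i.e.\ feasible.

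The only steps needing real care — everything else is a direct appeal to \cref{thr:existence}, \cref{thr:inaccurate-oracle}, and \cref{obs:partial} — are confirming that the cited scheduling algorithm actually returns a feasible subset (not merely its weight), so that it qualifies as an approximate valuation oracle in this paper's sense, and discharging the two-item feasibility check required by hypothesis~(2) of \cref{thr:inaccurate-oracle}.
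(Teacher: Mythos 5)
Your proposal is correct and follows essentially the same route as the paper: reduce to hereditary set system valuations, get existence from \cref{thr:existence} and \cref{obs:partial}, and get the $2/7$ bound from \cref{thr:inaccurate-oracle} with a Bar-Noy-based approximate valuation oracle of error $\epsilon = 1/2$, discharging the two-item independence check by trying both orderings. Your explicit verification of hypothesis~(1) with $\delta = 2$ and the remark that the oracle must return a feasible subset are minor additions the paper leaves implicit, but the argument is the same.
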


\begin{proof}
    Each instance has hereditary set system valuations, as for any set of
    feasible items $S$, any subset is also feasible. Thus, the existence follows
    directly from \cref{thr:existence,obs:partial}. The approximation
    result follows from \cref{thr:inaccurate-oracle,obs:partial}, as a
    approximate valuation oracle with error bound $1/2$ can be constructed from
    a result of \citet{Bar-Noy:99}. Note that as $\epsilon = 1/2 > 1/3$,
    \cref{thr:inaccurate-oracle} requires that the independence of bundles with
    cardinality two can be queried in polynomial time. This independence is
    trivial to determine in polynomial time, by checking the two different
    orderings of the items in the schedule.
\end{proof}

Another studied type of constraints for which our results provides improvements
to the state-of-the-art is budget constraints \cite{Wu:21}. Under budget
constraints, each item $j \in M$ has an attached \emph{size} (or \emph{cost}),
$s_j \in \mathbb{R}_{\ge 0}$. Each agent $i$, is endowed with a budget $b_i \in
\mathbb{R}_{\ge 0}$ and is only allowed to receive a bundle $B$ with $\sum_{j
\in B} s_j \le b_i$. The set of feasible bundles for each agent $i$ forms a
hereditary set system $H_i = (M, \mathcal{F}_i)$, as for any bundle $B$ within
the budget of agent $i$, any bundle $B' \subset B$ is also within $i$'s budget.
Moreover, for any pair of agents $i, j$ with $b_i \le b_j$, it holds that
$\mathcal{F}_i \subseteq \mathcal{F}_j$. Since there is no requirement on
complete allocations, the results from \cref{sec:entitlement} apply to budget
constraints. Specifically, we can improve the $1/3$ existence guarantee and $1/3
- \epsilon$ (for any $\epsilon > 0$) approximation guarantee of \citet{Li:23a}
to $1/2$ and $2/5$, respectively.

\begin{lemma}
    For an instance of fair allocation under budget constraints, a
    $1/2$-approximate MMS allocation always exists, and a $2/5$-approximate MMS
    allocation can be found in polynomial time.
\end{lemma}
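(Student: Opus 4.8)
The plan is to recognise budget constraints as a special case of entitled hereditary set system valuations and then invoke \cref{lem:existence-entitlement,lem:poly-entitlement} together with \cref{obs:partial}, exactly as in the conflicting items and interval scheduling lemmas above. Concretely, for each agent $i$ I would set $H_i = (M, \mathcal{F}_i)$ with $\mathcal{F}_i = \{B \subseteq M : \sum_{j \in B} s_j \le b_i\}$; this is hereditary because dropping items only decreases the total size, and the family $\{H_i\}_{i \in N}$ is nested, since ordering the agents by nondecreasing budget gives $\mathcal{F}_{a_1} \subseteq \dots \subseteq \mathcal{F}_{a_n}$. Let $v'_i$ be the hereditary set system valuation with item values $v_{ij}$ based on $H_i$. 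A bundle is feasible for agent $i$ exactly when it is independent in $H_i$, so $v'_i(B) = v_i(B)$ on feasible bundles, and therefore the MMS $\mu'_i$ of $i$ in the hereditary set system instance satisfies $\mu'_i \ge \mu_i$.

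For existence, \cref{lem:existence-entitlement} together with \cref{obs:partial} yields a partial allocation $A = (A_1, \dots, A_n)$ in which every $A_i$ is independent in $H_i$ and $v'_i(A_i) \ge \tfrac{n}{2n-1}\mu'_i \ge \tfrac12\mu'_i \ge \tfrac12\mu_i$. Since $A_i$ independent in $H_i$ means $A_i$ is within agent $i$'s budget, $A$ is a feasible (partial) allocation; as budget constraints impose no completeness requirement, $A$ is the desired $1/2$-approximate MMS allocation.

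For the polynomial-time claim, the only ingredient to supply on top of \cref{lem:poly-entitlement} is an approximate valuation oracle with error bound $1/(n+1)$. Evaluating $v_i(B)$ is exactly a knapsack instance (items $B$ with sizes $s_j$, profits $v_{ij}$, capacity $b_i$), and the classical knapsack FPTAS returns, in time polynomial in the input and in $1/\epsilon = n+1$, a feasible subset $B' \subseteq B$ with $\sum_{j \in B'} v_{ij} \ge (1 - \tfrac{1}{n+1})\, v_i(B)$; such a $B'$ is precisely an $H_i$-independent bundle of the required value, i.e., an approximate valuation oracle with error bound $1/(n+1)$. Applying \cref{lem:poly-entitlement,obs:partial} then produces in polynomial time a feasible partial allocation that is $2/5$-approximate MMS, which — again because completeness is not required — finishes the proof.

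The step I expect to need the most care is checking that the FPTAS output serves directly as the approximate valuation oracle: one must observe that it returns not merely an approximate value but an actual feasible (hence $H_i$-independent) subset $B'$, and that the target error parameter $1/(n+1)$ is polynomially bounded, so the FPTAS is genuinely polynomial-time here. Everything else is a routine translation of \cref{sec:entitlement} into the vocabulary of budget constraints, together with the (trivial) remark that agents can be ordered by budget so the nesting hypothesis of entitled hereditary set system valuations is met.
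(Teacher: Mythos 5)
Your proposal is correct and follows essentially the same route as the paper: model each budget as a hereditary set system $H_i$, note the nesting by budget gives entitled hereditary set system valuations, and invoke \cref{lem:existence-entitlement,lem:poly-entitlement,obs:partial}, with the knapsack FPTAS serving as the approximate valuation oracle with error bound $1/(n+1)$. The only cosmetic difference is that you state $\mu'_i \ge \mu_i$ where the paper notes equality, and either suffices for the argument.
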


\begin{proof}
    For each agent $i \in N$, let $H_i = (M, \{S \subseteq M : \sum_{j \in S}
    s_j \le b_i\})$ be a hereditary set system. Moreover, let $v'_i$ be the
    valuation function given by the item values of $v_i$ and the hereditary set
    system $H_i$. Consider the fair allocation instance given by $N$, $M$ and
    the valuation functions $v_i'$. This instance has entitled hereditary set
    system valuations, as for any pair of agents $i, j$ with $b_i \le b_j$,
    every feasible bundle for $i$ is also feasible for $j$.

    Let $\mu_i$ and $\mu_i'$ be the MMS of agent $i$ in the original instance
    and the instance with entitled hereditary set system valuations,
    respectively. Then, $\mu_i' = \mu_i$, as $v_i'(B) = v_i(B)$ for any feasible
    bundle $B$. Consequently, the existence result follows directly from
    \cref{lem:existence-entitlement,obs:partial}.

    Determining the exact value of $v_i'$ for any bundle $B$ is NP-hard, as it
    is equivalent to solving the knapsack problem. However, there exists a FPTAS
    for the knapsack problem, which yields an inaccurate valuation oracle with
    error bound by $\epsilon = 1/(n + 1)$ and running time polynomial in
    $1/\epsilon = n + 1$. Thus, the approximation result follows directly from
    \cref{lem:poly-entitlement,obs:partial}.
\end{proof}

\section{Conclusion}

The lone divider approach has allowed for improvements to both the existence
guarantee and polynomial-time computability of $\alpha$-approximate MMS
allocations under hereditary set system valuations, along with several
constrained fair allocation problems. However, several open problems remain for
hereditary set system valuations. First and foremost, for any $n > 2$, there
remains a gap between the lower and upper bound for guaranteed existence of
$\alpha$-approximate MMS allocations. Second, there is a gap between the $1/2$
existence guarantee and the $2/5$ guarantee of the approximation algorithm.

For the second problem, we note that it is possible to generalise the two-phase
approach of \cref{alg:2/5-poly-partition} to work for any $\alpha \in (2/5,
1/2)$. Specifically, for some $k \ge 3$, the three following changes allow the
algorithm to construct bundles with $v_i(B_j) \ge \alpha\mu_i$ for $\alpha =
k/(2k + 1)$:
\begin{enumerate}[label=(\arabic*)]
    \item Lower the requirement for an item to be high-valued from $\mu_i/5$ to
        $\mu_i/(2k + 1)$.
    \item In phase one, consider each subset $S \subseteq M_H$ with $|S| < k$
        and $v_i(S) \le \alpha\mu_i$, instead of only individual items $j \in
        M_H$, when greedily constructing bundles.
    \item In phase two, find a maximum-cardinality collection of pairwise disjoint
        independent bundles $S \subseteq M_H$ with $|S| \le k$ and $v_i(S) \ge
        \alpha\mu_i$.
\end{enumerate}
The correctness of the modifications follows from the same argument as for
$\alpha = 2/5$ ($k = 2$). In the first phase, the value of a bundle will not
exceed $\alpha\mu_i + \mu_i/(2k + 1) = (1 - \alpha)\mu_i$. Thus, when the second
phase starts, there are still enough bundles in the MMS partition of $i$ with a
remaining value of at least $\alpha\mu_i$ to guarantee a sufficiently high
cardinality for the collection of pairwise disjoint bundles constructed in the
phase.

Notice that for some fixed $k$, the running time of phase one is still
polynomial. However, finding a maximum-cardinality collection in phase two is
hard, due to its close similarity to the two NP-hard problems of hypergraph
matching and set packing. Note that, using algorithms for set packing, it can be
shown that for any fixed $k$, the modified algorithm has a running time that is
FPT parameterised by $n$.

\bibliographystyle{plainnat}
\bibliography{paper}

\appendix

\section{Proof of Theorem~\ref{thr:inaccurate-oracle}}
\label{app:inaccurate-oracle}

We prove \cref{thr:inaccurate-oracle} using a similar approach as for
\cref{thr:2/5-poly}. Specifically, consider a variant of
\cref{alg:2/5-poly-partition} (see \cref{alg:alpha-poly-partition}) in which
every occurrence of the factor $2/5$ has been replaced by $\alpha = (1 -
\epsilon)/(1 + (3/2)(1 - \epsilon))$, where $\epsilon$ is the error bound of the
inaccurate valuation oracle. We wish to show that this algorithm constructs a
sufficient number of satisfactory disjoint independent bundles under similar
conditions as in \cref{lem:2/5-partition-bounds}. Then, an equivalent approach
to \cref{alg:2/5-approximation} can be taken to obtain the result. The proof of
\cref{lem:alpha-poly-partition} follows the same steps as the proof of
\cref{lem:2/5-partition-bounds}.

\begin{lemma}\label{lem:alpha-poly-partition}
    Consider an instance given by a set $N$ of $n \ge 2$ agents, a set $M$ of
    items and hereditary set system valuations $V$. Let $\mu^*_i > 0$ be an
    estimate of the MMS of agent $i \in N$, with
    \[\mu^*_i \le (1 + \frac{3 - 3\epsilon}{5n - 3n\epsilon + 3\epsilon +
    3})\mu_i\eqcomma\]
    $v^o_i$ an approximate valuation oracle for $v_i$, with error bound by
    $0 \le \epsilon < 1$, and \[\alpha = \frac{1 - \epsilon}{1 + \frac{3}{2}(1 -
    \epsilon)}\eqdot\]
    Given $0 \le \ell < n$ pairwise disjoint bundles $C_1, \dots,
    C_\ell$ with $\sum_{j' \in C_j} v_{ij'} \le (3/2)\alpha\mu^*_i$,
    \cref{alg:alpha-poly-partition} will return $k = n - \ell$ pairwise disjoint
    independent bundles $B_1, \dots, B_k$ with $v_i(B_j) \ge \alpha\mu^*_i$,
    when ran for $v^o_i$, $M' = M \setminus (\bigcup_{1 \le j \le \ell}C_j)$,
    $k$, $\alpha$ and $\mu^*_i$ if $\epsilon \le 1/3$ or $v_i(v^o_i(B)) =
    v_i(B)$ for bundles $B$ with $|B| = 2$.
\end{lemma}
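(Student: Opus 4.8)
The plan is to replay the proof of \cref{lem:2/5-partition-bounds} almost verbatim, under the substitutions $2/5 \mapsto \alpha$, $1/5 \mapsto \alpha/2$ (so an item is high-valued when $v_i(\{j\}) > \tfrac12\alpha\mu_i^*$), and $1 - \tfrac{1}{n+1} \mapsto 1-\epsilon$ wherever the oracle error enters. Writing $\beta = 1-\epsilon$, one has $\alpha = \beta/(1 + \tfrac32\beta)$, and $\tfrac32\alpha\mu_i^*$ is simultaneously the hypothesised bound on each $\sum_{j' \in C_j} v_{ij'}$ and the bound that the greedy pruning ending phase one forces on every non-singleton phase-one bundle. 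As before, the items of a created bundle are removed from $M'$ on creation, so the returned bundles are pairwise disjoint, and it is immediate that each has value at least $\alpha\mu_i^*$; the content is (i) independence of every created bundle and (ii) that at least $k$ bundles are created.

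For (i): every phase-one bundle is a singleton of nonzero value or a subset of an independent bundle returned by $v_i^o$. Since every $j$ with $v_i(\{j\}) \ge \alpha\mu_i^*$ is removed by the first for-loop, every surviving item of $M_H$ has $\tfrac12\alpha\mu_i^* < v_i(\{j\}) < \alpha\mu_i^*$ in phase two, so $v_i(v_i^o(\{j_1,j_2\})) \ge \alpha\mu_i^*$ forces $v_i^o(\{j_1,j_2\}) = \{j_1,j_2\}$ and hence $\{j_1,j_2\}$ independent.

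For (ii): let $r$ be the number of phase-one bundles and $r_1 \le r$ the number that are singletons; assume $r < k$, so $\ell + r - r_1 \le n-1$. Fix an MMS partition $P = (P_1,\dots,P_n)$ of $i$ and put $P'_j = P_j \cap M''$ with $M'' = M_H \cup M_L$ at the start of phase two. The crux is to show at least $k-r$ of the $P'_j$ satisfy $v_i(P'_j) \ge \alpha\mu_i^*/\beta$: the $r_1$ singletons meet at most $r_1$ blocks of $P$ (pessimistically killing their value), while the other $\ell + r - r_1$ removed bundles each have item-sum at most $\tfrac32\alpha\mu_i^*$, so by the accounting in the proof of \cref{thr:existence} (each item contributes at most its own value) they push at most $\ell + r - r_1$ further blocks below $\alpha\mu_i^*/\beta$, provided
\[
    (\ell + r - r_1 + 1)\left(\mu_i - \frac{\alpha\mu_i^*}{\beta}\right) \ge (\ell + r - r_1)\cdot\frac{3}{2}\alpha\mu_i^* \eqdot
\]
Dividing by $\ell + r - r_1 + 1$, using $\tfrac{\ell+r-r_1}{\ell+r-r_1+1} \le \tfrac{n-1}{n}$, and plugging in the worst case $\mu_i^* = (1+\gamma)\mu_i$ with $\gamma = \tfrac{3\beta}{5n-3n\epsilon+3\epsilon+3}$, this reduces (via $\alpha/\beta = \tfrac{2}{2+3\beta}$) to $\gamma \le \tfrac{3\beta}{2n+3(n-1)\beta}$, which holds since $5n-3n\epsilon+3\epsilon+3 = 2n+3(n-1)\beta+6 > 2n+3(n-1)\beta$ — indeed with slack, the $\gamma$ here being looser than the one tuned for $\alpha=2/5$ in \cref{lem:2/5-partition-bounds}. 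For each of the $\ge k-r$ surviving blocks, the optimal independent subset $S$ of $P'_j$ must use at least two items of $M_H$: otherwise $S \subseteq \{j\}\cup M_L$ for some $j \in M_H$ and $v_i(v_i^o(\{j\}\cup M_L)) \ge \beta\, v_i(S) = \beta\, v_i(P'_j) \ge \alpha\mu_i^*$, contradicting that phase one has ended; hence $|P'_j \cap M_H| \ge 2$ and $v_i(P'_j \cap M_H) \ge \sum_{j' \in S \cap M_H} v_{ij'} > 2\cdot\tfrac12\alpha\mu_i^* = \alpha\mu_i^*$.

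Finally, as in \cref{lem:2/5-partition-bounds}, whenever $x$ blocks have $|P'_j \cap M_H| \ge 2$ and $v_i(P'_j \cap M_H) \ge \alpha\mu_i^*$, phase two creates at least $x$ bundles: the optimal independent subset of such a $P'_j \cap M_H$ has at least two items (each worth $> \tfrac12\alpha\mu_i^*$, summing to $\ge \alpha\mu_i^*$), and any two of them, $j_1$ and $j_2$, form an independent pair with $v_i(\{j_1,j_2\}) = v_{ij_1}+v_{ij_2} > \alpha\mu_i^*$; moreover $v_i^o(\{j_1,j_2\}) = \{j_1,j_2\}$, because if $\epsilon \le 1/3$ the singleton values lie in $(\tfrac12\alpha\mu_i^*,\alpha\mu_i^*)$ with ratio $<2 \le (1-\epsilon)/\epsilon$, so no proper subset meets the oracle's $(1-\epsilon)$-guarantee, while otherwise the hypothesis that $v_i(v_i^o(B)) = v_i(B)$ for $|B|=2$ gives it directly. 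These pairs lie in distinct blocks of $P$, hence are disjoint and form a size-$x$ matching in the phase-two graph $G$, so $x$ bundles are created; taking $x = k-r$ gives at least $k$ bundles in all. I expect the only genuine work to be the bookkeeping — tracking $\alpha$, $\tfrac12\alpha$, $\beta$, $\gamma$ through the counting inequality — and pinning down that the hypothesis on $\epsilon$ (or exactness on pairs) is needed only in this last step, to force $v_i^o$ to return a full independent pair.
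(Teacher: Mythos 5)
Your proposal is correct and follows essentially the same route as the paper's proof, which itself replays \cref{lem:2/5-partition-bounds} with $2/5 \mapsto \alpha$, $1/5 \mapsto \alpha/2$ and $1 - 1/(n+1) \mapsto 1 - \epsilon$: the disjointness/independence arguments, the counting of spoiled MMS-partition blocks against the $(3/2)\alpha\mu_i^*$ item-sum bound, and the use of the $\epsilon \le 1/3$ (or exactness-on-pairs) hypothesis only to force $v_i^o(\{j_1,j_2\}) = \{j_1,j_2\}$ in phase two all match. Your algebraic reduction of the counting inequality to $\gamma \le 3\beta/(2n + 3(n-1)\beta)$, verified with slack $6$ in the denominator, is just a repackaging of the paper's chain ending in $18 \ge 18\epsilon$.
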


\begin{proof}
    First, note that every bundle created in the algorithm is independent and
    has a value of at least $\alpha\mu_i^*$ by the same argument as in
    \cref{lem:2/5-partition-bounds}. Moreover, the bundles are pairwise disjoint
    as their items are removed from consideration before proceeding with the
    construction of the next bundle in phase one and the matching upholds this
    guarantee in phase two.

    Let $r$ be the number of bundles created in phase one. We wish to show that
    the number of bundles created in phase two is at least $k - r$, whenever $r
    < k$. Consider an MMS partition, $P = (P_1, \dots, P_n)$, for agent $i$ and
    let $M'' = M_H \cup M_L$ be the remaining items at the start of phase two.
    Let $P' = (P_1',\dots,P_n')$ be such that $P_j' = P_j \cap M''$. We claim
    that there are at least $k - r = n - \ell - r$ bundles $P_j' \in P'$ with
    \[
        v_i(P'_j) \ge \alpha\frac{\mu_i^*}{1 - \epsilon}\eqcomma
    \]
    $|P'_j \cap M_H| \ge 2$ and $v_i(P'_j \cap M_H) \ge \alpha\mu_i^*$. Note
    that the last two conditions hold as long as the first condition holds, as
    otherwise there is at most one item $j \in (M_H \cap P'_j)$ that contributes
    to the value of $P'_j$ and $v_i(v_i^o(\{j\} \cup M_L)) \ge (1 -
    \epsilon)v_i(P'_j) \ge \alpha\mu_i^*$. In other words, phase one could not
    yet have completed.

    To see that there exists at least $n - \ell - r$ bundles $P'_j \in P'$ with
    sufficient value, we wish to show that after removing from $P$ the items in
    $M \setminus M''$, at most $\ell + r$ of the bundles no longer have
    sufficient value. In other words, it must be shown that items with a
    combined value of
    \[
        \mu_i - \alpha\frac{\mu_i^*}{1 - \epsilon} \ge \mu_i - \alpha\frac{1 +
        \frac{3 - 3\epsilon}{5n - 3n\epsilon + 3\epsilon + 3}}{1 -
        \epsilon}\mu_i
    \]
    have been removed from no more than $\ell + r$ of the bundles in $P$.

    First, consider the singleton bundles created in phase one, letting $r_1$
    denote the number of such bundles. As in the proof of
    \cref{lem:2/5-partition-bounds}, there are at least $n - r_1$ bundles in $P$
    that do not decrease in value after removing only the items in the singleton
    bundles.

    Every other bundle created in phase one has a value of strictly less than
    $(3/2)\alpha\mu_i^*$. By the same argument as in the proof of
    \cref{lem:2/5-partition-bounds}, there is otherwise some low-valued item in
    the bundle that may be removed without reducing the value of the bundle
    beyond $\alpha\mu_i^*$. Combined with the bound of $(3/2)\alpha\mu_i^*$ on
    the combined value of items in each $C_j$, items with a combined value of at
    most $(\ell + r - r_1)(3/2)\alpha\mu_i^*$ have been removed from the $n -
    r_1$ bundles in $P$ not affected by the singleton bundles. Thus, we wish to
    show that
    \begin{align*}
        (\ell + r - r_1 + 1)\left(\mu_i - \alpha\frac{\left(1 +
        \frac{3 - 3\epsilon}{5n-3n\epsilon + 3\epsilon + 3}\right)\mu_i}{1 - \epsilon}\right) &\ge (l + r -
        r_1)\frac{3}{2}\alpha\left(1 + \frac{3 - 3\epsilon}{5n-3n\epsilon + 3\epsilon + 3}\right)\mu_i \\
        (\ell + r - r_1 + 1)\left(1 - \alpha\frac{5n - 3n\epsilon +
        6}{(5n-3n\epsilon + 3\epsilon + 3)(1 - \epsilon)}\right) &\ge (l + r -
        r_1)\frac{3}{2}\alpha\left(\frac{5n - 3n\epsilon + 6}{5n-3n\epsilon + 3\epsilon + 3}\right) \\
        2 - 2\alpha\frac{5n - 3n\epsilon + 6}{(5n-3n\epsilon + 3\epsilon + 3)(1 -
        \epsilon)} &\ge 3\alpha\left(\frac{\ell + r - r_1}{\ell + r - r_1 +
        1}\right)\left(\frac{5n - 3n\epsilon + 6}{5n-3n\epsilon + 3\epsilon +
        3}\right)
    \end{align*}
    Note that $\ell + r - r_1 \le n - 1$, as $r < k$. Thus, we need only show
    that
    \begin{align*}
        2 - 2\alpha\frac{5n - 3n\epsilon + 6}{(5n-3n\epsilon + 3\epsilon + 3)(1 -
        \epsilon)} &\ge 3\alpha\left(\frac{n - 1}{n}\right)\left(\frac{5n -
        3n\epsilon + 6}{5n-3n\epsilon + 3\epsilon +
        3}\right) \\
        2n(5n - 3n\epsilon + 3 + 3\epsilon) - 2n\alpha\frac{5n - 3n\epsilon +
        6}{1 - \epsilon} &\ge 3\alpha(n - 1)(5n - 3n\epsilon + 6)
        \\
        2n(5n - 3n\epsilon + 3 + 3\epsilon) - 2n\left(\frac{5n - 3n\epsilon +
        6}{1 + \frac{3}{2}(1 - \epsilon)}\right) &\ge 3\left(\frac{1 -
        \epsilon}{1 + \frac{3}{2}(1 - \epsilon)}\right)(n - 1)(5n - 3n\epsilon +
        6)
        \\
        2n(1 + \frac{3}{2}(1 - \epsilon))(5n - 3n\epsilon + 3 + 3\epsilon) -
        2n(5n - 3n\epsilon + 6) &\ge 3(1 - \epsilon)(n - 1)(5n - 3n\epsilon + 6)
        \\
        (2n + 3n(1 - \epsilon))(5n - 3n\epsilon + 3 + 3\epsilon) -
        2n(5n - 3n\epsilon + 6) &\ge 3(1 - \epsilon)(n - 1)(5n - 3n\epsilon + 6)
        \\
        \begin{aligned}
            (2n + 3n(1 - \epsilon))((5n - &3n\epsilon + 6) + (3\epsilon - 3)) \\&-
        2n(5n - 3n\epsilon + 6)
        \end{aligned}&\ge
        (3n - 3)(1 - \epsilon)(5n - 3n\epsilon + 6)\\
        (2n + 3n(1 - \epsilon))(3\epsilon - 3)&\ge
        -3(1 - \epsilon)(5n - 3n\epsilon + 6)\\
        24n\epsilon - 15n - 9n\epsilon^2 &\ge
        24n\epsilon - 15n - 9n\epsilon^2 + 18\epsilon - 18 \\
        18 &\ge 18\epsilon
    \end{align*}
    Where the final equation holds as $\epsilon < 1$. As a consequence,
    there are at least $k - r$ bundles that satisfy the requirements at the
    start of phase two.

    As in the proof of \cref{lem:2/5-partition-bounds}, we claim that if there
    are $x \ge 0$ bundles $P'_j \in P'$ with $|P'_j \cap M_H| \ge 2$ and $v_i(P'_j
    \cap M_H) \ge \alpha\mu_i^*$, at least $x$ bundles will be created in phase
    two. Whenever $\epsilon \le 1/3$, it holds for any pair of items $j_1,j_2
    \in M_H$ with $v_i(\{j_1, j_2\}) \ge \alpha\mu_i^*$ that $v_i(\{j_1\}) >
    (1/3)v_i(\{j_1, j_2\})$ and $v_i(\{j_2\}) > (1/3)v_i(\{j_1, j_2\})$, as
    $(\alpha/2)\mu_i^* < v_i(\{j\}) < \alpha\mu_i^*$. Thus, as
    $v_i(v_o(B)) \ge (2/3)v_i(B)$, we have that $(v_o(\{j_1, j_2\}) = \{j_1,
    j_2\}$ whenever $v_i(\{j_1, j_2\}) \ge \alpha\mu_i^*$. Note that if
    $\epsilon > 1/3$, this is guaranteed by the assumption that $v_i(v_o(B)) =
    v_i(B)$ for any bundle $B$ with $|B| = 2$. Consequently, there is in either
    case at least one pair $j_1, j_2 \in P_j' \cap M_H$ for each $P_j'$ that
    satisfies the two criteria. Since $P$ is a partition, there must exist at
    least $x$ pairwise disjoint pairs of items that are each internally
    connected by an edge in the graph $G$, and at least $x$ bundles are created
    in phase two.

    Unless $r \ge k$, in which case enough bundles are created in phase one,
    there are at least $k - r$ bundles $P_j' \in P'$ with $|P'_j \cap M_H| \ge
    2$ and $v_i(P'_j \cap M_H) \ge \alpha\mu_i^*$. As a consequence, at least
    $k$ bundles are created across the two phases.
\end{proof}

\begin{algorithm}[t]
    \caption{Find $k$ disjoint independent bundles with value at least
    $\alpha\mu_i^*$ for valuation function $v_i$}
    \label{alg:alpha-poly-partition}
    \begin{algorithmic}[1]
        \REQUIRE An approximate valuation oracle $v^o_i$, items $M'$, a number
        $k$, an $\alpha$ and an estimate $\mu_i^*$
        \ENSURE $k$ pairwise disjoint independent bundles $B_1, \dots, B_k$ with
        $v_i(B_j) \ge \alpha\mu_i^*$ or \textsc{Failure}
        \STATE $M_H = \{j \in M' : v_i(\{j\}) > \frac{1}{2}\alpha\mu_i^*\}$
        \STATE $M_L = M' \setminus M_H$
        \COMMENT{Phase one}
        \FOR{$j \in M_H$ with $v_i(\{j\}) \ge \alpha\mu_i^*$}
            \STATE create bundle $\{j\}$
            \STATE $M_H = M_H \setminus \{j\}$
        \ENDFOR
        \WHILE{$\exists j \in M_H$ with $v_i(v^o_i(\{j\} \cup M_L)) \ge
        \alpha\mu_i^*$}
            \STATE $B = v^o_i(\{j\} \cup M_L)$
            \WHILE{$\exists j' \in B$ with $v_i(B \setminus \{j'\}) \ge
            \alpha\mu_i^*$}
                \STATE $B = B \setminus \{j'\}$
            \ENDWHILE
            \STATE create bundle $B$
            \STATE $M_H = M_H \setminus B$
            \STATE $M_L = M_L \setminus B$
        \ENDWHILE
        \COMMENT{Phase two}
        \STATE $G = (M_H, \{\{j_1, j_2\} \in M_H \times M_H : v_i(v^o_i(\{j_1,
        j_2\})) \ge \alpha\mu_i^*\})$
        \STATE Find a maximum-cardinality matching $M_G$ in $G$
        \FOR{$\{j_1, j_2\} \in M_G$}
            \STATE create bundle $\{j_1, j_2\}$
        \ENDFOR
        \IF{less than $k$ bundles have been created}
            \RETURN \textsc{Failure}
        \ELSE
            \RETURN $k$ created bundles
        \ENDIF
    \end{algorithmic}
\end{algorithm}

Notice that the running time of the bundle creation algorithm is not affected by
the change from $2/5$ to $\alpha$. Thus, it remains to show that the
readjustment method used in \cref{alg:2/5-approximation} can be used also here
with a polynomial bound on the number of readjustments.

\begin{proof}[Proof of \cref{thr:inaccurate-oracle}]
    Consider a variant of \cref{alg:2/5-approximation} in which the
    multiplicative factor for $\mu_i^*$ on
    line~\ref{alg:line:multiplicative-adjustment} has been exchanged for
    \[\frac{1}{1 + \frac{3 - 3\epsilon}{5n - 3n\epsilon + 3\epsilon +
    3}}\]
    Note that when $n \ge 2$,
    \[0 < \frac{3 - 3\epsilon}{5n + 3} < \frac{3 - 3\epsilon}{5n - 3n\epsilon +
    3\epsilon + 3} = \frac{3(1 - \epsilon)}{5n - 3(n - 1)\epsilon + 3} <
    \frac{3}{2n + 3} \le \frac{3}{7}\]
    Thus, letting $x = \frac{3 - 3\epsilon}{5n - 3n\epsilon + 3\epsilon + 3}$,
    we get that for $r \ge 1$
    \[\frac{x^r}{r} - \frac{x^{r + 1}}{r + 1} > \frac{x^r - xx^r}{r} >
    \frac{4}{7}\cdot\frac{x^r}{r} > 0 \]
    Consequently, it holds that
    \[
        \ln (1 + x) = \sum_{r = 1}^\infty (-1)^{r + 1}x^r/r
        > x - \frac{x^2}{2}
        > \frac{4}{7}x
        \ge \frac{4}{7}\cdot\frac{3 - 3\epsilon}{5n - 3n\epsilon + 3\epsilon + 3}
        > \frac{4}{7}\cdot\frac{3}{\delta(5n + 3)}
        > \frac{1}{\delta(5n + 3)}\eqcomma
    \]
    where the first step is due to the Maclaurin series $\ln(1 + x) = \sum_{r =
    1}^\infty (-1)^{r + 1}x^r/r$, and the second to last step from the
    assumption that $\epsilon \le 1 - 1/\delta$ for some $\delta$ that is
    polynomial in the size of the instance.

    By this observation, the correctness of the theorem follows directly from
    \cref{lem:alpha-poly-partition,lem:2/5-poly-partition} using the exact same
    proof as for \cref{thr:2/5-poly}. Particularly, the maximum number of
    adjustments to $\mu_i^*$ per agent $i$ is polynomial in the size of the
    input, as both $\delta$, $n$ and $m$ are polynomial in the size of the
    input, and we have that
    \[
        \log_{1 + \frac{3 - 3\epsilon}{5n - 3n\epsilon + 3\epsilon + 3}} m
        = \frac{\ln m}{\ln \left(1 + \frac{3 - 3\epsilon}{5n - 3n\epsilon +
        3\epsilon + 3}\right)}
        < \frac{\ln m}{\frac{1}{\delta(5n + 3)}}
        < \delta(5n + 3)\ln m\qedhere
    \]
\end{proof}

\end{document}